\renewcommand*{\backref}[1]{}
\renewcommand*{\backrefalt}[4]{%
\ifcase #1%
\or
$\rightarrow$~Page~#2.%
\else
$\rightarrow$~Pages~#2.%
\fi
}
\title{On the Robustness of Winners:\\ Counting Briberies in Elections}
\author{Niclas Boehmer\\
  TU Berlin, Germany\\ 
  \small{\texttt{niclas.boehmer@tu-berlin.de}}
\and
  Robert Bredereck\\
  Humboldt-Universit\"at zu Berlin, Germany\\
  \small{\texttt{robert.bredereck@hu-berlin.de}}
\and
  Piotr Faliszewski\\
  AGH University, Poland\\
  \small{\texttt{faliszew@agh.edu.pl}}
\and
  Rolf Niedermeier\\
  TU Berlin, Germany\\
  \small{\texttt{rolf.niedermeier@tu-berlin.de}}
}
\pgfplotsset{compat=newest}
\newcommand{\todoi}[1]{}
\newcommand{\increase}{{{\mathrm{inc}}}}
\newtheorem{theorem}{Theorem}
\newtheorem{corollary}{Corollary}
\newtheorem{lemma}[theorem]{Lemma}
\newtheorem{conclusion}{Finding}
\newcommand{\removememaybe}[1]{}
\newcommand{\calR}{\mathcal{R}}
\newcommand{\np}{{\mathrm{NP}}}
\newcommand{\fpt}{{\mathrm{FPT}}}
\newcommand{\xp}{{\mathrm{XP}}}
\newcommand{\wone}{{\mathsf{W[1]}}}
\newcommand{\p}{{\mathrm{P}}}
\newcommand{\swap}{{\mathrm{sw}}}
\DeclareMathOperator{\score}{score}
\newcommand{\pref}{\ensuremath{\succ}}
\newcommand{\Winner}{\textsc{Winner}}
\newcommand{\bordaOWAwinner}[1][]{%
	$\beta$%
	\ifx&#1&%
	\else
	(#1)%
	\fi%
	-\Winner%
}%
\providecommand{\np}{\ensuremath{\mathrm{NP}}}
\providecommand{\p}{\ensuremath{\mathrm{P}}}
\providecommand{\fpt}{\ensuremath{\mathrm{FPT}}}                                
\providecommand{\xp}{\ensuremath{\mathrm{XP}}}
\providecommand{\wone}{\ensuremath{\mathrm{\w[1]}}}
\providecommand{\sharpp}{\ensuremath{\#\p}}
\providecommand{\sharpwone}{\ensuremath{\#\wone}}
\providecommand{\calR}{{\mathcal{R}}}
\providecommand{\score}{{\mathrm{score}}}
\newcommand{\scoreof}[1]{\ensuremath{s(#1)}}
\newcommand{\finScP}{\ensuremath{s^*}}
\newcommand{\valSh}[2]{\ensuremath{\nu(#1,#2)}}
\newcommand{\topSh}[2]{\ensuremath{\tau(#1,#2)}}
\newcommand{\vgcSwapPlu}{\ensuremath{\text{vgc}^{\text{Swap}}_{\text{Plurality}}}}
\newcommand{\vgcShifPlu}{\ensuremath{\text{vgc}^{\text{Shift}+}_{\text{Plurality}}}}
\newcommand{\vgcDShifPlu}{\ensuremath{\text{vgc}^{\text{Shift}-}_{\text{Plurality}}}}
\newcommand{\first}{}    
\def\first/{red}
\newcommand{\second}{}    
\def\second/{blue}
\newcommand{\third}{}    
\def\third/{black}
\newcommand{\fourth}{}    
\def\fourth/{green}
\begin{document}
	
	\maketitle
	
	\begin{abstract}
		We study the parameterized complexity of counting variants of
		\textsc{Swap-} and \textsc{Shift-Bribery} problems, focusing on the
		parameterizations by the number of swaps and the number of
		voters. We show experimentally that \textsc{Swap-Bribery}
		offers a new approach to the robustness analysis of elections.
	\end{abstract}
	
	\section{Introduction}\label{sec:intro}
	
	Consider a university department
	which is about to hire a new professor. There are~$m$
	candidates 
	and the head of the department 
	decided
	to choose the 
	winner by Borda voting.
	Each faculty member (i.e., each voter)
	ranked the candidates from the most to the least appealing one, each
	candidate received $m-i$ points for each vote where he or she was
	ranked as the $i$-th best, and 
	the candidate with the highest score was selected.
	However, after the results were announced, some voters started
	wondering if, perhaps, some \emph{other} voters accidently ``misranked''
	some of the candidates
	(worrying about 
	mistakes in the votes is an old
	democratic tradition).
	For instance, 
	if some voter viewed two candidates as very similar, then he or she
	could have ranked them either way, depending on an
	impulse. 
	Or, some voter would have ranked two candidates differently if he or
	she had more information on their merits
	(this is particularly likely for low-ranked
	candidates\footnote{If some voters rank these candidates highly, then
		even a single point may mean the difference between winning the
		election or not.}).
	It is, thus, natural to ask for the
	probability of changing 
	the election outcome by making
	some random swaps.
	Indeed, this approach was recently pursued by
	\citet{bau-hog:c:counting-swap-bribery-evaluation} and we follow-up on
	it, but with a somewhat different focus (we will discuss this difference
	together with other related work).
	
	Specifically, 
	for each $r \in \mathbb{N}$ and each candidate~$c$, we let~$P_c(r)$ be
	the probability that~$c$ wins an election
	obtained 
	by making~$r$ random swaps of candidates ranked on adjacent
	positions in the votes (we refer to such elections as being at swap
	distance $r$ from the original one). Such values can be quite
	useful. 
	For example, if for each~$r$ we had (some estimate of) the
	probability that 
	in total there are~$r$ accidental swaps in the
	votes,
	then we could compute the probability of each
	candidate's victory. If it were small for the original
	winner, then we might want to recount the votes or reexamine the
	election process.  The values~$P_c(r)$ are also useful without the
	distribution of~$r$'s.  For example, we may want to find the smallest
	number of swaps for which the probability of the original winner's
	victory drops below some value (such as 50\%) or for which he or she
	is no longer the most probable winner.  As we show in our experiments,
	this approach 
	provides new insights on the robustness of election results.
	
	To determine the value~$P_c(r)$, we need to divide the number of
	elections at swap distance $r$ where $c$ wins, by the total number of
	elections at this distance. While computing the latter is easy---at
	least in the sense that there is a polynomial-time algorithm for this
	task---computing the former requires solving the counting variant of
	the \textsc{Swap-Bribery} problem (denoted \textsc{\#Swap-Bribery}).
	Briefly put, in the decision variant of the
	problem, 
	we ask if
	it is possible to ensure that a designated candidate
	wins a given election by making 
	$r$ swaps of adjacent candidates in the votes (we assume the unit
	prices setting; see Section~\ref{sec:prelim}). In the counting
	variant, we ask how many ways there are to achieve this effect (using
	exactly~$r$ swaps).  Unfortunately, already the decision variant
	is $\np$-hard for many 
	voting rules,
	and we show that 
	the counting one 
	is 
	hard even
	for Plurality.  On the positive side, we can get a good estimate of
	$P_c(r)$ by 
	sampling.
	
	We also consider the \textsc{Shift-Bribery} problem,
	a variant of \textsc{Swap-Bribery}
	where 
	we can only shift the designated candidate forward (in the
	constructive case) or backward (in the destructive one, where the goal
	is to ensure that the designated candidate loses).  These
	problems 
	also can be used to evaluate robust\-ness of election results but, to
	maintain focus, in our experiments we only consider
	\textsc{Swap-Bribery}.  Yet, we include \textsc{Shift-Bribery} in
	our 
	complexity analysis because it 
	illustrates some
	interesting phenomena.
	\bigskip
	
        \subsection{Main Contributions}
	We focus on \textsc{\#Swap-} and \textsc{\#Shift-Bribery} for the
	Plurality and Borda voting rules (for unit prices). 
	We consider their computational complexity for parameterizations by
	the number of unit swaps/shifts (which we refer to as the swap/shift
	radius) and by the number of voters (see Table~\ref{tab:results}).
	We also present experiments, where we use \textsc{\#Swap-Bribery} to
	evaluate the robustness of election results.  Our main results are as
	follows:
	\begin{enumerate}
		\item For Plurality, \textsc{Swap-Bribery} is known to be in $\p$, but
		we show that the counting variant is $\sharpp$-hard, and even
		$\sharpwone$-hard for the parameterization by the swap radius.
		
		\item For Borda, hardness results for \textsc{\#Swap-Bribery} follow 
		from those
		for \textsc{\#Shift-Bribery}, which themselves are intriguing: 
		E.g.,
		the destructive variant parameterized by the shift
		radius
		is
		$\sharpwone$-hard, but the constructive one is in $\fpt$; yet,
		in the decision setting the former is easier. 
		
		\hyphenation{Szu-fa}
		
		\item Using 
		sampling, 
		we estimate the candidate's winning probabilities in elections from
		a dataset generated by
		\citet{szu-fal-sko-sli-tal:c:map-of-elections}.  One of the
		high-level conclusions 
		is that the score differences between the election winners and the
		runners-up can be quite disconnected from their strengths (measured
		using \textsc{\#Swap-Bribery}).
	\end{enumerate}
	Some  proofs and analyses are available in the appendix.

	\begin{table}
		\centering
		\setlength{\tabcolsep}{5pt}
		\begin{tabular}{r|cc|cc}
			\toprule
			& \multicolumn{2}{c|}{Plurality}  & \multicolumn{2}{|c}{Borda} \\
			& decision & counting & decision & counting \\
			\midrule
			\textsc{} &      & $\sharpp$-hard              & 
			$\np$-hard      & $\sharpp$-hard      \\
			\textsc{Swap-Bribery}      & $\p$ & $\sharpwone$-hard$(r)$ & 
			$\fpt(r)$       & ?    \\
			&      & $\fpt(n)$            & $\wone$-hard$(n)$ & 
			$\sharpwone$-hard$(n)$\\[1pt]
			\midrule
			\textsc{Constructive}        &       &          & $\np$-hard         & 
			$\sharpp$-hard \\
			\textsc{Shift-Bribery}        & $\p$  &  $\p$    & $\fpt(r)$          & 
			$\fpt(r)$  \\
			\textsc{}       &       &          & $\wone$-hard$(n)$  & 
			$\sharpwone$-hard$(n)$ \\[1pt]
			\cdashline{1-5}
			\textsc{Destructive}         & \rule{0cm}{0.3cm}      &          &       
			& $\sharpp$-hard \\
			\textsc{Shift-Bribery}        & $\p$  &  $\p$    &  $\p$ & 
			$\sharpwone$-hard$(r)$ \\
			\textsc{}       &       &          &       & 
			$\sharpwone$-hard$(n)$ \\
			\bottomrule
		\end{tabular}
		\caption{\label{tab:results}(Parameterized) complexity
                  of \textsc{Swap-} and \textsc{Shift-Bribery} with
                  unit prices; $r$~and~$n$ refer to the
                  parameterizations by the swap/shift radius and by
                  the number of voters, respectively.  Results for the
                  counting variants are due to this paper (however,
                  see also the work of
                  \citet{bau-hog:c:counting-swap-bribery-evaluation}
                  for results related to $\sharpp$-hardness of
                  \textsc{Swap-Bribery}); the other ones are due to
                  \citet{elk-fal-sli:c:swap-bribery},
                  \citet{bre-che-fal-nic-nie:j:prices-matter,bre-fal-nie-tal:c:shift-bribery-committees},
                  and \citet{fal-kac:j:destructive-shift-bribery}.}
	\end{table}
        
        \subsection{Related Work}
	Our work is most closely related to the papers of
	\citet{haz-aum-kra-woo:j:uncertain-election-outcomes},
	\citet{bac-bet-fal:c:counting-pos-win}, and
	\citet{bau-hog:c:counting-swap-bribery-evaluation}.  Like us, their
	authors study the complexity of computing the probability that a given
	candidate wins, provided that the votes may change according to some
	probability distribution.
	In particular, \citet{haz-aum-kra-woo:j:uncertain-election-outcomes}
	assume that each voter is endowed with an explicitly encoded list of
	possible votes, each with its probability of being cast,
	\citet{bac-bet-fal:c:counting-pos-win} consider elections where the
	votes are partial and all completions are equally likely, and
	\citet{bau-hog:c:counting-swap-bribery-evaluation} consider both these
	models, as well as a third one, where the votes may change according
	to the Mallows noise model~\citep{mal:j:mallows}.
	
	Under the Mallows model, we are given an election---to which
        we refer as the original one---and a parameter $\phi$. Each
        possible election is associated with weight $\phi^d$, where
        $d$ is its swap distance to the original one, and the
        probability of drawing a particular election is proportional
        to its weight.  Thus, the Mallows model is very closely
        related to our approach of counting solutions for \textsc{Swap
          Bribery}. Indeed, the only difference is that we take the
        number of swaps as part of the input (so, intuitively, we view
        each election at this swap distance as equally likely), and in
        the Mallows model
        \citet{bau-hog:c:counting-swap-bribery-evaluation} take $\phi$
        as part of the input and consider all possible swap distances
        (but the probability of drawing an election at a given
        distance is weighted according to the Mallows model with
        $\phi$ as the parameter).
	
	There are two methodological differences between our work and the
	three above-discussed papers.  Foremost, we provide a detailed
	experimental analysis showing that counting variants of
	\textsc{Swap-Bribery} are indeed helpful for evaluating robustness of
	election winners.
	In contrast, \citet{bac-bet-fal:c:counting-pos-win} and
	\citet{bau-hog:c:counting-swap-bribery-evaluation} focus entirely on
	the complexity analysis, whereas
	\citet{haz-aum-kra-woo:j:uncertain-election-outcomes} also provide
	experiments, but their focus is on the running time and memory
	consumption of their algorithm.
	
	The second difference regards the use of parameterized complexity theory.
	Indeed, we believe that we are the first to use a
	parameterized counting complexity analysis---with an explicit focus on
	establishing $\fpt$ and $\sharpwone$-hardness results---in the context
	of elections. However, we do mention that
	\citet{haz-aum-kra-woo:j:uncertain-election-outcomes} and
	\citet{bau-hog:c:counting-swap-bribery-evaluation} consider settings
	where either the numbers of candidates or the numbers of voters are
	fixed constants, so, effectively, they provide $\xp$ algorithms.
	\medskip
	
	\textsc{Swap-} and \textsc{Shift-Bribery} were introduced by
	\citet{elk-fal-sli:c:swap-bribery}. Various authors studied these
	problems for different voting rules (see, e.g., the works of
	\citet{mau-nev-rot-sel:c:shift-bribery-stv} and
	\citet{zhou-guo:c:shift-bribery-fpt-stv} regarding iterative
	elections), sought approximation algorithms
	\citep{elk-fal:c:shift-bribery,fal-man-sor:c:shift-bribery-ptas},
	established parameterized complexity results
	\citep{dor-sch:j:parameterized-swap-bribery,bre-che-fal-nic-nie:j:prices-matter,kno-kou-mni:j:fpt-swap-bribery},
	considered restricted preference
	domains~\citep{elk-fal-gup-roy:c:swap-shift-bribery-single-peaked}, and
	extended the problem in various ways
	\citep{bre-fal-nie-tal:c:shift-bribery-committees,fal-kac:j:destructive-shift-bribery,bau-hog-rey:c:distance-bribery,yan-raj-guo:j:distance-bribery}.
	The idea of using \textsc{Swap-Bribery} to measure the robustness of
	election results is due to \citet{shi-yu-elk:c:robustness}, but is
	also closely related to computing the margin of victory
	\citep{mag-riv-she-wag:c:stv-bribery,car:c:stv-bribery,xia:margin-of-victory,bri-sch-suk:c:mov-tournaments};
	recently it was also applied to 
	committee elections~\citep{bre-fal-kac-nie-sko-tal:c:robustness}.
	
	So far, the complexity of counting problems 
	received fairly limited attention in the context of elections.  In
	addition to the works of
	\citet{haz-aum-kra-woo:j:uncertain-election-outcomes},
	\citet{bac-bet-fal:c:counting-pos-win} and
	\citet{bau-hog:c:counting-swap-bribery-evaluation}, we mention two
	more: \citet{fal-woj:c:counting-control} studied the complexity of
	counting solutions for control problems, whereas
	\citet{ken-kim:c:approx-counting-possible-winner} followed up on the
	work of \citet{bac-bet-fal:c:counting-pos-win} and provided
	approximation algorithms for their setting.
	
	\section{Preliminaries}\label{sec:prelim}
	
	For each integer $k$, by $[k]$ we mean the set $\{1, \ldots, k\}$.
	\medskip
	
	\noindent\textbf{Elections.}\quad
	An election $E = (C,V)$ consists of a set $C = \{c_1, \ldots, c_m\}$ 
	of candidates
	and a collection $V = (v_1, \ldots, v_n)$ of voters.
	Each voter $v_i$ has a preference order,
	which ranks all the candidates from the most to the least desired one
	(we sometimes refer to preference orders as votes).  For a voter $v_i$,
	we write $v_i \colon c_1 \pref c_2 \pref \cdots \pref c_m$ to indicate that
	he or she ranks $c_1$ first, then $c_2$, and so on.
	If we put a subset of candidates in such a description of a preference
	order, then we mean listing its members 
	in an arbitrary order.\medskip
	
	\noindent\textbf{Voting Rules.}\quad
	A voting rule $\calR$ is a function that, given an election,
	returns a set
	of candidates that tie as
	winners. We focus on Plurality and Borda, which assign scores to the
	candidates and select those with the highest ones. Under Plurality,
	each voter gives one point to the top-ranked candidate.  Under Borda,
	each voter gives $|C|-1$ points to the top-ranked candidate, $|C|-2$
	points to the next one, and so on.  We write $\score_E(c)$ to denote
	the score of candidate $c$ in election $E$ (the voting rule will be
	clear from the context).\medskip
	
	\noindent\textbf{Swap Distance.}\quad Let~$u$ and~$v$ be two votes
	over the same candidate set. The swap distance between~$u$ and~$v$,
	denoted~$d_\swap(u,v)$, is the length of the shortest sequence of
	swaps of adjacent candidates whose application transforms~$u$
	into~$v$.  Given elections~$E = (C,V)$ and $E' = (C,V')$, where
	$V = (v_1, \ldots, v_n)$ and $V' = (v'_1, \ldots, v'_n)$, their swap
	distance is 
	$\sum_{i=1}^n d_\swap(v_i,v'_i)$.
	By $R(E,r)$, we denote the set of elections that are at swap
	distance~$r$ from~$E$.
	\medskip
	
	\noindent\textbf{Swap- and Shift-Bribery.}\quad
	Let $\calR$ be a voting rule.  In the decision variant of the
	$\calR$~\textsc{Swap-Bribery} problem, we are given an election
	$E$,
	a designated candidate $p$, and a budget~$r$. Further, for each voter
	and each two candidates~$c$ and~$d$, we have a nonnegative price
	$\pi_v(c,d)$ for swapping them in $v$'s preference order (a swap is
	legal if at the time of its application~$c$ and~$d$ are adjacent).
	We ask if there is an election $E'$ where $p$ is an $\calR$-winner,
	such that $E'$ 
	can be obtained from $E$ by performing a sequence of legal swaps of
	cost at most~$r$.  In the counting variant, we ask for the number of
	such elections, and we require the cost of swaps to be exactly~$r$
	(the last condition is for our convenience and all our results would
	still hold if we asked for cost at most~$r$; the same would be true if
	instead of counting elections where $p$~won, we would count those where
	he or she lost).
	Since we are interested in computing the candidates' probabilities of 
	victory
	in elections at a given swap distance,
	we
	focus on the case where each swap has the same, unit price; thus, we
	usually refer to~$r$ as the swap radius and not as the budget.
	
	\textsc{Constructive Shift-Bribery} is a variant of
	\textsc{Swap-Bribery} where all swaps must involve the designated
	candidate, shifting him or her forward.  \textsc{Destructive
		Shift-Bribery} is defined analogously, except that our goal is to
	preclude the designated candidate's victory, and we can only shift him
	or her backward~\citep{fal-kac:j:destructive-shift-bribery}. %
	Counting variants 
	are defined in a natural way.
	We focus on the case
	where each unit shift
	has a unit price
	and we speak of shift radius~$r$ instead of budget or swap radius.
	
	\medskip
	
	\noindent\textbf{Counting Complexity.}\quad
	We assume basic familiarity with (parameterized) complexity theory,
	including
	classes $\p$, $\np$, $\fpt$, and $\wone$, and reducibility notions.
	
	Let \textsc{X} be a decision problem from $\np$, where for each
	instance we ask if there exists some mathematical object with a given
	property.  In its counting variant,
	traditionally denoted \textsc{\#X}, we ask for the number of such
	objects. For example, in \textsc{Matching} we are given an integer $k$
	and a bipartite graph $G$---with vertex set $U(G) \uplus V(G)$ and
	edge set $E(G)$---and we ask if $G$ contains a matching of size $k$
	(i.e., a set of $k$~edges,
	where no two edges touch the same vertex).
	In
	\textsc{\#Matching} we ask how many such matchings exist.
	
	The class $\sharpp$ is the counting analog of $\np$; a problem belongs
	to $\sharpp$ if it can be expressed as the task of counting accepting
	computations of a nondeterministic polynomial-time Turing machine.  We
	say that a counting problem \textsc{\#A} (polynomial-time) Turing
	reduces to 
	\textsc{\#B} if there exists an algorithm that solves \textsc{\#A} in
	polynomial time, provided
	that it has oracle access to 
	\textsc{\#B}.
	A problem is $\sharpp$-hard if every problem from $\sharpp$ Turing
	reduces to it.
	While \textsc{Matching} is in $\p$, it is well known that
	\textsc{\#Matching} is $\sharpp$-hard, and even
	$\sharpp$-complete~\citep{val:j:permanent}.
	
	$\sharpwone$ relates to $\wone$ in the same way as $\sharpp$ relates
	to $\np$. As examples of $\sharpwone$-hard problems, we mention
	counting size-$k$ cliques in a graph, parameterized
	by~$k$~\citep{flu-gro:j:parameterized-counting-complexity} and
	\textsc{\#Matching}, parameterized by the size of the
	matching~\citep{cur-mar:c:counting-bipartite-matching}.  Formally, 
	$\sharpwone$-hardness is defined using a slightly more general notion
	of a reduction, but for our purposes polynomial-time Turing reductions
	(where the parameters in the queried instances are bounded by a function
	of the parameter in the input instance)
	will suffice.
	
	\section{Algorithms and Complexity Results}\label{sec:complexity}
	
	In this section, we present our results regarding the complexity of
	\textsc{\#Swap-} and \textsc{\#Shift-Bribery}.  We first consider
	Plurality, mostly focusing on the former problem, and then discuss
	Borda, mostly focusing on the latter. 
	
	\subsection{Plurality and \textsc{\#Swap-Bribery}}
	
	We start with bad news. While there is a polynomial-time algorithm for
	the decision variant of \textsc{Plurality
		Swap-Bribery}~\citep{elk-fal-sli:c:swap-bribery}, the counting
	variant is intractable, even with unit prices (for the
	$\sharpp$-hardness, a related result is reported by
	\citet{bau-hog:c:counting-swap-bribery-evaluation}).
	
	\begin{theorem}\label{thm:plurality-swap-radius}
		\textsc{Plurality \#Swap-Bribery} is $\sharpp$-hard and
		$\sharpwone$-hard for the parameterization by the swap radius, even
		for
		unit prices.
	\end{theorem}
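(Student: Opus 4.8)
The plan is to give a polynomial-time Turing reduction from \textsc{\#Matching}, i.e.\ counting the size-$k$ matchings of a bipartite graph $G=(U\uplus W,E)$; this problem is $\sharpp$-hard (the permanent is recovered by taking $k$ maximal \citep{val:j:permanent}) and, parameterized by $k$, it is $\sharpwone$-hard \citep{cur-mar:c:counting-bipartite-matching}, so a single reduction tracking $k$ to the swap radius settles both claims. The guiding observation is that, under Plurality, whether $p$ wins in an election $E'$ depends only on which candidate sits on top of each vote. Hence, writing $N(r)$ for the oracle's answer at swap radius $r$, the generating function $F(q)=\sum_r N(r)\,q^r$ factorizes as $F(q)=\sum_{\tau}\big[\tau\text{ makes }p\text{ win}\big]\prod_{v}T_v(\tau(v),q)$, where the outer sum runs over assignments $\tau$ of a top candidate to every vote and $T_v(c,q)$ is the explicitly computable polynomial counting the permutations of vote $v$ that have top candidate $c$, weighted by their $\swap$-distance to the original $v$. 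I will engineer the instance so that the winning $\tau$ are exactly the assignments that encode matchings of $G$, turning $F$ into a known transform of the matching polynomial $\mathcal M(x)=\sum_j M_j x^j$.

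For the construction I introduce the candidate $p$, a candidate $c_x$ for every vertex $x\in U\cup W$, and filler candidates. Dedicated, hard-to-change votes fix $\score(p)=B$ and give each $c_x$ a base score of $B-1$. For every edge $e=(u_i,w_j)$ I add a pair of edge-votes in which $c_{u_i}$ and $c_{w_j}$ sit just below an initially top-ranked filler; ``selecting'' $e$ means swapping each of $c_{u_i},c_{w_j}$ to the top of its edge-vote at a fixed constant cost $\gamma$, raising $\score(c_{u_i})$ and $\score(c_{w_j})$ by one. Thus for a selected set $S$ the score of $c_x$ is $B-1$ plus the number of selected edges incident to $x$, so $p$ remains a winner precisely when every vertex is covered at most once, that is, exactly when $S$ is a matching. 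Since every edge costs the same $\gamma$, the minimum distance realizing the top-assignment of a size-$j$ matching is $\gamma j$, and therefore at swap radius $\gamma k$ only matchings of size at most $k$ can contribute.

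To read off $M_k$ I use interpolation. Replacing each edge by $x_t$ interchangeable copies turns the per-edge contribution into a formal weight $x_t$, so that the oracle value at radius $\gamma k$ equals $\sum_{j=0}^{k}c_j\,M_j\,x_t^{\,j}$ for explicitly computable, nonzero constants $c_j$ (the $c_j$ count the ways of padding a size-$j$ selection with score-neutral swaps out to distance $\gamma k$, and are identical for all size-$j$ matchings by symmetry of the gadgets). Querying the $k+1$ instances obtained from $x_t=1,\dots,k+1$ and solving the resulting Vandermonde-type system yields $M_k$. Every queried instance has swap radius $\gamma k=O(k)$, which gives $\sharpwone$-hardness for the parameterization by the swap radius; taking $k$ up to $|U|$ and instead reconstructing all of $\mathcal M$ (equivalently, evaluating the recovered polynomial $F$ at enough numeric points) recovers the permanent and hence yields $\sharpp$-hardness. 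All prices are unit throughout.

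The step I expect to be hardest is ruling out spurious winning assignments. Under Plurality a swap that does not touch the top two positions changes no score, so within a radius of $\gamma k$ one can pad any selection with many score-neutral swaps, and one might also try to push a filler or an unintended $c_x$ to the top of some vote. The first effect is benign and is exactly what the constants $c_j$ capture; the real danger is the second. I will control it by spacing the candidates inside each edge-vote so that, within the affordable per-vote budget, the only reachable non-original top is the intended $c_x$, and by capping filler scores via the dedicated votes so that no assignment in which fillers reach the top of a bounded number of votes can make $p$ win. Proving that, within radius $\gamma k$, the winning top-assignments are \emph{precisely} the matching encodings, and that the padding counts $c_j$ are uniform over size-$j$ matchings and nonzero, is the crux of the correctness argument.
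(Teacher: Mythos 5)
Your high-level plan---a polynomial-time Turing reduction from \textsc{\#Matching} in which the swap radius is bounded by a function of $k$, so that one construction yields both $\sharpp$- and $\sharpwone$-hardness---is the same as the paper's, and your observation that under Plurality only the vote tops matter is correct. But your edge gadget has a fatal flaw: it does not \emph{couple} the two endpoints of an edge. You realize the selection of an edge $e=(u_i,w_j)$ by two independent swaps in two separate votes (one pushing $c_{u_i}$ to the top of its edge-vote, one pushing $c_{w_j}$ to the top of the other). Nothing forces these two swaps to occur together: pushing only $c_{u_i}$ to the top raises its score to $B$, which merely ties with $p$, and since a voting rule returns all candidates tying for the highest score, $p$ still wins. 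Hence the winning top-assignments are not the encodings of matchings but all pairs $(S_U,S_W)$, where $S_U$ is any set of flipped $U$-side votes whose $U$-endpoints are pairwise distinct and $S_W$ is, independently, any such set on the $W$ side. The two sides are completely uncoupled, so the count factors into independent $U$-side and $W$-side contributions (governed by the coefficients of $\prod_{u}(1+\deg(u)\,z)$ and $\prod_{w}(1+\deg(w)\,z)$, both polynomial-time computable); the matching structure, i.e.\ the joint constraint that an edge covers both endpoints simultaneously, is destroyed, and no interpolation scheme can recover $M_k$ from quantities that do not contain it. A secondary problem: when you replace each edge by $x_t$ copies, the election itself changes, so your padding coefficients $c_j$ depend on $x_t$ as well, and the resulting linear system is not of Vandermonde type; this would need its own argument even if the main identity held.

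The paper's construction solves exactly the coupling problem you are missing. There, an edge voter has preference order $u_i \pref v_j \pref X \pref \cdots$, so a \emph{single} swap in a \emph{single} vote simultaneously takes a point from $u_i$ and gives one to $v_j$---that is the edge coupling. On top of this, two auxiliary candidates $a$ (score $n-k$) and $b$ (score $n+k$) make $p$ a non-winner initially and force every solution to transfer exactly $k$ points from $b$ to $a$ through chains ($b$-voter $\rightarrow$ edge voter $\rightarrow$ $a$-voter) of exactly three swaps each; with radius $3k$ the budget is exactly saturated, so there is no room for padding swaps or partial chains, and the $3k+1$ blocker candidates in $X$ prevent any other top changes. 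The result is an exact bijection between solutions and size-$k$ matchings, so a single oracle query suffices---no generating functions, no interpolation, no padding coefficients. If you want to rescue your approach, the missing idea is precisely such an in-vote coupling of the two endpoints together with a score imbalance that makes the budget exactly tight.
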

	\begin{proof}
		
		We give a reduction from \textsc{\#Matching}.  We will use a swap
		radius bounded by a function of the desired matching size, so we
		will obtain both $\sharpp$- and $\sharpwone$-hardness.
		
		Let $(G,k)$ be an instance of \textsc{\#Matching}, where $G$ is a
		bipartite graph with vertex set $U(G) \uplus V(G)$ and $k$ is the
		size of the matchings that we are to count. Assume that
		$U(G) = \{u_1, \ldots, u_n\}$, $V(G) = \{v_1, \ldots, v_n\}$, and
		$k \leq n$.  To form an election, 
		we let the candidate set be
		$C := U(G) \uplus V(G) \uplus \{p,a,b\} \uplus X$, where
		$X := \{x_1, \ldots, x_{3k+1}\}$. The candidates in
		$U(G) \uplus V(G)$ will model the graph, $p$ will be our designated
		candidate, $a$ and $b$ will control the size of the matching, and
		the candidates in $X$ will block undesirable swaps. We will have
		the following scores of the candidates:
		\begin{align*}
		\forall c\in C\setminus \{ a,b\}\colon& \score_E(c)=n,\\
		\score_E(a) = n-k,\ & \score_E(b) = n+k.
		\end{align*}
		We form the following four groups of voters:
		\begin{enumerate}
			\item For each edge $\{u_i,v_j\} \in E(G)$, there is an \emph{edge} 
			voter~$e_{ij}$
			with preference order
			$e_{ij} \colon u_i \pref v_j \pref X \pref \cdots.$
			
			\item For each $j \in [n]$, we have an \emph{$a$-voter} $a_j$ with 
			preference
			order
			$a_j \colon v_j \pref a \pref X \pref \cdots.$
			
			\item For each $i \in [n]$, we have a \emph{$b$-voter}~$b_i$ with 
			preference order
			$b_i \colon b \pref u_i \pref X \pref \cdots.$
			
			\item Finally, 
			the \emph{score voters} implement the desired Plurality
			scores. 
			For each candidate $c \in U(G) \cup V(G) \cup \{p\}$, there are
			exactly as many voters with preference order
			$c \pref X \pref \cdots$ as necessary to ensure that in total $c$
			has score $n$.
			Similarly, for each $x_i \in X$ there are $n$~voters with
			preference order $x_i \pref X \setminus \{x_i\} \pref \cdots$.
			There are also $n-k$~voters with preference order
			$a \pref X \pref \cdots$ and $k$ voters with preference order
			$b \pref X \pref
			\cdots$.  
		\end{enumerate}
		Let~$E$ be an election with the above-described candidates and
		voters. We form an instance $I$ of \textsc{Plurality \#Swap-Bribery}
		with this election, unit prices, and swap radius~$r := 3k$. Then, we
		make
		an 
		oracle query for~$I$ 
		and return its answer.
		In the remainder of the proof, we argue that this answer is equal to
		the number of size-$k$ matchings in~$G$.
		The 
		idea is that to make $p$ a winner, we have to transfer $k$
		points from $b$ to $a$ via swaps
		that correspond to a matching.
		
		Let $E'$ be some election in $R(E,r)$, i.e., an election at swap
		distance~$r$ from~$E$, where $p$~wins. We note that $p$ and the
		candidates from $X$ have score~$n$ in~$E'$ (indeed, in elections
		from $R(E,r)$, $p$ has score at most $n$ and the average score of
		the candidates in $X$ is at least $n$).
		Further,
		in $E'$ each edge voter, $a$-voter, and $b$-voter either
		ranks on top the same candidate  as in $E$, or the candidate that he or 
		she 
		ranked second in $E$ (otherwise some candidate in $X$ would
		have score above $n$).  We call  this the \emph{top-two rule}.
		
		Since $b$ must have at most $n$ points in $E'$, by the top-two rule,
		there must be at least $k$ $b$-voters that rank members of $U(G)$ on
		top. Let $U_b$ be the set of these members of $U(G)$. As each member
		of $U(G)$ can be swapped with $b$ at most once in the $b$-votes, we
		have $|U_b| \geq k$.
		
		Compared to $E$, in $E'$ each member of $U_b$ gets an additional
		point from the $b$-voters. Thus, for each $u_i \in U_b$ there must be
		a voter that ranked $u_i$ on top in $E$ but does not do so in $E'$.
		By the top-two rule, this must be an edge voter.  Let~$M$ be the set
		of pairs $\{u_i,v_j\}$ such that in~$E$ edge voter~$e_{ij}$
		ranks~$u_i$ on top, but in~$E'$ he or she ranks~$v_j$ on top.
		Naturally, we must have $|M| \geq |U_b|$.
		
		For each pair $\{u_i,v_j\} \in M$, there must be a voter who 
		swapped~$v_j$
		out of the top position in $E'$, because otherwise $v_j$~would have
		more than $n$~points. By similar arguments as before, this must be
		voter~$a_j$.  Let $V_a$ be the set of those members of~$V(G)$ that
		in~$E'$ are swapped out of the top positions in the $a$-votes.  It
		must be that
		$|V_a| \geq |M|$.
		
		Altogether, we have $|V_a| \geq |M| \geq |U_b| \geq k$ and, in fact,
		each of these sets must have exactly $k$ elements (because their
		elements correspond to unique swaps).
		Further, $M$ is a matching. If it were not, then some member of
		$U(G) \uplus V(G)$ would appear in two pairs in $M$, but then we
		would have to have two $a$-voters or two $b$-voters that
		corresponded to this candidate, which is not possible in our
		construction.
		
		This way we have shown that for each election in $R(E,r)$ where $p$
		wins, there is a corresponding size-$k$ matching. As the other
		direction is immediate, the proof is complete.
	\end{proof}
	
	A natural way to circumvent such intractability results is to seek
	$\fpt$ algorithms parameterized by the number of candidates or by the
	number of voters.
	For the former, 
	one typically expresses \textsc{Swap-Bribery} problems as 
	integer linear programs
	(ILPs)
	and invokes the classic algorithm of \citet{len:j:integer-fixed}, or
	some more recent one; see, e.g., the work of
	\citet{kno-kou-mni:j:fpt-swap-bribery}.  Unfortunately, counting
	analogs of these algorithms, dating back to the seminal work of
	\citet{bar:j:barvinok}, have $\xp$ running times and cannot be used
	for our purpose. Thus, we leave the complexity of our problems
	parameterized by the number of candidates open.
	Yet, for unit prices we do show an $\fpt$ algorithm parameterized by the 
	number of
	voters. 
	
	\begin{theorem}\label{thm:plurality-swap-bribery-fpt-n}
		For unit prices, \textsc{Plurality \#Swap-Bribery} parameterized by
		the number of voters is in $\fpt$.
	\end{theorem}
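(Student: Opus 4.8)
The plan is to exploit the fact that, under Plurality, a candidate's score depends only on which candidate sits on top of each vote, while the swap distance additionally ``charges'' for the arrangement of the remaining candidates. Relabel the candidates in a vote $v_i$ by their positions, so that bringing a candidate $c$ to the top of $v_i$ costs exactly $\pos_{v_i}(c)-1$ swaps and, crucially, once $c$ is fixed on top the remaining $m-1$ candidates may be arranged in any order. Hence, writing $N_{m-1}(x)$ for the number of permutations of $m-1$ elements with exactly $x$ inversions and $P(q) := \prod_{j=1}^{m-1}(1+q+\cdots+q^{j-1})$ for its generating polynomial, the number of votes at swap distance $d$ from $v_i$ whose top candidate is $c$ equals $N_{m-1}\big(d-(\pos_{v_i}(c)-1)\big)$; as a generating function in $q$ this is $q^{\pos_{v_i}(c)-1}P(q)$, where only the prefactor depends on $i$ and $c$.

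First I would use this to decouple the two aspects. For a fixed assignment $\tau\colon\{v_1,\dots,v_n\}\to C$ of top candidates to votes, the number of elections $E'\in R(E,r)$ realizing exactly these tops is $[q^r]\prod_i q^{\pos_{v_i}(\tau(i))-1}P(q)=A\big(r-D(\tau)\big)$, where $D(\tau):=\sum_i(\pos_{v_i}(\tau(i))-1)$ and $A(x):=[q^x]P(q)^n$. Since distinct elements of $R(E,r)$ induce distinct pairs (top assignment, per-vote distances), the quantity to output is $\sum_\tau A(r-D(\tau))$, summed over all $\tau$ under which $p$ is a Plurality winner. As $A$ depends on $\tau$ only through $D(\tau)$, it suffices to compute the polynomial $F(x)=\sum_{\tau:\, p\text{ wins}}x^{D(\tau)}$ and return $\sum_\delta[x^\delta]F(x)\cdot A(r-\delta)$; the coefficients of $A$ come from expanding the fixed polynomial $P(q)^n$, whose degree is $O(nm^2)$, so this step is polynomial (and we may assume $r\le n\binom{m}{2}$, else the answer is $0$).

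The heart of the argument --- and the only place where the parameter $n$ enters the exponent --- is computing $F$. Here $p$ is a winner iff, letting $s$ be the number of votes topping $p$, every other candidate tops at most $s$ votes. I would iterate over $s\in\{0,\dots,n\}$ and, for each $s$, run a dynamic program whose state is the subset $A\subseteq\{v_1,\dots,v_n\}$ of votes already assigned a top candidate and whose value is a polynomial in $x$. The program is initialized by letting $p$ claim every size-$s$ set $S$ of votes with weight $x^{\sum_{i\in S}(\pos_{v_i}(p)-1)}$, and then processes the other $m-1$ candidates one at a time: when processing $c$, each state $A$ may hand $c$ any subset $B\subseteq\{v_1,\dots,v_n\}\setminus A$ of at most $s$ unassigned votes, updating to $A\cup B$ and multiplying in $x^{\sum_{i\in B}(\pos_{v_i}(c)-1)}$. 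The contribution of $s$ to $F$ is the polynomial stored at the full set after all candidates are processed, and summing over $s$ counts each winning $\tau$ exactly once, since the value $\score_{E'}(p)=s$ picks out the right iteration. There are $2^n$ states, and the subset-of-complement transitions cost $3^n$ in total per candidate, so the whole computation runs in $3^n\cdot\mathrm{poly}(m,r,n)$ time, giving an $\fpt$ algorithm for the parameter $n$. I expect the bookkeeping of the capacity constraint to be the subtle point: the bound ``$\le s$'' must be tied to the $s$-indexed outer loop rather than to the running value of $p$'s score, and one must verify that handling the non-$p$ candidates sequentially assigns each vote to exactly one candidate, thereby enumerating every admissible $\tau$ without repetition.
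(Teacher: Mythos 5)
Your proposal is correct, and while it rests on the same core combinatorial fact as the paper's proof, its algorithmic packaging is genuinely different. The shared insight is that once a vote's top candidate is fixed, the completions below it at a given residual swap distance are counted by the number of permutations of $m-1$ elements with that many inversions (your $N_{m-1}$ and $P(q)$ are exactly the paper's table $Y(\cdot,\cdot)$), so only the assignment of tops matters for who wins. Where you diverge is in how the winning top-assignments are aggregated. The paper enumerates, by brute force, all ordered partitions of the voter set into groups (with $p$'s group first and largest), which is a $2^{\Theta(n\log n)}$-size family, and then solves each resulting ``global counting problem'' by a chain of polynomial-time dynamic programs that thread the swap budget $r$ through every table; double counting is avoided by forcing the group order to agree with the candidate index order. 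You instead factor the budget out entirely: the only object computed with $\fpt$ effort is the polynomial $F(x)$ recording the distribution of the minimum top-assignment costs $D(\tau)$ over winning assignments, obtained by a single subset dynamic program over votes ($3^n\cdot\mathrm{poly}$ transitions, candidates processed in a fixed order, with an outer loop over $p$'s score $s$ enforcing the capacity bound), after which the answer is a polynomial-time convolution with the coefficients of $P(q)^n$. Your double-counting argument (each $\tau$ arises in exactly one iteration $s=|\tau^{-1}(p)|$ and via exactly one sequence of claims) is sound, as is the reduction of the winning condition to $|\tau^{-1}(c)|\le s$ for all $c\ne p$. What each approach buys: yours is asymptotically faster (single-exponential $3^n$ versus the paper's ordered-Bell-number enumeration) and cleanly separates the ``who tops what'' combinatorics from the ``how votes are completed'' combinatorics; the paper's global/local decomposition is slower here but its building blocks (the voter-group contribution subroutines) are reused verbatim in its other algorithms, e.g., for the \textsc{Plurality \#Shift-Bribery} results.
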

	\begin{proof}[Proof sketch]
		Consider an instance $I$ of \textsc{Plurality \#Swap-Bribery} with
		election $E = (C,V)$, where $C=\{c_1, c_2, \ldots, c_m\}$ and $V$
		contains $n$ voters. Let $r$ be the swap radius and, w.l.o.g., let
		$p=c_1$ be the designated candidate.
		
		The core idea is to go over all possible sequences
		$\mathcal{V} = (V_1, \ldots, V_{m'})$ such that
		(a)~$m'\leq \min(m,n)$,
		(b)~each $V_i$ is a subcollection of $V$ (consisting of not necessarily
		consecutive voters),
		(c)~each voter belongs to exactly one~$V_i$, and
		(d)~group $V_1$ has at least as many voters as every other group.
		For each such sequence, we solve the following \emph{global counting
			problem}: Count the number of ways to perform exactly~$r$ swaps so
		that
		(i) within each $V_i$, each two voters rank the same candidate,
		denoted~$c(V_i)$, on top,
		(ii) all voters in~$V_1$ rank $p$ on top (i.e., $c(V_1) = p$), and
		(iii) for each two groups $V_i$ and $V_{i'}$, where $i < i'$ and we
		have $c_j=c(V_i)$ and~$c_{j'}=c(V_{i'})$, it holds that~$j<j'$.
		These conditions ensure that after performing the swaps, each group
		votes for a different candidate, each candidate $c(V_i)$ receives
		exactly $|V_i|$ points, and $p$ wins. One can verify that every
		solution for our input instance corresponds to exactly one
		sequence~$\mathcal{V}$. In other words, to obtain the answer
		for~$I$, we need to sum up the answers for the global counting
		problems for each $\mathcal{V}$.
		
		To solve a given global counting problem in polynomial time,
		we define $T[i,\ell,s]$ to be the number of ways to perform
		exactly~$s$ swaps within the first~$i$ voter groups, so that
		conditions (i)--(iii) hold for $V_1, \ldots V_i$, and so that
		$c(V_i) = c_\ell$. We compute these values using dynamic programming
		(which requires solving a \emph{local counting problem}, also via
		dynamic programming, to count for each voter group the number of
		ways to ensure that all its members rank a given candidate on top).
		The solution for the global counting problem is then
		$\sum_{\ell \leq m}T[m',\ell,r]$.
		
		As the number of global counting problems to solve is bounded by a
		function of $n$, and each such problem is solved in polynomial time,
		the algorithm runs in $\fpt$ time with respect to the number of
		voters.
	\end{proof}
	
	The restriction to unit prices in
	Theorem~\ref{thm:plurality-swap-bribery-fpt-n} is
	necessary. Otherwise, a reduction from the problem of counting linear
	extensions of a partially ordered
	set by~\citet{bri-win:j:linear-extensions} shows $\sharpp$-hardness even
	for a single voter.
	
	\begin{restatable}{theorem}{thmprices}
		\label{thm:prices}
		\textsc{Plurality \#Swap-Bribery} is $\sharpp$-hard even for a
		single voter and unary-encoded prices.
	\end{restatable}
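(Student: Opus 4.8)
The plan is to reduce from \#Linear-Extensions --- given a poset $P$ on a ground set of $t$ elements, count its linear extensions --- which is $\sharpp$-complete by \citet{bri-win:j:linear-extensions}. The election will consist of a single voter over the candidate set $C := \{p\} \cup \{d_1, \ldots, d_t\}$, where $d_1, \ldots, d_t$ represent the poset elements. First I would compute a linear extension of $P$ by topological sort and label the elements $d_1, \ldots, d_t$ along it, so that the relation $i <_P j$ implies $i < j$; the voter's initial order is then $v \colon p \pref d_1 \pref d_2 \pref \cdots \pref d_t$, which already places $p$ on top and lists the $d_i$ consistently with $P$. Since there is a single voter, under Plurality the unique winner is the top-ranked candidate, so $p$ wins if and only if $p$ stays on top.

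The prices are the crux. I set $\pi_v(p,d) := H$ for every $d$, and $\pi_v(d_i,d_j) := H$ whenever $d_i$ and $d_j$ are comparable in $P$, where $H := \binom{t}{2} + 1$; every remaining swap, i.e., of an \emph{incomparable} pair, costs $1$. All prices are bounded by $\binom{t}{2}+1$, hence their unary encoding is polynomial. The swap radius is $r := \binom{t}{2}$.

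The key fact is that the minimum cost of reaching a target order $\sigma$ from $v$ equals $\sum \pi_v$ taken over the pairs whose relative order differs between $v$ and $\sigma$: any legal swap sequence realizing $\sigma$ must swap each inverted pair an odd (hence $\geq 1$) number of times and each non-inverted pair an even (hence $\geq 0$) number of times, so its cost is at least this sum, and a bubble-sort sequence swaps every inverted pair exactly once and attains it. Consequently, reaching any order in which $p$ has left the top, or in which some comparable pair $d_i,d_j$ is inverted, costs at least $H > r$; so within budget $r$ the candidate $p$ never moves and no comparable pair is ever inverted. The orders reachable within cost $r$ are therefore \emph{exactly} the linear extensions of $P$ (with $p$ prepended), and in every one of them $p$ wins. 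Moreover each linear extension inverts only incomparable pairs, all of unit price, so its reaching cost equals its number of inversions, which is at most $\binom{t}{2} = r$; hence every linear extension is indeed reachable within budget. This establishes that the number of within-budget elections in which $p$ wins equals the number of linear extensions of $P$, giving $\sharpp$-hardness for the ``cost at most $r$'' formulation.

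The remaining --- and only delicate --- point is to pass from this clean ``at most $r$'' count to the ``exactly $r$'' formulation of the definition. Here a single linear extension is reachable not only at cost equal to its inversion number but also, through wasteful back-and-forth swaps of incomparable pairs, at larger costs of the same parity, so a lone query for ``cost exactly $r$'' does not directly return the desired total. I would resolve this either by invoking the equivalence of the two cost conventions noted in Section~\ref{sec:prelim}, or, to keep the reduction self-contained, by a Turing reduction that queries the oracle at all radii $\rho \in \{0, \ldots, \binom{t}{2}\}$ and reconstructs the inversion-distribution of the linear extensions (and thus their total) from these values. Controlling this wasteful-swap slack --- ensuring that the combination of the per-radius counts recovers precisely the linear-extension count --- is the main obstacle, which is why pinning down the ``at most'' count first is the natural opening step.
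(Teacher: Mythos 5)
Your reduction is, at its core, the same one the paper uses: both start from \textsc{\#Linear Extensions}, build a single vote with $p$ on top and the poset elements listed in a topological order, and use prices as walls so that the cheaply reachable orders are exactly the linear extensions (with $p$ prepended), in all of which $p$ wins. The genuine difference is the price scale, and it matters. The paper prices every forbidden swap (any swap involving $p$, and any swap of a constrained pair) at $1$, prices all remaining swaps at $0$, and sets the budget to $r:=0$; then ``cost exactly $r$'' and ``cost at most $r$'' trivially coincide, every linear extension is reachable at cost exactly $0$, and a \emph{single} oracle query completes the proof. You insist on strictly positive prices---precisely the adaptation the paper alludes to when it remarks that zero prices could be avoided---and what this buys (no zero prices) is paid for by the parity/slack problem you correctly identify: with unit prices on incomparable pairs, a linear extension with $k$ inversions is reachable at cost exactly $\rho$ only if $\rho\ge k$ and $\rho\equiv k\pmod 2$, so no single query at radius $r=\binom{t}{2}$ returns the linear-extension count. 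Your ``at most $r$'' analysis, including the key fact that the minimum cost of reaching $\sigma$ is the sum of prices over pairs ordered differently in $v$ and $\sigma$, is correct.

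Two caveats about how you close the argument. First, your remedy of ``invoking the equivalence of the two cost conventions'' is not legitimate in a hardness proof: the problem you must show hard is the exactly-$r$ one as defined, and the paper's remark is a claim about its own results holding under either convention, not a black-box reduction between the two counting problems. Second, your multi-query reconstruction does work, but it needs one further observation to be complete: padding a realizing sequence with wasteful back-and-forth swaps requires two \emph{incomparable} candidates to be adjacent at some moment. This holds for every target with $k\ge 1$ inversions (an inverted pair is necessarily incomparable and becomes adjacent during bubble sort), and for $k=0$ it holds iff the topological order contains an adjacent incomparable pair, which happens iff the poset is not a total order. Hence, in the non-degenerate case, $N_\rho=\sum_{k\le\rho,\;k\equiv\rho\,(\mathrm{mod}\,2)}L_k$, the inversion distribution follows via $L_\rho=N_\rho-N_{\rho-2}$, and in fact two queries suffice since the total equals $N_r+N_{r-1}$; in the degenerate total-order case this reconstruction returns $0$ rather than the correct answer $1$, so your Turing reduction must detect that case (trivially) and output $1$ directly. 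With these repairs your proof is sound, though noticeably heavier than the paper's one-query, budget-zero version.
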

	
	We conclude with a brief mention of \textsc{\#Shift-Bribery}. Both the
	constructive and the destructive variant are in~$\p$, even with
	arbitrary unary-encoded prices (for the binary encoding,  
	$\sharpp$-hardness follows by a 
	reduction from 
	\textsc{\#Partition}). 
	Our algorithms use dynamic programming over groups 
	of voters with the same candidate
	as their top choice.
	
	\begin{restatable}{theorem}{thmpluralityshifts}
		\label{thm:plurality-shifts}
		For unary-encoded prices, both the constructive and the destructive
		variant of \textsc{Plurality \#Shift-Bribery} are in $\p$.
	\end{restatable}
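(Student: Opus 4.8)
The plan is to exploit the fact that, under Plurality, shifting $p$ within a single vote changes the scores only when it moves $p$ \emph{onto} the top position of that vote (constructive case) or \emph{off} it (destructive case). Since prices are encoded in unary, the radius $r$ is polynomially bounded, so I can afford dynamic-programming tables indexed by the amount of cost used so far; this is what rules out a \#\p-hardness result of the kind obtainable for binary prices.

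\emph{Constructive variant.} In a vote $v_i$ where $p$ starts at position $\pos_i>1$ below the top candidate $t_i$, shifting $p$ forward by $s$ positions (for $0\le s\le \pos_i-1$) yields a distinct election, but its Plurality effect is trivial unless $s=\pos_i-1$, in which case $p$ reaches the top, gaining one point while $t_i$ loses one. For each vote I first read off from the (unary) prices two cost-indexed counts: $A_i(j)$, the number of forward shifts of cost exactly $j$ that do \emph{not} bring $p$ to the top, and $B_i(j)\in\{0,1\}$, the number of cost-$j$ shifts that do; votes with $p$ already on top contribute trivially. The key observation is that whether $p$ wins depends only on the numbers $k_c$ of votes, among those currently topped by each $c\neq p$, in which $p$ newly reaches the top: writing $K=\sum_c k_c$, the final score of $p$ is $\score_E(p)+K$ and that of $c$ is $\score_E(c)-k_c$, so $p$ wins iff $k_c\ge \score_E(c)-\score_E(p)-K$ for every $c$.

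The coupling through the shared term $K$ is the only real obstacle, and I would break it by iterating over the $O(n)$ possible values of $K$. With $K$ fixed, the winning condition becomes a per-candidate lower bound $\ell_c(K):=\max\{0,\ \score_E(c)-\score_E(p)-K\}$ on $k_c$. I then group the voters by their top candidate, and for each group $G_c$ run a dynamic program that convolves the per-voter counts $A_i,B_i$ to obtain $N_c(k,j)$, the number of ways to perform shifts inside $G_c$ of total cost $j$ in which $p$ newly tops exactly $k$ of the votes. A final global dynamic program over the candidates combines the tables $N_c$, restricting each $k_c$ to the range $[\ell_c(K),|G_c|]$ and accumulating $\sum_c k_c=K$ and $\sum_c j_c=r$. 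Summing the results over all $K$ gives the answer, each solution being counted exactly once because its shift vector determines $K$ uniquely. Every table has both dimensions bounded by $n$ and $r$, so the whole procedure runs in polynomial time.

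\emph{Destructive variant.} Here only the votes with $p$ on top matter: shifting $p$ backward by $s\ge1$ in such a vote removes $p$'s point and awards one to the original runner-up $u_i$, whereas every other backward shift (and every shift in a vote not topped by $p$) leaves all scores unchanged. I would count the \emph{complementary} quantity---elections in which $p$ remains a winner---and subtract it from the (easily computed) total number of elections at shift distance $r$. Grouping the $p$-topped votes by their runner-up $u_i$ and writing $D$ for the number of votes in which $p$ is pushed off the top, $p$ stays a winner iff $m_c\le \score_E(p)-\score_E(c)-D$ for every $c$, where $m_c$ counts the pushed-off votes whose runner-up is $c$. Exactly as before, I fix $D$ to decouple the constraint, build per-group tables by dynamic programming, and combine them globally while convolving in the (score-irrelevant) contribution of the remaining votes to the cost. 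Summing over the $O(n)$ values of $D$ and subtracting from the total yields the number of elections in which $p$ loses, again in polynomial time.
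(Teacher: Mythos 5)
Your proposal is correct, and the constructive half is essentially the paper's own proof. Guessing $K$, the number of votes in which $p$ is newly moved to the top, is the same as the paper's guess of $p$'s final score $\finScP=\score_E(p)+K$; your per-group tables $N_c(k,j)$ are exactly the paper's voter-group-contribution counts $\vgcShifPlu(G_c,j,p,k)$ (Lemma~\ref{lem:vgcShiftPlu-P}), and your global convolution over groups with the lower bounds $\ell_c(K)$ is the paper's global table $T_{\finScP}$, including the argument that summing over the guesses never double-counts.

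In the destructive half you use the same decomposition (group votes by the candidate who inherits the top position, guess $D$, i.e., guess $p$'s final score $\score_E(p)-D$), but you treat the disjunctive losing condition ``some candidate ends up strictly above $p$'' differently. The paper counts losing configurations directly, carrying a Boolean flag in its global table that records whether some group's candidate already exceeds $p$'s final score, which forces a two-case update formula. You instead count the complementary, purely conjunctive event ``$p$ remains a winner'' ($m_c\le\score_E(p)-\score_E(c)-D$ for all $c$) and subtract it from the total number of cost-exactly-$r$ shift actions, which is itself a trivial cost convolution. Both devices are polynomial for unary prices; your complementation keeps the destructive dynamic program structurally identical to the constructive one (only upper-bound constraints, no flag), at the price of the extra total-count computation, whereas the paper's flag avoids complementation altogether. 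One detail worth making explicit in a full write-up: for a fixed $D$, your condition must also be enforced for candidates $c$ that are nobody's runner-up (there $m_c=0$), since $\score_E(c)>\score_E(p)-D$ for such a $c$ already forces the count for that value of $D$ to be zero; your ``for every $c$'' covers this, but the global dynamic program has to perform that check.
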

	
	\subsection{Borda and \textsc{\#Shift-Bribery}}
	Our results for \textsc{Borda \#Swap-Bribery} follow from those for
	\textsc{\#Shift-Bribrey}, so we discuss the latter problem first.
	
	In the decision setting, the constructive variant of \textsc{Borda
		Shift-Bribery} is $\np$-hard (and is in $\fpt$ when parameterized by
	the shift radius, but is $\wone$-hard for the number of voters),
	whereas the destructive variant is in $\p$. In the counting
	setting, 
	both variants are $\sharpp$-hard and $\sharpwone$-hard for the
	parameterization by the number of voters; the result for the
	constructive case follows from a proof for the decision variant due to
	\citet{bre-fal-nie-tal:c:shift-bribery-committees} and for the
	destructive case, we use a similar approach with a few tricks on top.
	
	\begin{restatable}{theorem}{thmbordashiftvoters}
		\label{thm:borda-shift-voters}
		Both the constructive and the destructive variant of \textsc{Borda 
		\#Shift-Bribery} are $\sharpp$-hard and $\sharpwone$-hard when
		parameterized by the number of voters.
	\end{restatable}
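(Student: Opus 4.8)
The plan is to prove both hardness claims through \emph{parsimonious} reductions, so that a single construction yields $\sharpp$-hardness and $\sharpwone$-hardness (parameterized by the number of voters) at once. For the constructive variant, I would start from the reduction establishing $\wone$-hardness of the decision version parameterized by the number of voters (due to \citet{bre-fal-nie-tal:c:shift-bribery-committees}); reductions of this flavor typically encode a numbers problem such as \unarybinpacking{} parameterized by the number of bins, using a constant number of voters per bin and placing the designated candidate~$p$ so that shifting $p$ forward simulates assigning items to bins. The key step is to verify that this reduction is parsimonious: every size-$r$ shift bribery that makes $p$ a winner corresponds to exactly one feasible packing, and conversely. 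To make the correspondence exact I would (i)~tune the initial scores and the budget $r$ so that $p$ can win only by spending the full budget and only in the intended, bin-respecting manner, ruling out shifts that overshoot or move $p$ past unintended candidates, and (ii)~ensure that distinct packings induce distinct shift vectors. Given parsimony, the count of winning briberies equals the number of feasible packings, and hardness transfers from the counting version of the source.

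Thus the constructive case reduces to establishing that the source counting problem is itself $\sharpp$-hard and $\sharpwone$-hard (parameterized by the number of bins). The $\sharpp$-hardness is inherited from the $\np$-hardness of the decision version together with parsimony. For $\sharpwone$-hardness I would give, or invoke, a parameter-preserving parsimonious reduction from a canonical $\sharpwone$-hard problem, such as counting $k$-cliques~\citep{flu-gro:j:parameterized-counting-complexity} or \textsc{\#Matching}~\citep{cur-mar:c:counting-bipartite-matching}. This is the component that must be handled with care, since the counting hardness of the source is what the whole argument rests on.

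The destructive variant is the main obstacle, because its decision version is in $\p$ (so there is no $\wone$-hardness reduction to piggyback on) and all of the complexity must come from the \emph{multiplicity} of ways to spend the budget. Here the only freedom is a vector $(t_1,\dots,t_n)$ of backward shifts of $p$ with $\sum_i t_i = r$; $p$'s final score is forced to $\score_E(p)-r$, and shifting $p$ down by $t_i$ in vote~$i$ hands one point to each of the $t_i$ candidates it passes. The plan is to reuse the constructive gadget but with $p$ installed as the unique initial winner by a margin tuned so that $p$ fails to win exactly when the passed candidates are those prescribed by a feasible source solution. The tricks on top are needed to (a)~force the budget to be spent exactly and in a bin-respecting pattern, e.g., by padding each vote with blocker candidates of carefully chosen scores so that stopping $p$ in the middle of a block can never dethrone it, and (b)~suppress spurious low-budget solutions in which $p$ is dethroned trivially, so that the surviving shift vectors are in bijection with the source solutions.

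I expect the destructive construction, together with the verification of parsimony in both cases, to be the hardest part: one must simultaneously control $p$'s own score loss, the incidental score gains of the candidates $p$ passes, and the exact-budget requirement, all while guaranteeing a clean bijection with source solutions. Once parsimony is in place, both $\sharpp$-hardness and $\sharpwone$-hardness for the number-of-voters parameterization follow immediately, since the reductions produce a number of voters bounded by a function of the source parameter.
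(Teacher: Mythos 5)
Your constructive half is essentially the paper's route: the paper simply observes that the \wone-hardness reduction of \citet{bre-fal-nie-tal:c:shift-bribery-committees} already works for counting. Your only self-created obstacle there is the guess that the source is \unarybinpacking{}; reductions in this line encode a multicolored clique/independent-set structure using $O(1)$ voters per color class, and \textsc{\#Multicolored Independent Set} is canonically $\sharpwone$-complete, so the ``source counting hardness'' component you flag as delicate is not actually an issue.

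The genuine gap is in the destructive case, where you insist on a \emph{parsimonious} reduction: you want the shift vectors that dethrone the designated candidate to be in bijection with the source solutions. This cannot work with the kind of gadget you describe. After spending exactly $r$ backward shifts, the designated candidate $d$ ends with score $\score_E(d)-r$, so ``$d$ loses'' is a \emph{union}, over candidates $c$, of per-candidate threshold events ``$d$ passes $c$ at least $\delta(c)+1$ times,'' where $\delta(c)$ is $c$'s deficit. But the patterns that encode source solutions (independent sets, packings) are exactly the \emph{spread-out} patterns obeying upper-bound caps on every candidate's pass count---they live inside the complement of every such threshold event. Dethroning is triggered by concentration; source solutions are characterized by its absence. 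No tuning of scores makes a union of threshold cylinders coincide with a conjunctive, cap-respecting structured set, and on top of that any dethroning pattern that does not exhaust the budget in a forced way spawns a combinatorial explosion of further solutions by redistributing the leftover shifts. The paper's key idea, which your proposal is missing, is to \emph{abandon parsimony and count the complement}: it arranges scores (vertex candidates at $L-1$, edge and fake-edge candidates at $L-3$, blockers at $L$, with $d$ at $L+r$) so that the ways $d$ \emph{remains a winner} after exactly $r$ backward shifts are in bijection with the multicolored independent sets; it then notes that the total number $|M|$ of ways to spend exactly $r$ shifts is computable in polynomial time by a simple dynamic program, makes a single oracle query for the number $g$ of dethroning vectors, and outputs $|M|-g$. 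This complementary-counting Turing reduction (subtraction of oracle answers from a polynomial-time count; the radius-parameterized destructive proof of Theorem~\ref{thm:borda-shift-radius} uses the same trick with two oracle calls) is what makes the destructive case go through; without it, the construction you sketch cannot be completed.
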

	
	More surprisingly, for the parameterization by the shift radius, the
	constructive variant is in $\fpt$ and the destructive variant is
	$\sharpwone$-hard. Not only does the problem that was easier in the
	decision setting now became harder, but also---to the best of our
	knowledge---it is the first example where a destructive variant of an
	election-related problem is harder than the constructive one.
	Yet, 
	\textsc{Shift-Bribery} is quite special as the two variants differ
	both in the goal (i.e, whether we want the designated candidate to win
	or not) and in the available actions (shifting the designated
	candidate forward or backward; typically, destructive voting problems
	have the same sets of actions as the constructive ones).
	
	The $\fpt$ algorithm for the constructive case relies on the fact that
	if we can ensure victory of the designated candidate by shifting him
	or her by~$r$ positions forward, then there are at most $r$ candidates
	that we need to focus on (the others will be defeated irrespective
	what exact shifts we make).  There are no such bounds in the
	destructive setting.
	
	\begin{restatable}{theorem}{thmbordashiftradius}
		\label{thm:borda-shift-radius}
		Parameterized by the shift radius, \textsc{Borda \#Constructive
			Shift-Bribery} is in $\fpt$ (for unary-encoded prices), but the
		destructive variant is $\sharpwone$-hard, even for unit prices.
	\end{restatable}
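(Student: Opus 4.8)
The plan is to treat the two halves separately. For the constructive $\fpt$ result I would build on the bounded-focus observation announced in the text. Assuming positive unary prices (zero-cost shifts are applied greedily in a preprocessing step), a shift action is a vector $(k_1,\dots,k_n)$ recording how far $p$ is moved up in each vote; $p$ then gains exactly $S$ points, where $S$ is the total number of positions passed, and $S\le r$ (for unit prices $S=r$). A candidate that is never passed keeps its score and every passed candidate only loses points, so a candidate is automatically beaten once $\score_E(c)\le \score_E(p)+S$; the only candidates that must be actively overtaken are those with $\score_E(c) > \score_E(p)+S$. Each such candidate must be passed at least $\score_E(c)-\score_E(p)-S\ge 1$ times while the total number of passes is $S\le r$, so there are at most $r$ of them and their total required reduction is at most $r$. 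I would therefore guess $S$ (at most $r+1$ choices), read off the at-most-$r$ ``focus'' candidates as the top scorers exceeding $\score_E(p)+S$, and count by dynamic programming over the votes: a state records the budget spent together with the reduction profile of the focus candidates (a nonnegative vector of total weight at most $r$ over at most $r$ coordinates, hence $f(r)$ states). Processing a vote enumerates $k_i$, determines which focus candidates sit among the top $k_i$ positions above $p$ there, and updates the profile; at the end I sum the counts of states in which every focus candidate has been pushed to at most $\score_E(p)+S$. Since distinct vectors yield distinct elections, there is no double counting, and the running time is $f(r)\cdot\mathrm{poly}$.

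The destructive part is where the real difficulty lies, precisely because the symmetric bound fails: moving $p$ \emph{down} can boost \emph{any} candidate it passes, so arbitrarily many candidates are potentially relevant and the choice of which to boost can encode combinatorial structure. I would prove $\sharpwone$-hardness by a parameterized Turing reduction from \textsc{\#Clique} (counting $k$-cliques), fixing the shift radius to a polynomial $g(k)$ so the parameter stays bounded by a function of $k$. The construction would introduce a candidate for each vertex and each edge, a designated winner $p$, a single ``spoiler'' $w$ who is the only candidate capable of overtaking $p$, and --- exactly as the blocker candidates $X$ and the top-two rule do in the proof of \autoref{thm:plurality-swap-radius} --- a family of high-scoring blocker candidates whose scores forbid all but the intended shifts. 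Scores would be tuned so that $p$ loses iff $w$ is boosted to a fixed target value, and so that the only budget-$g(k)$ expenditures reaching that target route $p$ downward past exactly $k$ vertex-candidates together with the $\binom{k}{2}$ edge-candidates joining them, i.e.\ past a $k$-clique.

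The heart of the argument, and the step I expect to be hardest, is engineering these gadgets so that (i)~every size-$k$ clique yields the same, computable number of losing shift vectors; (ii)~no ``spurious'' use of the budget makes $p$ lose without selecting a clique --- this is the role of the blockers and of a careful accounting of how many points each candidate may gain, analogous to the chain $|V_a|\ge|M|\ge|U_b|\ge k$ in the Plurality proof; and (iii)~the per-clique multiplicity, and any contributions from sub-clique expenditures, can be stripped off. If a single radius does not isolate the clique count cleanly, I would instead query the oracle over a range of $\mathrm{poly}(k)$ radii and recover the number of $k$-cliques by interpolation --- a standard device for counting Turing reductions, legitimate here since all queried radii stay bounded by $f(k)$. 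Finally, unit prices must suffice, so the budget arithmetic has to be driven entirely by distances within the votes rather than by price weights; keeping the vertex- and edge-distances consistent with one global budget $g(k)$ while still blocking every unwanted shift is the main technical balancing act.
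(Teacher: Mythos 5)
Your constructive half is, in substance, the paper's own argument: guess the total number of passes (equivalently, $p$'s final score), observe that at most $r$ candidates can still be above that score and each has a demand of at least one pass out of a total of at most $r$ passes, and run a dynamic program over the votes whose state is the budget spent plus a demand/reduction profile over these critical candidates. That part is sound (modulo the usual care needed if zero-cost shifts are allowed, where counting distinct elections rather than shift vectors becomes delicate).

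The destructive half has a genuine gap, and it sits exactly at the step you flag as hardest. In the destructive problem, \emph{every} way of spending the budget after which \emph{some} candidate exceeds the designated candidate's final score is a solution and must be counted; you cannot tune scores or add blockers so that ``$p$ loses iff $w$ is boosted to a fixed target value.'' The shifts that encode the combinatorial structure necessarily boost the vertex and edge candidates they pass, so passing the same vertex candidate in two different votes already yields a perfectly valid solution that corresponds to no clique --- and no gadget can forbid it, since ``forbidding'' it would mean it fails to make $p$ lose, contradicting that the boosted candidate overtakes $p$. Blockers of the kind used in the proof of Theorem~\ref{thm:plurality-swap-radius} constrain solutions only when deviations \emph{destroy} the win condition one is counting (the constructive direction, or the complement count of elections where the designated candidate still wins); in the destructive direction the logic inverts and deviations \emph{create} solutions. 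This is precisely why the paper does not try to eliminate spurious solutions at all: it reduces from \textsc{\#Matching} and makes \emph{two} oracle queries, on an election $E$ and on an election $F$ that differs only in that one edge voter ranks the spoiler one position lower; the spurious (non-matching) solutions are in bijection between the two instances and cancel under subtraction, while exactly the matching solutions survive in $E$ but not in $F$. Your fallback --- interpolation over $\mathrm{poly}(k)$ radii --- is not a substitute: there is no argument that the solution count, as a function of the radius, is a polynomial from which the clique count can be extracted (enlarging the radius opens up entirely new, graph-dependent families of solutions), and Valiant-style interpolation varies gadget weights or multiplicities, not the budget. Finally, reducing from \textsc{\#Clique} rather than \textsc{\#Matching} adds a further unaddressed difficulty: enforcing vertex--edge incidence consistency for $\binom{k}{2}$ edges purely through unit-price shift distances within one global radius, which the bipartite matching structure of the paper's reduction avoids entirely (each edge vote simply reads $d \pref u_i \pref v_j \pref p \pref \cdots$).
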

	
	\begin{proof}[Proof (destructive case)]
		We give a polynomial-time Turing reduction from \textsc{\#Matching}
		to \textsc{Borda \#Destructive Shift-Bribery}. Let $(G,k)$ be an
		instance of \textsc{\#Matching}, where $G$ is a bipartite graph with
		vertex set $U(G) \uplus V(G)$, and $k$ is a positive
		integer. Without loss of generality, we assume that
		$U(G) = \{u_1, \ldots, u_n\}$, $V(G) = \{v_1, \ldots, v_n\}$, and
		$k \leq n$.
		
		Our reduction proceeds as follows. First, we form the set of
		relevant candidates $R := \{d,p\} \uplus U(G) \uplus V(G)$, where
		$d$~is the designated candidate. Moreover, for each relevant
		candidate $r\in R$, we form a set $D(r)$ of $3k+1$ dummy ones.  We
		will form an election $E$, where these candidates will have the
		following Borda scores ($X$ is some positive integer, whose value
		depends on the specifics of the construction; we will be counting
		ways in which $d$ can cease to be a winner by shifting him or her
		backward by $3k$ positions):
		\begin{align}
		\label{eq:1}  &\score_E(d) = X+3k,  \\
		\label{eq:2}  &\score_E(p) = X-k+1,\\
		\label{eq:3}  &\score_E(u_1) = \cdots = \score_E(u_n) = X-1, \\
		\label{eq:4}  &\score_E(v_1) = \cdots = \score_E(v_n) = X-1, \text{
			and}\\
		\label{eq:5}  &\text{each dummy candidate has score at most $X-3k-1$}. 
		\end{align}
		Election $E$ contains the following voters:
		\begin{enumerate}
			\item For each edge $e = \{u_i,v_j\}$ of the input graph, there
			is an \emph{edge voter} $v_e$ with preference order
			$v_e \colon d \pref u_i \pref v_j \pref p \pref D(p) \pref \cdots.$
			
			\item There is a group of voters who ensure that the scores are as
			described above. They rank at least $3k$ dummy candidates between
			each two relevant ones (so shifting $d$ by $3k$ positions back
			cannot change the score of another relevant candidate).

                        More precisely, we have a group of \emph{score
                          voters}, who ensure that
                        conditions~\eqref{eq:1}--\eqref{eq:5}
                        hold. Let $\pi$ be the following preference
                        order, giving the basic pattern for forming
                        all score voters:
			\begin{align*}
			\pi \colon & D(u_1) \pref u_1 \pref \cdots  \pref D(u_n) \pref u_n 
			\pref D(p) \pref p \pref \\
			& D(v_1) \pref v_1 \pref \cdots \pref D(v_n) \pref v_n \pref D(d) 
			\pref d.
			\end{align*}
			The crucial feature of $\pi$ is that relevant candidates are
			separated from each other with at least $3k+1$ dummy ones; their
			particular order is not important.  For each relevant candidate
			$c \in R$, let $\increase(c)$ be a pair of preference orders where
			one is equal to reversed~$\pi$, and the other is identical
			to~$\pi$ except that $c$ is shifted one position forward. By
			adding a pair of voters with preference orders $\increase(c)$ to
			the election, we increase the score of~$c$ by~$|C|$, 
			the scores of the other relevant candidates by $|C|-1$,
			and 
			the scores of 
			the dummy candidates by at most $|C|-1$. For each relevant
			candidate $c \in R$, we add polynomially many such pairs of
			voters---the polynomial is with respect to $n$ and~$k$---as follows:
                        \begin{enumerate}
                        \item
                        For each relevant candidate $c \in R$
			we add the smallest number of pairs of voters $\increase(c)$ so
			that, taken together with the edge voters, all relevant candidates
			have identical scores and each relevant candidate has a higher
			score than each dummy candidate.
                      \item
                        For each relevant
			candidate $c \in R$, we add $3k+1$ pairs of voters $\increase(c)$
			(this ensures that, in total, each relevant candidate has at least
			$3k$ points more than each dummy candidate).
                      \item
                        We add
			$4k$~pairs of voters $\increase(d)$, one pair of voters
			$\increase(p)$, and for each $i \in [n]$, we add $k-1$~pairs of
			voters $\increase(u_i)$ and $k-1$ pairs of voters $\increase(v_i)$
			(this ensures that the scores of all relevant candidates are as
			promised).
                        \end{enumerate}
                        This ensures that the scores satsify conditions~\eqref{eq:1}--\eqref{eq:5}.
		\end{enumerate}
		Next, we form an election~$F$ identical to~$E$, except that one of
		the edge voters ranks $p$ one position lower
		(so that $p$'s score
		in~$F$ is $X-k$).  Let $I_E$ and $I_F$ be instances of \textsc{Borda
			\#Destructive Shift-Bribery} with designated candidate $d$, shift
		radius~$3k$, unit prices, and elections $E$ and~$F$,
		respectively. Our reduction queries the oracle for the numbers of
		solutions for $I_E$ and~$I_F$, subtracts the latter from the former,
		and outputs this value. We claim that it is exactly the number of
		size-$k$ matchings in~$G$.
		
		To see why this is the case, consider some solution for~$I_E$. There
		are two possibilities: Either $d$ passes some member of
		$U(G) \uplus V(G)$ twice (in which case this candidate gets at least
		$X+1$~points, whereas $d$~always gets exactly $X$~points), or
		$d$~passes each member of $U(G) \uplus V(G)$ at most once. In the
		latter case, only~$p$ can defeat~$d$ (all the other candidates have
		at most $X$ points). However, for this to happen, $d$ must pass $p$
		exactly $k$ times (with the shift radius of $3k$, $d$ cannot pass
		$p$ more times). Further, since we assumed that $d$ never passes a
		member of $U(G) \uplus V(G)$ more than once, the votes where $d$
		passes $p$ must correspond to a size-$k$ matching in~$G$.  We refer
		to such solutions as \emph{matching solutions}.
		
		The set of solutions for~$I_F$ contains all the solutions for~$I_E$
		except for the matching ones (because in $I_F$, $p$~ends up with
		$X$~points and not $X+1$). So, by subtracting the number of
		solutions for~$I_F$ from the number of solutions for~$I_E$, we get
		exactly the number of size-$k$ matchings in~$G$.
	\end{proof}

	For \textsc{Borda \#Swap-Bribery}, we obtain
        $\sharpp$-hardness and $\sharpwone$-hardness for the
        parameterization by the number of voters by noting that the
        proofs for \textsc{\#Shift-Bribery} still apply in this case
        (regarding $\sharpp$-hardness,
        \citet{bau-hog:c:counting-swap-bribery-evaluation} also report
        a related result). The parameterization by the swap radius
        remains open, though (the proof of
        Theorem~\ref{thm:borda-shift-radius} does not work as many
        new, hard to control, solutions appear).
	
	\begin{corollary}
		\textsc{Borda \#Swap-Bribery} is $\sharpp$-hard and
		$\sharpwone$-hard for the parameterization by the number of voters,
		even for the case of unit prices.
	\end{corollary}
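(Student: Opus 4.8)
The plan is to derive the corollary directly from Theorem~\ref{thm:borda-shift-voters}. The constructive variant of \textsc{Borda Shift-Bribery}---making the designated candidate~$p$ a winner by shifting~$p$ forward---is exactly the special case of \textsc{Borda Swap-Bribery} in which the only permitted swaps move~$p$ forward. I would therefore take the instances produced by the reduction behind the constructive half of Theorem~\ref{thm:borda-shift-voters} and reinterpret them, unchanged, as instances of \textsc{Borda \#Swap-Bribery}: the same election, unit prices, and radius~$r$. As the candidates and voters are untouched, the parameter (the number of voters) is preserved, so an equality of solution counts would transfer the $\sharpp$-hardness and the $\sharpwone$-hardness simultaneously.

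What must be checked is that allowing arbitrary swaps, rather than only forward shifts of~$p$, introduces no new solutions. One direction is free: every forward shift of~$p$ is a legal swap, so each counted \textsc{\#Shift-Bribery} solution remains a \textsc{\#Swap-Bribery} solution, and the swap count is at least the shift count. For the converse I would establish a rigidity claim---in this construction, every election at swap distance exactly~$r$ in which~$p$ wins is already obtained by moving~$p$ alone, leaving every other candidate in its original relative position within each vote. The argument is score bookkeeping in Borda: a swap of two candidates other than~$p$ raises one competitor's score by one and lowers another's by one, and the score gaps together with the dummy padding are arranged so that any such change either lifts a competitor above the winning threshold or simply wastes part of the budget~$r$ without helping~$p$. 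In either case the resulting election cannot sit at distance exactly~$r$ and still have~$p$ win, so the two solution sets coincide and the counts are equal.

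The heart of the proof, and its main obstacle, is this rigidity claim, which is precisely where the choice of parameter is decisive. For the swap-radius parameterization the same idea breaks down---as remarked after Theorem~\ref{thm:borda-shift-radius}, a small radius combined with many voters spawns many spurious solutions that are hard to control---so the argument must exploit what is special about the voter-bounded construction: with only boundedly many voters and tightly fixed scores, there is no slack to reproduce the needed score change by permuting candidates other than~$p$. If working with the constructive variant proves awkward, I would instead seed the reduction from the destructive half of Theorem~\ref{thm:borda-shift-voters}, appealing to the remark in Section~\ref{sec:prelim} that counting elections in which~$p$ loses is interreducible with counting those in which~$p$ wins; the rigidity analysis is then the same, run for backward shifts of~$p$.
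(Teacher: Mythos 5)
Your high-level plan (seed the reduction from the voters-parameterized \textsc{\#Shift-Bribery} results) is the same as the paper's, which justifies the corollary in one line by saying the \textsc{\#Shift-Bribery} proofs ``still apply.'' However, your central lemma---the rigidity claim that every election at swap distance exactly~$r$ in which the designated candidate wins (resp.\ loses) arises from moving that candidate alone---is false, and the argument you give for it contains a non sequitur: a swap that ``wastes part of the budget without helping~$p$'' still produces a perfectly valid election at swap distance exactly~$r$; it does not make the election ineligible for counting. Concretely, take the destructive construction in the proof of Theorem~\ref{thm:borda-shift-voters} (your fallback, and the only one of the two constructions spelled out in the paper), where $d$ has score $L+r$ and every other candidate has score at most~$L$. (i)~Any $r$ swaps that avoid $d$ entirely---and there is ample room for them, e.g., inside the $H(\cdot)$-blocks of the edge voters---leave $d$ at score $L+r$ while no other candidate can exceed $L+r$, so $d$ still wins; this is an element of $R(E,r)$ corresponding to no backward shift of~$d$ and to no independent set. (ii)~Conversely, one direct $d$--$b$ swap in $e'(1)$ followed by $r-1$ swaps pushing $b$ further upward (past the $H$- and $B$-candidates above it) leaves $b$ with score $L+r$ against $d$'s $L+r-1$, so $d$ loses; again this moves $b$, not only $d$. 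Hence under either of the two counting conventions that the paper declares interchangeable (counting wins or counting losses), the \textsc{\#Swap-Bribery} count strictly exceeds the \textsc{\#Shift-Bribery} count, and the oracle answers cannot simply be passed through as you propose.

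The same objection applies to the constructive seed (one can lower $p$'s rivals, or mix direct $p$--rival swaps with rival-lowering swaps, so that $p$ is not the only candidate moved), so switching variants does not close the gap. Your supporting intuition is also backwards: having boundedly many voters does not remove slack, because in these constructions the radius $r$ is polynomial in $n$ and the votes are long, leaving enormous room for swaps that do not involve the designated candidate. What the corollary actually requires---and what the paper's terse justification stands for---is a re-analysis of the voters-parameterized constructions under arbitrary swaps, i.e., an argument that the newly arising solutions can be characterized, counted, or made to cancel (for instance via additional or paired oracle calls), rather than an assertion that they do not exist. That this is the genuine crux, and not a formality, is exactly the content of the paper's remark that the analogous attempt fails for the swap-radius parameterization of Theorem~\ref{thm:borda-shift-radius} because ``many new, hard to control, solutions appear.''
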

	
	\section{Experiments}\label{se:exp}
	In the following, we use \textsc{\#Swap-Bribery} to analyze the
	robustness of election winners experimentally.  For clarity, in this
	section we use normalized swap distances, which specify the fraction
	of all possible swaps in a given election.\bigskip
	
	\noindent\textbf{Setup.}\quad
	We used a dataset of
	$800$ elections, each with $10$ candidates and $100$ voters, prepared
	by \citet{szu-fal-sko-sli-tal:c:map-of-elections}; in the appendix we
	also show results for different election sizes.\footnote{%
		Datasets with various numbers of candidates are available at
		Szufa et al.'s website \texttt{https://mapel.readthedocs.io/}.}
	This dataset contains
	elections generated from various statistical cultures, of which for us
	the most relevant are the following ones (we provide intuitions only;
	for details we point, e.g., to the work of
	\citet{szu-fal-sko-sli-tal:c:map-of-elections}):
	\begin{enumerate}
		\item The impartial culture model (IC), where each election consists
		of preference orders chosen uniformly at random.
		
		\item The urn model, with the parameter of contagion
		$\alpha \in \mathbb{R_+}$, where for $\alpha=0$ the model is
		equivalent to IC, but as $\alpha$~grows, larger and larger  groups of 
		identical
		votes become more probable.
		
		\item The Mallows model, with dispersion parameter $\phi \in [0,1]$,
		where the votes are generated by perturbing a given central one; for
		$\phi = 0$ only the central vote appears, and for
		$\phi = 1$ the model is equivalent to IC.
		
		\item The $t$D-Cube/Sphere 
		models, where the candidates and voters are
		points 
		in a $t$-dimensional hypercube/sphere and the voters rank the
		candidates by distance (we refer to 1D/2D-Cube elections as
		1D-Interval/2D-Square ones).
		
	\end{enumerate} 
	
	\noindent
	\citet{szu-fal-sko-sli-tal:c:map-of-elections} present their elections
	as a \emph{map} (see \Cref{fig:mapmap}). Note that the map contains
	elections from a number of distributions beyond those mentioned above
	(for details we point to their original work or to the appendix); ``IC
	and similar'' refers to IC elections, Mallows elections with
	$\phi$~values close to~$1$, and a few other elections that are similar
	to IC under the metric of
	\citet{szu-fal-sko-sli-tal:c:map-of-elections}. Later we will use the
	map to present our results and to identify some patterns.  \bigskip

	\begin{figure*}[t]
		\centering
			\includegraphics[width=0.55\textwidth]{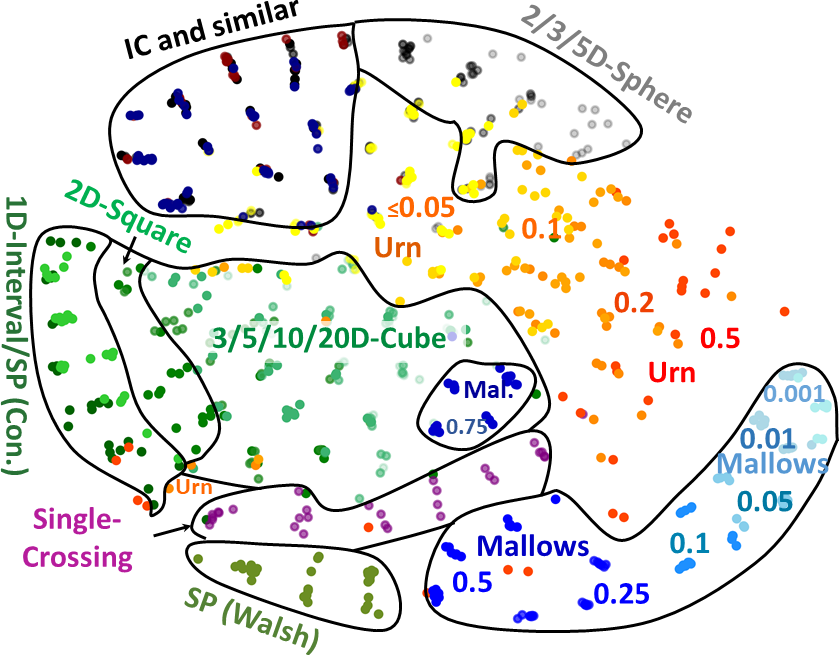}
			\caption{The map of elections, due to
				\citet{szu-fal-sko-sli-tal:c:map-of-elections}. Each point
				corresponds to an election and its color gives the model from
				which it came. Generally, the closer two points are, the more
				similar are the corresponding elections in their metric.}
			\label{fig:mapmap}
                    \end{figure*}

       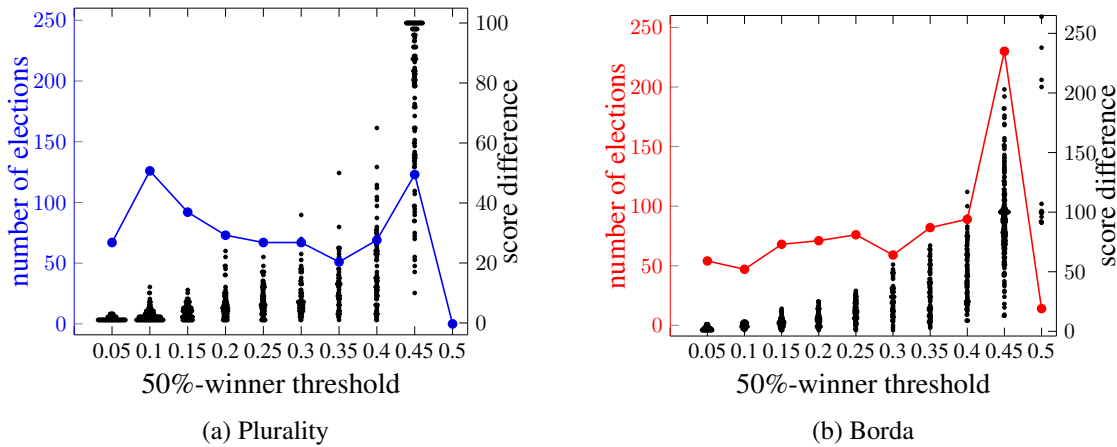
\begin{figure*}
			\centering 
			\begin{subfigure}[t]{0.47\textwidth}
				\centering            
				\resizebox{\textwidth}{!}{\begin{tikzpicture}

\begin{axis}[
axis y line*=right, label style={font=\Large},
xlabel=50\%-winner threshold,
ylabel=score difference,
x grid style={white!69.0196078431373!black},
xmin=0, xmax=0.52,
xtick style={color=black},
xtick={0,0.05,0.1,0.15,0.2,0.25,0.3,0.35,0.4,0.45,0.5},
xticklabels={,0.05,0.1,0.15,0.2,0.25,0.3,0.35,0.4,0.45,0.5},
y grid style={white!69.0196078431373!black},
ymin=-3.95, ymax=104.95,
ytick style={color=black},ylabel shift = -13pt
]
\addplot [only marks, mark=*, draw=black, fill=black, 
colormap/viridis,mark size=0.9pt]
table{%
x                      y
0.05 2
0.2 7
0.1 2
0.15 3
0.25 5
0.1 1
0.1 5
0.0992 2
0.0992 1
0.1492 3
0.1008 2
0.05 1
0.25 7
0.15 2
0.1008 1
0.1 3
0.0492 1
0.2 5
0.0984 1
0.1016 1
0.0976 1
0.1024 1
0.2 6
0.0508 1
0.0484 1
0.15 1
0.0516 1
0.0968 1
0.15 5
0.3 8
0.25 10
0.1032 1
0.1992 5
0.0984 2
0.2492 5
0.3 1
0.1492 2
0.1508 3
0.15 7
0.25 6
0.1 4
0.2 3
0.0476 1
0.0524 1
0.2008 5
0.2 10
0.4 24
0.15 4
0.1992 7
0.2 4
0.15 10
0.0992 4
0.2 8
0.15 6
0.096 1
0.3 9
0.104 1
0.1992 10
0.2 2
0.15 8
0.35 15
0.3 18
0.25 17
0.1492 1
0.1508 1
0.2 24
0.0992 3
0.25 12
0.25 4
0.2492 6
0.1492 4
0.0492 2
0.1492 7
0.0508 2
0.1992 3
0.2 12
0.2 16
0.1 10
0.1484 1
0.1008 3
0.1508 2
0.35 6
0.1492 5
0.35 3
0.1992 16
0.2008 3
0.0468 1
0.1016 2
0.1516 1
0.4 1
0.05 3
0.3 6
0.0992 5
0.2 1
0.4 8
0.35 13
0.1008 4
0.25 8
0.0484 2
0.0532 1
0.2 11
0.046 1
0.1 7
0.2008 10
0.4 11
0.15 11
0.0952 1
0.0976 2
0.1048 1
0.3492 13
0.2 9
0.3992 24
0.1508 5
0.3 5
0.2508 6
0.0984 4
0.1984 3
0.25 3
0.1476 1
0.1508 4
0.3492 6
0.3 7
0.35 9
0.1992 1
0.1992 6
0.45 10
0.0944 1
0.3 10
0.4 2
0.0984 3
0.0492 3
0.1492 6
0.2992 6
0.3 14
0.3492 3
0.4 9
0.1056 1
0.2492 12
0.0508 3
0.1992 4
0.4 3
0.3 2
0.3992 9
0.3992 8
0.35 4
0.2992 9
0.054 1
0.4 15
0.4 10
0.3992 1
0.1484 5
0.3 3
0.3992 10
0.1024 2
0.0936 1
0.0968 2
0.3992 2
0.2992 1
0.4 12
0.3508 3
0.4 6
0.35 5
0.35 8
0.3008 6
0.1984 10
0.0452 1
0.1064 1
0.2008 1
0.4 17
0.4 23
0.0516 2
0.2492 4
0.3992 11
0.1508 6
0.0928 1
0.3492 8
0.0548 1
0.3008 9
0.45 38
0.3992 12
0.35 7
0.4 4
0.1072 1
0.1 6
0.35 1
0.2484 6
0.3992 6
0.45 22
0.3508 13
0.4008 12
0.4008 6
0.1984 5
0.2492 8
0.1032 2
0.4 5
0.1508 7
0.1992 2
0.3 13
0.1484 7
0.1992 9
0.1524 1
0.25 14
0.1484 2
0.0444 1
0.2008 6
0.2992 13
0.2992 7
0.1516 2
0.2492 7
0.1476 2
0.092 1
0.2008 4
0.108 1
0.1984 4
0.2016 5
0.1484 4
0.0556 1
0.25 9
0.2508 12
0.2492 3
0.0912 1
0.1484 6
0.1088 1
0.3 4
0.3 19
0.1524 2
0.0436 1
0.096 2
0.1516 5
0.1476 5
0.2008 2
0.3008 7
0.1468 2
0.1524 5
0.2492 9
0.0564 1
0.1516 6
0.104 2
0.0904 1
0.25 1
0.0952 2
0.1984 1
0.25 13
0.1532 2
0.2008 7
0.146 2
0.1984 2
0.1048 2
0.0944 2
0.1984 6
0.2016 6
0.2508 8
0.2016 4
0.0428 1
0.1016 3
0.1056 2
0.2016 1
0.1516 4
0.1468 1
0.1476 4
0.0936 2
0.2016 3
0.2484 8
0.1096 1
0.1064 2
0.0928 2
0.1016 4
0.2492 13
0.1976 6
0.154 2
0.1484 3
0.0896 1
0.0572 1
0.042 1
0.1468 5
0.1516 7
0.1516 3
0.1476 3
0.2508 5
0.1072 2
0.0976 3
0.092 2
0.1532 5
0.0976 4
0.146 5
0.058 1
0.1452 2
0.1024 3
0.154 5
0.0476 2
0.1104 1
0.1548 2
0.1976 5
0.1524 4
0.0524 2
0.0968 3
0.1024 4
0.1468 4
0.0412 1
0.4008 9
0.2 18
0.1492 8
0.4 14
0.3992 3
0.2984 7
0.45 34
0.2492 17
0.45 50
0.2484 5
0.0992 7
0.45 19
0.2992 18
0.0588 1
0.0404 1
0.0992 10
0.35 23
0.35 16
0.108 2
0.2 19
0.4008 11
0.0468 2
0.1008 7
0.1984 7
0.3008 18
0.1008 5
0.2508 9
0.4 31
0.2992 5
0.1032 3
0.096 3
0.0968 4
0.1032 4
0.0596 1
0.0912 2
0.096 4
0.0888 1
0.1088 2
0.1112 1
0.0396 1
0.0604 1
0.2508 3
0.104 3
0.0388 1
0.0904 2
0.1096 2
0.0612 1
0.038 1
0.088 1
0.1976 4
0.112 1
0.0872 1
0.2992 2
0.2992 4
0.3008 5
0.3016 7
0.3008 2
0.2976 7
0.1444 2
0.3 11
0.0484 3
0.1476 7
0.3024 7
0.2508 4
0.3008 4
0.2992 10
0.3008 10
0.0952 3
0.35 10
0.35 12
0.2984 6
0.25 2
0.3492 16
0.2484 3
0.3492 10
0.1128 1
0.1976 1
0.2984 4
0.1556 2
0.3484 3
0.35 22
0.4 65
0.45 81
0.4008 24
0.3 23
0.2484 4
0.3 21
0.45 56
0.0516 3
0.4 45
0.2024 4
0.2 22
0.45 28
0.2484 12
0.3008 13
0.45 49
0.45 80
0.4 25
0.1532 4
0.1 12
0.25 16
0.35 25
0.4 52
0.1532 1
0.35 50
0.3 36
0.45 88
0.1976 3
0.45 17
0.3 17
0.1 8
0.0896 2
0.3984 12
0.45 29
0.2508 17
0.45 39
0.0476 3
0.45 61
0.4 46
0.062 1
0.3984 24
0.3992 31
0.2492 1
0.2984 5
0.35 18
0.1992 19
0.3992 14
0.35 32
0.25 22
0.45 30
0.0992 6
0.1104 2
0.1452 5
0.4 39
0.4008 14
0.4492 56
0.3992 15
0.4492 34
0.3508 8
0.146 1
0.2484 17
0.3016 4
0.2992 3
0.1992 8
0.3016 6
0.2516 5
0.3484 13
0.2024 1
0.3008 3
0.3 28
0.2508 13
0.3516 13
0.25 11
0.1008 6
0.3984 6
0.4 41
0.3476 13
0.2992 17
0.35 24
0.2992 14
0.4 26
0.3492 25
0.2016 7
0.35 17
0.146 4
0.35 11
0.2008 8
0.3508 25
0.3508 6
0.0532 2
0.1048 3
0.2976 4
0.4 42
0.3008 17
0.3 15
0.25 19
0.1436 2
0.3492 18
0.2492 11
0.3492 15
0.2508 1
0.2968 7
0.0888 2
0.2484 1
0.2516 1
0.046 2
0.3016 5
0.4008 15
0.0984 5
0.2516 12
0.2024 6
0.4008 3
0.4 21
0.3508 15
0.2476 5
0.054 2
0.2516 4
0.2484 9
0.154 1
0.35 14
0.0372 1
0.2492 10
0.0944 3
0.1524 3
0.2492 14
0.154 4
0.2516 6
0.0628 1
0.1112 2
0.0364 1
0.0452 2
0.35 2
0.2008 9
0.3032 7
0.088 2
0.2476 1
0.2016 2
0.1468 3
0.1564 2
0.1976 2
0.1984 9
0.1452 1
0.104 4
0.1428 2
0.1056 3
0.0636 1
0.112 2
0.1524 7
0.0356 1
0.2516 3
0.1968 4
0.1548 5
0.2024 5
0.0872 2
0.2024 2
0.1968 5
0.2976 5
0.0864 1
0.1548 1
0.1128 2
0.0548 2
0.0864 2
0.1136 1
0.0936 3
0.0952 4
0.2492 2
0.1444 5
0.1968 1
0.0644 1
0.0856 1
0.2032 4
0.1532 3
0.1136 2
0.2032 5
0.2476 6
0.0348 1
0.0444 2
0.1144 1
0.0848 1
0.1452 4
0.1048 4
0.0856 2
0.0652 1
0.196 5
0.1016 5
0.1152 1
0.034 1
0.1548 4
0.1444 4
0.084 1
0.066 1
0.0332 1
0.0668 1
0.116 1
0.1556 4
0.146 3
0.0832 1
0.2508 7
0.1508 8
0.1144 2
0.2524 1
0.0324 1
0.2524 6
0.2476 3
0.1444 1
0.1168 1
0.2024 3
0.1556 1
0.3024 4
0.1436 1
0.0676 1
0.0316 1
0.0824 1
0.0556 2
0.1564 1
0.1176 1
0.1436 4
0.2484 7
0.3492 14
0.3492 4
0.3492 9
0.3508 4
0.2468 6
0.3992 4
0.3492 12
0.3024 5
0.2992 11
0.3508 9
0.296 7
0.2984 3
0.2984 2
0.3492 2
0.2976 6
0.3984 3
0.3024 6
0.3484 9
0.4 29
0.4 13
0.4 35
0.3992 23
0.45 37
0.4 22
0.4 32
0.3984 15
0.4 16
0.4 30
0.3992 21
0.4008 23
0.3992 26
0.3984 23
0.4492 39
0.45 36
0.4 19
0.45 21
0.4 28
0.3992 30
0.45 55
0.45 65
0.45 52
0.45 64
0.45 69
0.45 54
0.45 57
0.45 60
0.45 43
0.45 44
0.4492 65
0.4492 54
0.4492 60
0.4508 65
0.4492 55
0.45 59
0.4508 56
0.4508 55
0.4492 52
0.45 53
0.45 79
0.4492 81
0.45 72
0.45 75
0.45 73
0.45 85
0.4492 79
0.4492 72
0.4508 81
0.45 84
0.45 83
0.4484 81
0.45 92
0.45 82
0.4492 84
0.4508 79
0.4492 83
0.4508 72
0.4492 82
0.4508 83
0.45 94
0.4492 88
0.45 96
0.4492 80
0.4508 88
0.4484 88
0.4492 92
0.4516 88
0.45 89
0.4508 84
0.4484 84
0.4476 88
0.4508 92
0.45 87
0.4492 89
0.4492 96
0.45 95
0.4484 92
0.4508 96
0.4508 89
0.45 100
0.4492 100
0.4508 100
0.45 98
0.4484 96
0.4492 98
0.4508 98
0.4484 100
0.4516 100
0.4476 100
0.4524 100
0.4484 98
0.4516 98
0.4468 100
0.4532 100
0.4516 92
0.4492 94
0.4476 98
0.446 100
0.454 100
0.4452 100
0.4524 98
0.4548 100
0.4444 100
0.4556 100
0.4436 100
0.4564 100
0.4428 100
0.4468 98
0.4532 98
0.4572 100
0.442 100
0.458 100
0.4412 100
0.4588 100
0.4404 100
0.4596 100
0.4396 100
0.4604 100
0.4388 100
};
\end{axis}
\begin{axis}[
color=blue,  label style={font=\Large},
separate axis lines,
  axis y line*=left,
  axis x line=none,
  ylabel=number of elections,
  ymin=-9, ymax=260,
  xmin=0, xmax=0.52,ylabel shift = -5pt,every axis 
plot/.append style={thick}
]
\addplot coordinates{
	(0.05, 67) (0.1, 126) (0.15, 92) 
(0.2, 73) (0.25, 67) (0.3, 67) (0.35, 51)
(0.4, 69) (0.45, 123) (0.5, 0)
};
\end{axis}

\end{tikzpicture}}
				\caption{Plurality}
				\label{fig:distplur}
			\end{subfigure}
			\hfill
			\begin{subfigure}[t]{0.47\textwidth}
				\centering
				\resizebox{\textwidth}{!}{\begin{tikzpicture}

\begin{axis}[
axis y line*=right, label style={font=\Large},
xlabel=50\%-winner threshold,
ylabel=score difference,
x grid style={white!69.0196078431373!black},
xmin=0, xmax=0.52,
xtick style={color=black},
xtick={0,0.05,0.1,0.15,0.2,0.25,0.3,0.35,0.4,0.45,0.5},
xticklabels={,0.05,0.1,0.15,0.2,0.25,0.3,0.35,0.4,0.45,0.5},
y grid style={white!69.0196078431373!black},
ymin=-3.95, ymax=265,
ytick style={color=black},ylabel shift = -8pt
]
\addplot [only marks, mark=*, draw=black, fill=black, 
colormap/viridis,mark size=0.9pt]
table{%
x                      y
0.1 5
0.2 21
0.15 7
0.25 12
0.35 20
0.15 3
0.25 30
0.3 49
0.3 13
0.1 7
0.2 9
0.15 15
0.25 31
0.2 11
0.15 10
0.25 6
0.3 56
0.3 38
0.2 19
0.25 33
0.1 6
0.1 8
0.1 3
0.0992 7
0.25 18
0.3 37
0.15 9
0.0992 8
0.1492 10
0.1992 11
0.4 25
0.05 1
0.1492 7
0.1992 9
0.15 2
0.3 15
0.2008 9
0.2 3
0.1508 7
0.25 11
0.25 20
0.35 32
0.05 4
0.2492 11
0.25 16
0.25 19
0.2 5
0.1 4
0.45 83
0.2508 11
0.35 29
0.15 6
0.15 5
0.3 24
0.3 12
0.0992 4
0.2 8
0.1492 5
0.25 9
0.2 10
0.4 42
0.35 11
0.35 19
0.35 9
0.0492 1
0.3 21
0.4 64
0.3 17
0.3 19
0.4 47
0.35 30
0.2484 11
0.4 18
0.3 10
0.35 41
0.4492 83
0.35 25
0.4 19
0.2 4
0.35 37
0.4 50
0.45 123
0.4 23
0.15 1
0.4 21
0.4 32
0.45 76
0.1484 7
0.45 74
0.1992 10
0.35 40
0.25 2
0.45 126
0.45 80
0.35 44
0.4 76
0.4 96
0.45 110
0.4492 74
0.45 146
0.2992 24
0.45 135
0.4 49
0.3992 23
0.4 39
0.4 34
0.45 55
0.35 22
0.25 17
0.3492 44
0.45 160
0.3492 41
0.4 53
0.3992 25
0.4492 80
0.4 40
0.35 39
0.3 9
0.3 29
0.2492 17
0.45 145
0.45 66
0.35 8
0.4 33
0.4 68
0.35 13
0.3 14
0.35 26
0.45 101
0.45 73
0.2 13
0.4 83
0.45 173
0.4 60
0.25 5
0.45 162
0.4 48
0.4 38
0.45 93
0.4 55
0.3992 39
0.4492 76
0.3492 29
0.1492 2
0.45 94
0.45 91
0.3992 55
0.45 58
0.45 138
0.45 125
0.4 56
0.45 119
0.4 88
0.4508 83
0.4492 119
0.45 113
0.3 52
0.35 50
0.2992 17
0.4492 110
0.45 79
0.45 61
0.35 27
0.35 21
0.45 102
0.3492 32
0.4 59
0.35 3
0.25 23
0.45 69
0.4508 74
0.45 197
0.3992 47
0.45 38
0.3992 18
0.45 33
0.35 18
0.1492 9
0.2992 14
0.2516 11
0.4008 25
0.45 116
0.45 153
0.3508 41
0.35 49
0.35 33
0.45 75
0.1992 3
0.4 24
0.4 105
0.45 156
0.3492 20
0.3008 17
0.5 264
0.4492 126
0.3992 42
0.45 105
0.4 14
0.45 77
0.45 103
0.35 5
0.45 72
0.45 187
0.4492 69
0.45 136
0.45 62
0.2492 2
0.2492 16
0.0508 1
0.45 115
0.2008 11
0.1492 1
0.05 3
0.0492 3
0.1008 4
0.0992 5
0.0992 6
0.15 8
0.05 2
0.1008 7
0.15 16
0.05 5
0.0492 4
0.0984 7
0.1516 7
0.1492 8
0.0492 2
0.1 1
0.1008 5
0.0508 3
0.0992 3
0.0484 3
0.0508 2
0.1 2
0.1008 8
0.0516 3
0.0984 4
0.2 14
0.1016 4
0.25 24
0.0984 8
0.25 15
0.15 11
0.1492 6
0.1508 10
0.25 26
0.2 18
0.1992 19
0.25 32
0.0484 1
0.2508 16
0.1508 9
0.0508 4
0.15 17
0.25 7
0.1508 5
0.2 15
0.1508 8
0.0484 2
0.2 17
0.0992 2
0.05 6
0.2 23
0.1484 8
0.0484 4
0.1984 11
0.0476 3
0.1016 7
0.2492 20
0.0516 4
0.0516 1
0.1984 9
0.25 29
0.0476 1
0.25 22
0.0516 2
0.1992 5
0.25 34
0.0976 4
0.2008 10
0.2 20
0.1008 2
0.1024 4
0.0476 2
0.0524 2
0.1484 10
0.0468 2
0.0524 3
0.2492 23
0.1992 8
0.2492 18
0.2 25
0.0984 5
0.0532 2
0.1992 25
0.1992 15
0.2016 11
0.0492 6
0.1508 2
0.0468 3
0.2 7
0.0524 1
0.0468 1
0.1508 6
0.0508 6
0.1516 8
0.0976 7
0.0984 2
0.046 2
0.2508 23
0.1476 7
0.1016 5
0.1484 2
0.1008 6
0.1992 7
0.0532 1
0.046 1
0.15 4
0.054 2
0.1016 8
0.1024 7
0.1492 3
0.0968 4
0.15 19
0.15 12
0.45 92
0.3992 40
0.4492 156
0.45 97
0.45 137
0.0452 2
0.45 203
0.2492 5
0.4492 187
0.4 29
0.4508 110
0.45 142
0.4 9
0.4 72
0.45 107
0.45 96
0.4492 73
0.3992 32
0.4008 23
0.45 114
0.45 56
0.4 43
0.45 48
0.3992 14
0.45 111
0.4508 76
0.4492 102
0.4492 66
0.4492 62
0.3492 19
0.1476 8
0.2008 3
0.0532 3
0.1484 5
0.1984 10
0.1516 5
0.2508 20
0.054 1
0.1484 6
0.2492 15
0.1476 5
0.2008 5
0.1508 1
0.2476 11
0.2492 6
0.3492 33
0.1016 2
0.1524 7
0.1524 5
0.1992 4
0.2508 15
0.2008 4
0.2008 8
0.2 12
0.25 10
0.1492 4
0.1984 3
0.2484 16
0.1984 5
0.2 1
0.3508 19
0.35 12
0.3492 8
0.1032 4
0.3492 13
0.3508 13
0.3492 12
0.35 7
0.3 8
0.3992 19
0.3484 13
0.2992 9
0.3984 25
0.1984 4
0.3492 18
0.4 17
0.4 30
0.3 3
0.2508 6
0.45 29
0.3 1
0.4016 25
0.3492 9
0.35 10
0.3508 8
0.35 15
0.4 28
0.3508 12
0.45 134
0.0548 2
0.45 78
0.45 133
0.1984 8
0.3984 23
0.3492 50
0.45 34
0.45 181
0.4492 75
0.45 70
0.45 20
0.45 104
0.45 143
0.4 63
0.5 211
0.45 82
0.4484 76
0.45 13
0.4492 101
0.45 47
0.4 65
0.4492 13
0.4492 115
0.4492 56
0.45 148
0.5 238
0.45 23
0.5 205
0.4 78
0.3992 65
0.4 46
0.4508 80
0.45 130
0.45 95
0.4492 103
0.45 117
0.2 6
0.45 174
0.35 4
0.4492 70
0.45 14
0.4508 115
0.2992 19
0.2492 7
0.0452 1
0.35 56
0.45 99
0.3508 18
0.45 140
0.45 179
0.3 18
0.1492 12
0.4 52
0.4492 82
0.4508 75
0.45 87
0.45 106
0.4 74
0.35 63
0.35 35
0.4 41
0.3992 9
0.3 16
0.2016 9
0.3 36
0.0968 7
0.3992 64
0.3 20
0.45 65
0.25 1
0.35 52
0.4 79
0.3508 9
0.4 80
0.4508 73
0.3 5
0.45 129
0.3508 29
0.2992 18
0.2992 36
0.4 37
0.0444 2
0.4 69
0.3008 24
0.2992 29
0.3992 41
0.2992 12
0.4008 64
0.4 92
0.4 86
0.1032 7
0.4 66
0.45 86
0.2 16
0.3508 20
0.35 72
0.3492 56
0.3484 41
0.35 14
0.1992 16
0.4 62
0.35 59
0.2492 29
0.3992 69
0.0556 2
0.3492 39
0.35 69
0.3 25
0.1992 13
0.45 155
0.1508 4
0.2008 7
0.3008 29
0.2016 10
0.45 59
0.4484 80
0.3492 63
0.3 45
0.1976 9
0.25 3
0.25 13
0.2992 25
0.0548 1
0.3992 72
0.3 26
0.2492 30
0.2992 49
0.1008 3
0.3492 69
0.35 42
0.2516 16
0.2984 29
0.2984 24
0.1992 12
0.0444 1
0.2492 12
0.4 117
0.1508 12
0.2508 29
0.1992 20
0.35 53
0.3 33
0.3008 12
0.2992 33
0.2492 24
0.3008 25
0.35 60
0.3 51
0.2016 4
0.2508 18
0.0984 3
0.3508 56
0.2016 8
0.1516 6
0.3016 29
0.2976 29
0.2508 17
0.1492 11
0.25 25
0.3992 46
0.2508 24
0.3008 33
0.1992 21
0.2484 24
0.3992 80
0.2008 13
0.0484 6
0.1508 11
0.1484 11
0.0976 2
0.1484 4
0.3008 49
0.35 48
0.25 27
0.4008 80
0.2492 27
0.0556 1
0.3 34
0.2492 31
0.1984 13
0.2 24
0.1516 4
0.1976 4
0.2484 20
0.2492 13
0.2008 19
0.1976 10
0.1524 8
0.2508 7
0.096 4
0.0436 1
0.2492 26
0.1992 14
0.1468 8
0.3 22
0.0976 5
0.1476 4
0.3 47
0.0564 1
0.1492 17
0.15 14
0.1984 19
0.1024 5
0.0476 4
0.1976 11
0.3008 36
0.2492 22
0.3492 37
0.2484 15
0.2508 26
0.1468 7
0.25 28
0.1516 10
0.2492 9
0.35 68
0.2492 25
0.2508 13
0.104 4
0.35 65
0.2024 4
0.2484 17
0.2992 34
0.1532 8
0.0968 5
0.1524 4
0.2008 12
0.1532 7
0.35 36
0.2516 17
0.146 7
0.2484 18
0.046 3
0.3 44
0.1984 12
0.0428 1
0.2984 36
0.4 26
0.4 51
0.4492 87
0.4 22
0.3484 18
0.3492 7
0.4492 79
0.3484 8
0.3492 14
0.3484 19
0.35 16
0.4 77
0.4 27
0.35 34
0.4008 9
0.3992 60
0.4008 46
0.4492 99
0.4008 47
0.45 42
0.4484 83
0.45 49
0.45 71
0.4492 58
0.45 124
0.45 88
0.4484 73
0.4492 47
0.45 53
0.4492 92
0.4508 69
0.4508 66
0.3992 37
0.4508 56
0.45 84
0.45 85
0.45 40
0.4492 48
0.4516 73
0.4492 77
0.4508 77
0.5 92
0.4492 65
0.45 100
0.45 109
0.4508 70
0.4516 76
0.4508 92
0.4508 62
0.4492 71
0.4484 92
0.45 67
0.4508 82
0.45 90
0.4516 80
0.4492 85
0.4484 82
0.4492 78
0.4492 84
0.4508 85
0.4484 85
0.4492 91
0.4508 78
0.4492 86
0.4508 79
0.5 96
0.4492 95
0.4516 82
0.4492 88
0.4508 99
0.4508 86
0.4492 93
0.5 107
0.4492 90
0.5 91
0.4516 92
0.4508 93
0.4516 83
0.4476 92
0.4492 96
0.4484 99
0.4508 95
0.4492 105
0.4476 83
0.4492 94
0.4484 93
0.4508 96
0.4508 94
0.4992 92
0.4508 90
0.4508 88
0.4508 91
0.4492 97
0.4516 93
0.4484 94
0.4508 102
0.4508 103
0.4508 97
0.4484 97
0.4484 95
0.5 100
0.4484 102
0.4508 101
0.4484 103
0.4484 96
0.4516 99
0.4484 101
0.4516 103
0.4516 101
0.4476 101
0.4492 100
0.4476 99
0.5 99
0.5 101
0.4508 100
0.4516 94
0.4476 94
0.4516 102
0.4476 102
0.4524 101
0.4484 100
0.45 98
0.4516 100
0.4476 100
0.4524 100
0.4468 101
0.4468 100
0.4532 100
0.4492 98
0.4508 98
0.446 100
0.454 100
0.4452 100
0.4532 101
0.4548 100
0.4444 100
0.4556 100
0.4992 100
0.5008 100
0.446 101
};
\end{axis}
\begin{axis}[
color=red,  label style={font=\Large},
separate axis lines,
  axis y line*=left,
  axis x line=none,
  ylabel=number of elections,
  ymin=-9, ymax=260,
  xmin=0, xmax=0.52,ylabel shift = -5pt,every axis 
plot/.append style={thick}
]
\addplot[red,mark=*] coordinates{
	(0.05, 54) (0.1, 47) (0.15, 68)
(0.2, 71) (0.25, 76) (0.3, 59) (0.35, 82)
(0.4, 89) (0.45, 230) (0.5, 14)
};
\end{axis}

\end{tikzpicture}}
				\caption{Borda}
				\label{fig:distborda}
			\end{subfigure}
			\caption{Blue and red solid lines show the number of elections
				(on the $y$-axis) with a given $50\%$-winner threshold (on the
				$x$-axis) for Plurality and Borda, respectively.  In the black
				scatter plots, each dot corresponds to an election. Its 
				$x$-coordinate 
				gives the $50\%$-winner threshold
				(perturbed, if many elections would overlap) and its 
				$y$-coordinate 
				gives the difference between the scores of the
				winner and the runner-up.}
			\label{fig:dist}
	\end{figure*}

	\noindent\textbf{Computations.}\quad
	For each election $E$ and candidate $c$, let $P_{E,c}(r)$ be the
	probability that $c$ wins---under a given voting rule---in an election
	chosen uniformly at random from $R(E,r)$.  Ideally, we would like to
	compute these values for all elections, candidates and swap distances,
	for Plurality and Borda.  However, since \textsc{\#Swap-Bribery} is
	$\sharpp$-hard for both our rules, instead of computing these values
	exactly, we resorted to sampling.
	Specifically, for each election (except those with tied winners) and
	each normalized swap distance $r \in \{0.05, 0.1, \ldots, 1\}$ we
	sampled $500$ elections at this distance and for each candidate
	recorded the proportion of elections where he or she won\footnote{By
		Hoeffding's inequality, the probability that the
		estimated winning probability for a given candidate deviates by more
		than $0.1$ from the true one can be upper bounded by $0.1\%$.} (see 
	Appendix B.2
	for the sampling procedure).
	For each election, we quantified the {robustness} of its winner by
	identifying the smallest swap distance $r$, among the considered ones,
	for which he or she has winning probability below $50\%$. We refer to
	this value as the \emph{$50\%$-winner threshold} (or, threshold, for
	short).
	
	\bigskip
	
	\noindent\textbf{Results.}\quad
	In the following, we present several findings from our experiments,
	each followed by supporting arguments.
	\begin{conclusion}\label{f1}
		The Borda winner of an election is usually more robust against random 
		swaps than the Plurality winner. 
	\end{conclusion}

	The solid blue line in \Cref{fig:distplur} and the solid red line in
	\Cref{fig:distborda} show how many elections from our dataset have
	particular 50\%-winner thresholds for Plurality and Borda,
	respectively.  While for Borda the threshold of $0.45$ occurs far more
	often than the other ones, for Plurality, the distribution is more
	uniform (with small spikes at $0.1$ and $0.45$).  So, Plurality
	elections are more likely to change results after relatively few swaps
	than the Borda ones. Two 
	explanations are that (a)~under Plurality there can be ``strong
	contenders'' who do not win, but who are often ranked close to the
	first place and, thus, can overtake the original winner after a few
	swaps, and (b)~the Plurality winner has the highest chance of losing
	points, as he or she is ranked first most frequently.
	Under Borda, the candidates usually have similar chances of both
	gaining and losing a point with a single swap.
	
	\begin{conclusion}\label{fin:2}
		The score difference between the winner and the runner-up (i.e., the
		candidate ranked in the second place) has a limitied predictive
		value for the $50\%$-winner threshold.
	\end{conclusion}
	
	Let us consider the black scatter plots in Figures~\ref{fig:distplur}
	(for Plurality) and~\ref{fig:distborda} (for Borda).  There, each
	election is represented as a dot, whose $x$-coordinate is the
	$50\%$-winner threshold (perturbed a bit if many elections were to
	take the same place) and whose $y$-coordinate is the score difference
	between the winner and the runner up. While there certainly is some
	correlation betwen these two values, the same score difference may
	lead to a wide range of $50\%$-winner thresholds (e.g., for Plurality
	a score difference of $10$ may lead to the threshold being anything
	between $0.1$ and $0.4$).
	
	From now on, we focus on Plurality, but most of our conclusions also
	apply to Borda (we do mention some differences though; for details, see
	Appendix B.3).
	In \Cref{fig:mapplur}, we show the map of elections, with colors
	corresponding to each election's $50\%$-winner threshold. The figure
	also includes six plots, each showing the values of $P_{E,c}(r)$ for
	four candidates in six selected elections (we discuss them later).
	
	\begin{figure*}[t]
		\begin{center}
                  \resizebox{\textwidth}{!}{\input{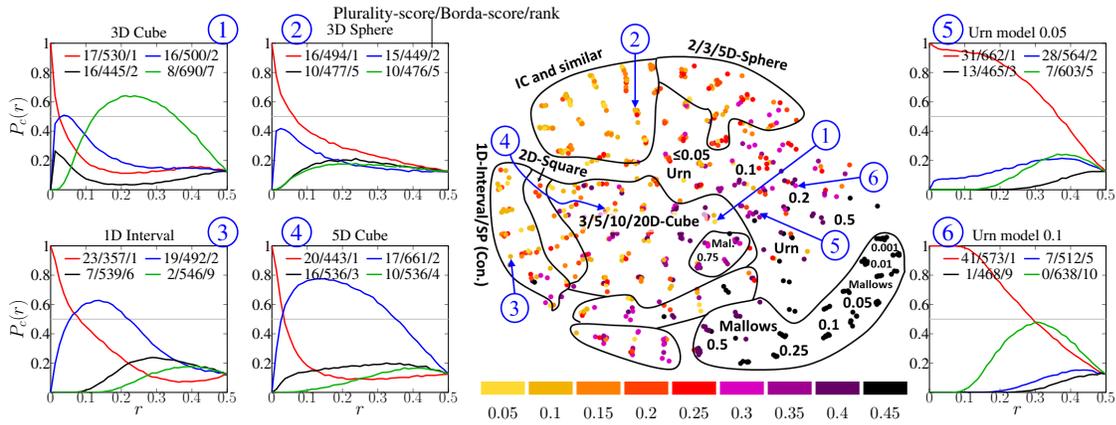}}
		\end{center}
		\caption{Map of elections visualizing the $50\%$-winner threshold
			for Plurality and six plots showing $P_{E,c}(r)$ as a
			functon of $r$, for six selected elections and the four most
			successful candidates in each (see the paragraph preceding
			\Cref{con:winners-winning} for
			details). }\label{fig:mapplur}
	\end{figure*}
	
	\begin{conclusion}\label{f3}
		Positions of elections 
		on the map correlate with their $50\%$-winner thresholds. Elections
		sampled from the same model tend to have similar thresholds.
	\end{conclusion}
	Consider the map in Figure~\ref{fig:mapplur}. As we move from top to
	bottom and from left to right, the $50\%$-winner threshold tends to
	increase. Not surprisingly, it is low for IC elections (as they are
	completely random, it is natural that few changes can affect the
	result) and it is high for Mallows elections with low $\phi$ (most
	preference orders in these elections are identical, up to a few
	swaps).
	For urn elections, the threshold tends to increase with parameter
	$\alpha$ (as the votes become less varied with larger $\alpha$).
	Interestingly, the threshold is somewhat more varied among $t$D-Cube
	elections (as compared to the other models), and one can notice that
	for 1D-Interval elections it tends to be slightly lower than for
	higher-dimensional $t$D-Cube ones (this effect is much stronger for
	Borda).  Yet, typically elections generated from a given model (with a
	given parameter) tend to have similar threshold values.
	
	For further insights, we turn to the six plots in
	\Cref{fig:mapplur}. 
	Each of them regards a particular election and four of its candidates.
	The candidates are marked with colors and the original winner is
	always red. 
	For each considered election $E$ and each candidate $c$ in the plot,
	we show $P_{E,c}(r)$ for values of $r$ between $0$ and $0.5$
	(specifically, for these six elections, we estimated $P_{E,c}(r)$ for
	$r \in \{0.0125, 0.025, \ldots, 0.5\}$ using 10'000 samples in each
	case).  We limited the range of relative swap distances
	because
	above $0.5$, the votes are
	becoming similar to the reverses of the original ones.
	For each of the candidates, in the legend we provide his or her
	Plurality score, Borda score, and the rank in the original election
	(we use the Borda scores in further discussions).
	For each of the six elections, we sorted the candidates with respect
	to $\max_{r \in \{0, 0.0125, \ldots, 0.5\}}P_{E,c}(r)$ and chose the
	top four to be included in the plot.
	The elections were chosen to show interesting phenomena (thus the
	patterns they illustrate are not always the most common ones, but are
	not outliers either).  The following discussion refines the
	observations from \Cref{fin:2}.

	\begin{conclusion}\label{con:winners-winning}
		Winners winning by a small margin are not necessarily close to
		losing. Winners winning by a large margin are robust but not
		necessarily very robust winners.
	\end{conclusion}
	
	In Elections 1 to 4, the winners are very sensitive to random
	swaps: The \second/ candidate already wins a considerable proportion
	of elections even if only a $0.0125$ fraction of possible swaps are
	applied (i.e., about half a swap per vote, on average), and the
	\first/ candidate quickly drops below $50\%$ winning probability.  It
	is quite surprising that so few random swaps may change the outcome
	with fairly high probability.
	There are also differences among these four elections. For example, in
	Elections~1 and~2 the candidates have similar scores, but in Election
	2 the \first/ candidate stays the most probable winner until swap
	distance $0.4$, whereas in Election~1, the most probable winner
	changes quite early.  The plots for Elections~3 and~4 are similar to
	that for Election~1, but come from tD-Cube elections of different
	dimension; this pattern appears in elections from other families of
	distributions too, but less commonly.
	
	In Elections~1 to 4, the original winner has at most four Plurality
	points of advantage over the next candidate, so one could argue that
	scores suffice to identify close elections.  Yet, in Election~5 the
	difference between the scores of the winner and the runner-up is $3$,
	but the \first/ candidate stays a winner with probability greater than
	$50\%$ until swap distance $0.35$. Thus, looking only at the scores can
	be misleading. Nonetheless, if the score difference is large (say,
	above $25$), the $50\%$-winner threshold is always above $0.2$ (see
	\Cref{fig:distplur}). But, as witnessed in Election 6, even in such
	seemingly clear elections, around $10\%$ of random swaps suffice to
	change the outcome with a non-negligible probability.
	
	\begin{conclusion}
		The score of a non-winning candidate has a limited predictive value
		for his or her probability of winning if some random swaps are
		performed.
	\end{conclusion}
	
	Perhaps surprisingly, in some elections the most probable winner at
	some (moderately low) swap distance is not necessarily ranked highly
	in the original election. For instance, in Election~1 the \fourth/
	candidate is originally ranked seventh, but becomes the most probable
	winner already around swap distance $0.1$. Here, this can be explained
	by the fact that he or she has a significantly higher Borda score
	than the other candidates. So, he or she is ranked highly in many
	votes and can reach the top positions with only a few swaps.
	Yet, not all patterns can be explained this way.
	For example, in Elections~1 and~2, the first two
	candidates have similar Plurality and Borda scores but still behave
	quite differently, even at small swap distances.
	
	\section{Conclusions}
	We have shown that the counting variants of \textsc{Swap-Bribery} have
	high worst-case complexity, but, nonetheless, are very useful for
	analyzing the robustness of elections winners. In particular, we have
	observed some interesting phenomena, including the fact that the
	scores of the candidates do not suffice to evaluate their strengths.
	Establishing the complexity of \textsc{Borda \#Swap-Bribery}
	parameterized by the swap radius remains as an intriguing open
	problem.
	
	\bigskip
	
	\noindent
	\textbf{Acknowledgments.}\quad 
	Niclas Boehmer was supported by the DFG project MaMu (NI 369/19).
	Piotr Faliszewski was supported by a Friedrich Wilhelm Bessel Award from 
	the Alexander von Humboldt Foundation.
	Work started while all authors were with TU~Berlin.

        \bibliography{bib}
	
	\appendix
	
	\section{Missing Proofs from Section~\ref{sec:complexity}}
	
	In this section, we provide missing details and proofs from
	Section~\ref{sec:complexity}.
	
	\subsection{Auxilary Algorithms}
	\label{app:vgc}
	In this section we provide a number of polynomial-time algorithms for
	solving problems of the following form: Given an election and a
	particular budget (or, number of swaps) compute the number of ways of
	performing exactly this many swaps so that the election has some given
	shape (e.g., all the voters ranks the same given candidate on top).
	We refer to such problems as \emph{voter group contribution counting
		problems}.
	
	\subsubsection{Swap Contribution for Plurality (Unit Prices)}
	Given an election~$E=(C,V)$, a budget~$r$, and a distinguished
	candidate~$p \in C$, $\vgcSwapPlu(V,r,p)$ denotes the number of
	possibilities to perform exactly~$r$ swaps so that~$p$ is the top
	choice of every voter within~$V$.
	
	\begin{lemma}\label{lem:vgcSwapPlu-P}
		One can compute $\vgcSwapPlu(V,r,p)$ in time $O(n\cdot m^4)$.
	\end{lemma}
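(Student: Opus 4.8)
The plan is to use that, once we decide which swaps happen in which vote, the votes change independently, so the count factors over the voters of $V$ as a convolution. For a single voter $v$, let $N_v(s)$ be the number of preference orders $v'$ in which $p$ is ranked first with $d_\swap(v,v')=s$. Then
\[
\vgcSwapPlu(V,r,p)=\sum_{\sum_v s_v=r}\ \prod_{v\in V}N_v(s_v)=[x^r]\prod_{v\in V}\Big(\sum_{s\ge 0}N_v(s)\,x^s\Big),
\]
so the whole task reduces to (a) determining the per-voter generating polynomials and (b) reading off a single coefficient of their product.

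First I would pin down $N_v$ via a decomposition of the swap (Kendall-tau) distance. If $p$ is first in $v'$, then every candidate ranked above $p$ in $v$ is now below it, contributing exactly $\pos_v(p)-1$ inverted pairs that involve $p$, whereas the relative order of the remaining candidates is unaffected by deleting $p$; hence
\[
d_\swap(v,v')=(\pos_v(p)-1)+d_\swap\big(v|_{C\setminus\{p\}},\,v'|_{C\setminus\{p\}}\big).
\]
Therefore $N_v(s)$ is the number of orders of the $m-1$ non-$p$ candidates at Kendall-tau distance $s-(\pos_v(p)-1)$ from $v|_{C\setminus\{p\}}$, i.e.\ the Mahonian inversion count, whose generating polynomial is the fixed $Q(x)=\prod_{i=1}^{m-1}(1+x+\cdots+x^{i-1})$. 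The decisive point is that $Q$ is independent of $v$: the per-voter polynomial is merely $x^{\pos_v(p)-1}Q(x)$. I would tabulate the coefficients of $Q$ once (the inversion numbers $I(m-1,d)$ for all $d\le\binom{m-1}{2}$) by folding in the factors $1+\cdots+x^{i-1}$ one at a time with prefix sums, in $O(m^3)$ time.

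Finally I would assemble the answer $[x^{r-S}]\,Q(x)^{n}$, where $n=|V|$ and $S=\sum_{v\in V}(\pos_v(p)-1)$ is the forced number of swaps, by a dynamic program over the voters: keep the running product (truncated at degree $r$, and supported only on the window $[S,\,S+n\binom{m-1}{2}]$ outside of which the count is $0$) and convolve in one more copy of $Q$ for each voter. Since $Q$ has support of size $O(m^2)$, combining the $n$ voters in this way and extracting the coefficient of $x^r$ can be organized to run within the $O(n\cdot m^4)$ bound claimed in the lemma. The step I would write most carefully—and the only genuine obstacle—is the distance-decomposition identity, since the reduction of every voter to the same polynomial $Q$ rests entirely on it; the tabulation of the inversion numbers and the convolution bookkeeping are then routine.
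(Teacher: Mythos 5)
Your proposal is correct and is essentially the paper's own proof: the paper's dynamic program $L[i,r']=\sum_{r''} Y(m-1,\,r''-r^*(v_i))\cdot L[i-1,\,r'-r'']$ is exactly your voter-by-voter convolution, where $Y(m-1,\cdot)$ are the Mahonian coefficients of your $Q(x)$ and $r^*(v_i)=\pos_{v_i}(p)-1$ is the forced shift $x^{\pos_{v_i}(p)-1}$. Your distance-decomposition identity is precisely the justification (stated only informally in the paper) for why the per-voter count reduces to inversion numbers of the remaining $m-1$ candidates, so the two arguments coincide step for step.
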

	\begin{proof}
		Let $C=\{p,c_2,\ldots,c_m\}$, $V=\{v_1,\ldots,v_n\}$, $r$, and $p$
		be given as described above.  We define the following dynamic
		programming table~$L$.  An entry $L[i,r']$ denotes the number of
		possibilities to perform exactly~$r'$ swaps within the first
		$i$~votes in~$V$ so that~$p$ is the top choice for these~$i$ voters.
		
		Let $r^*(v)$ 
		denote the number of swaps required to push
		candidate~$p$ to the top position in vote
		$v$.  We initialize the table via:
		\begin{align*}
		L[1,r']=\begin{cases}
		0 & \text{if $r'<r^*(v_1)$} \\
		Y(m-1,r'-r^*(v_1)) & \text{otherwise},
		\end{cases}
		\end{align*}
		where
		$Y(m,k)$~denotes the number of permutations of swap
		distance~$k$ from a given permutation with
		$m$~elements (algorithms for computing this value in polynomial time
		are well known).  We update the table with increasing~$i$ via
		\begin{align*}
		L[i,r'] = \!\!\!\!\!\!\!\! \sum_{r^*(v_i) \le r'' \le 
		r'}\!\!\!\!\!\!\!\! Y(m-1,r''-r^*(v_i)) \cdot L[i-1,r'-r'']. 
		\end{align*}
		Finally, $\vgcSwapPlu(V,r,p)=L[n,r]$ gives the solution.
		
		The initialization is correct, because we have to push
		candidate~$p$ to the top position at
		cost~$r^*(v_1)$.  This fixes the first position and the other
		$m-1$~positions can be freely rearranged.  Naturally, there are
		$Y(m-1,r'-r^*(v_1))$ possibilities to do this.  Similarly, in the
		update step we sum over all possibilities to distribute our
		$r'$~swaps among the first~$i-1$ votes and the
		$i$th vote (again, we need at least $r^*(v_i)$ swaps to push
		$p$~to the top).  In each case, the number of possibilities is the
		product of all possibilities to spend~$r''$ for
		voter~$i$ and $r'-r''$~swaps for the first $i-1$~voters.
		
		The table is of dimension~$O(n\cdot m^2)$.  Computing each single
		table entry can be done in time $O(m^2)$: Precomputing the table~$Y$
		with all entries takes $O(m^2)$ time (see also Appendix
		\ref{app:samp} where a recurrence is given) and with this being
		done, computing each single table entry of~$L$ takes $O(m^2)$ time
		since there are at most~$m^2$ values for~$r''$ and for each such
		value we have to do only a constant number of arithmetic operations.
	\end{proof}
	
	\subsubsection{Shift Contribution for Plurality (Arbitrary Prices Encoded 
	in 
		Unary)}
	At first, let us consider the constructive variant of 
	\textsc{Shift-Bribery},
	where we can shift forward the preferred candidate $p$.
	
	Let~$E=(C,V)$ be an election where every voter prefers the same
	candidate~$d \in C$.  Let $r$~be a budget, $p \in C$~be a
	distinguished candidate, and $s$~be an integer score value.  Moreover,
	we are given some cost
	function~$c \colon (V,\mathbb{N}) \rightarrow \mathbb{N}$ describing
	the costs~$c(v,\ell)$ of shifting~$p$ by~$\ell$ positions forward.  We
	define $\vgcShifPlu(V,r,p,s)$ as the number of possibilities to shift
	candidate~$p$ forward at total costs~$r$ within~$V$ so that $p$~is
	ranked at the top position exactly $s$~times.
	
	\begin{lemma}\label{lem:vgcShiftPlu-P}
		One can compute $\vgcShifPlu(V,r,p,s)$ in time $O(n^2\cdot r^2)$.
	\end{lemma}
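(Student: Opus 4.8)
The plan is to solve the problem by dynamic programming over the voters, maintaining as state the number of votes processed, the cost spent so far, and the number of those votes in which $p$ has already reached the top. First I would record, for each vote $v\in V$, the quantity $t_v$ equal to the number of forward shifts needed to bring $p$ to the first position (one less than $p$'s rank in $v$). Since every voter ranks $d$ on top (so $p$ is never already at the top), a shift of $p$ in $v$ is fully described by an amount $\ell\in\{0,1,\dots,t_v\}$ at cost $c(v,\ell)$, and $p$ ends up on top in $v$ exactly when $\ell=t_v$. A ``way'' is thus a choice of one such $\ell$ per vote, and two ways are distinct precisely when their shift vectors differ.

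Concretely, I would define $D[i,r',s']$ to be the number of ways to choose shift amounts for the first $i$ votes so that the total cost is exactly $r'$ and $p$ reaches the top in exactly $s'$ of them, with base case $D[0,0,0]=1$ (and $0$ otherwise). The only obstacle to a clean, efficient recurrence is the per-vote branching over the up-to-$m$ possible values of $\ell$; to avoid an $m$ factor I would group, for each vote $v_i$, the non-top shifts by their cost, precomputing $A_i[q]:=|\{\ell: 0\le\ell<t_{v_i},\ c(v_i,\ell)=q\}|$ for every $q\in\{0,\dots,r\}$ (shifts costing more than $r$ are discarded), together with the single top cost $b_i:=c(v_i,t_{v_i})$. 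The update then reads
\[
D[i,r',s'] \;=\; \sum_{q=0}^{r'} A_i[q]\,D[i-1,\,r'-q,\,s'] \;+\; [\,b_i\le r'\,]\cdot D[i-1,\,r'-b_i,\,s'-1],
\]
where the first term accounts for leaving $p$ below the top in vote $i$ and the second for pushing it all the way up (incrementing the top count). The desired value is $\vgcShifPlu(V,r,p,s)=D[n,r,s]$.

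For correctness I would argue by induction on $i$ that $D[i,r',s']$ counts exactly the intended shift vectors: each feasible choice for vote $i$ contributes a distinct summand, and both the cost and the top count are additive across votes, so summing over all cost splits and over whether vote $i$ is a top-vote yields precisely the recurrence. For the running time, the table has $O(n\cdot r\cdot n)=O(n^2r)$ entries, and evaluating the sum in the recurrence costs $O(r)$ per entry, for a total of $O(n^2r^2)$; precomputing all the $A_i$ and $b_i$ costs $O(nm)$ time, which is dominated by the main bound (note that, with prices encoded in unary, $r$ is polynomially bounded). The key step, and the one requiring care, is exactly this cost-grouping, which replaces the sum over shift amounts by a sum over the at-most-$r+1$ attainable costs and thereby turns the naive $O(n^2rm)$ bound into the claimed $O(n^2r^2)$.
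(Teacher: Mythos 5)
Your proposal is correct and takes essentially the same approach as the paper: a dynamic program over the voters whose state is (number of votes processed, cost spent, number of votes in which $p$ has reached the top), with an $O(r)$ sum over cost splits per entry, giving the table size $O(n^2 r)$ and total time $O(n^2 r^2)$. The only difference is in the per-vote transition, where you count multiplicities of shift amounts sharing the same cost (your $A_i[q]$), while the paper uses $0/1$ validity/success indicators of a spent cost---your variant handles price functions that are not strictly increasing slightly more carefully, but the decomposition and analysis are otherwise identical.
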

	\begin{proof}
		We assume that $V=\{v_1,\ldots,v_n\}$ and compute \vgcShifPlu~using
		standard dynamic programming.
		Let $L[j, s', r']$ be the number of ways to shift~$p$ forward in the
		first~$j$ votes in~$V$ at total cost of~$r'$, so that $p$~obtains
		$s'$~additional Plurality points.
		
		We introduce two auxilliary functions, \topSh{v}{x} and
		\valSh{v}{x}.  Function \valSh{v}{x} indicates whether spending
		cost~$x$ for voter~$v$ is valid (e.g., we cannot spend more than
		necessary to push~$p$ to the top position)
		and function \topSh{v}{x} indicates
		whether spending costs~$x$ for voter~$v$ is successful (i.e., pushes
		$p$~to the top position). Formally, we have:
		\begin{align*}
		\valSh{v}{x}&=\begin{cases}
		1 & \text{it is valid to spend cost~$x$ at voter~$v$,}\\
		0 & \text{otherwise.}
		\end{cases}\\
		\topSh{v}{x}&=\begin{cases}
		1 & \text{$p$ becomes top choice in~$v$ at cost~$x$,}\\
		0 & \text{otherwise.}
		\end{cases}
		\end{align*}
		We initialize our table with:
		$$L[1, s', r']=\begin{cases}
		\valSh{v_1}{r'} \cdot \topSh{v_1}{r'}      & s'=1\\
		\valSh{v_1}{r'} \cdot (1-\topSh{v_1}{r'})  & s'=0\\
		0                     & \text{otherwise.}
		\end{cases}\phantom{\text{123}}$$\\
		We update $L[j, s', r']$ for~$j>1$ via $L[j, s', r']=$
		\begin{align*}
		\sum_{r''\le r'} \big(
		L[j-1, s'-1, r'']\cdot \topSh{v_j}{r'-r''} \cdot \valSh{v_j}{r'-r''}\\ +
		L[j-1, s', r'']\cdot (1-\topSh{v_j}{r'-r''}) \cdot 
		\valSh{v_j}{r'-r''}\big).
		\end{align*}
		The table~$L$ is of size~$n^2\cdot r$.
		Computing a single table entry requires at most $2r$~table lookups
		and at most $2r$~arithmetic operations.
	\end{proof}
	
	Let us now consider the destructive case, where we can push a given
	candidate backward.  Consider an election~$E=(C,V)$ with two
	distinguished candidates, $p$ and~$d$, such that every voter either
	prefers~$p$ the most while ranking candidate~$d$ in the second
	position, or prefers~$d$ the most.  Let $r$~be the budget, and $s$~be
	an integer score value.  Moreover, we are given some cost
	function~$c \colon (V,\mathbb{N}) \rightarrow \mathbb{N}$ describing the
	costs~$c(v,\ell)$ of shifting~$p$ by~$\ell$ positions backward.  We
	define $\vgcDShifPlu(V,r,p,s)$ as the number of possibilities to shift
	candidate~$p$ backward at total costs~$r$ within~$V$ such that $p$~is
	ranked exactly $s$~times at the top position.
	
	\begin{lemma}\label{lem:vgcDShiftPlu-P}
		One can compute $\vgcDShifPlu(V,r,p,s)$ in time $O(n^2\cdot r^3)$.
	\end{lemma}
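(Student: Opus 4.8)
The plan is to mirror the dynamic program of \autoref{lem:vgcShiftPlu-P}, adapting it to backward shifts. First I would split $V$ into the voters of \emph{type~1} (those ranking $p$ first and $d$ second) and \emph{type~2} (those ranking $d$ first). The key observation is that shifting $p$ backward can only ever remove $p$ from the top: for a type-1 voter, leaving $p$ in place (a shift of $0$, at cost $0$) keeps $p$ on top, whereas any shift by $\ell \ge 1$ simultaneously drops $p$ and promotes the second-ranked candidate~$d$ to the first position; for a type-2 voter, $p$ is never on top, so every admissible backward shift leaves the top unchanged. Thus the number of times $p$ ends up ranked first equals the number of type-1 voters at which we perform the zero shift, and the gain of $d$ is then fully determined by $s$, so no separate score dimension for $d$ is needed.

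Next I would set up a table $L[j,s',r']$ recording the number of ways to shift $p$ backward within the first $j$ voters at total cost exactly $r'$ so that $p$ remains on top exactly $s'$ times. For the per-voter contributions I would precompute, for each voter $v$ and each cost $x \le r$, the number of admissible backward shifts of $p$ in $v$ costing exactly $x$, using the validity predicate $\valSh{v}{x}$ to rule out spending more than is needed to push $p$ to the last position; for type-1 voters this splits into the single ``keep $p$ on top'' option (cost $0$) and the options with $\ell \ge 1$ that remove $p$. The transition for a type-2 voter $v_j$ then convolves $L[j-1,s',\cdot]$ with this cost distribution over the budget, while the transition for a type-1 voter $v_j$ has two parts: inheriting from $L[j-1,s'-1,r']$ via the zero shift (which increments $p$'s top count) and convolving $L[j-1,s',\cdot]$ with the removal options (which leave $s'$ fixed but promote $d$). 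The answer is $L[n,s,r]$, and a short induction over $j$ certifies that every resulting election is counted exactly once.

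For the running time, the table has $O(n^2 r)$ entries, and assembling each entry requires convolving the precomputed per-voter cost distributions into the budget axis; carrying out these convolutions directly yields the claimed $O(n^2 r^3)$ bound. The extra factor of $r$ relative to the constructive case of \autoref{lem:vgcShiftPlu-P} reflects that here a single removal of $p$ corresponds to many distinct resulting votes---one for each landing position of $p$---so the per-voter contribution is a full distribution over costs rather than the $0/1$ indicators $\topSh{v}{x}$ and $\valSh{v}{x}$ used there. I expect the main obstacle to be exactly this bookkeeping of exact-once counting: one must ensure that per voter the ``keep $p$ on top'' option and the various ``remove $p$'' options are mutually exclusive and jointly exhaustive, and that the promotion of $d$ is handled consistently, so that distinct budget allocations never produce the same election while genuinely different landing positions of $p$ are always distinguished. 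Once this is pinned down, the complexity bound $\vgcDShifPlu(V,r,p,s) \in O(n^2 r^3)$ is routine.
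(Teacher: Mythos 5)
Your proposal is correct and follows essentially the same route as the paper: a dynamic program $L[j,s',r']$ over voters, top-count of $p$, and budget, with per-voter feasibility data convolved over budget splits, yielding the claimed time bound. The only differences are cosmetic---the paper folds your type-1/type-2 case distinction into 0/1 predicates $\topSh{v}{x}$ and $\valSh{v}{x}$ (so a valid cost is treated as a single option rather than a multiplicity over landing positions, which coincides with your count whenever the per-voter cost function is strictly increasing, e.g., for unit prices)---so the decomposition, transitions, and analysis match the paper's proof.
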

	\begin{proof}
		We assume that $V=\{v_1,\ldots,v_n\}$ and compute \vgcDShifPlu using
		yet again standard dynamic programming, defining our table~$L$ as
		follows.  An entry $L[j, s', r']$ contains the number of ways to
		shift~$p$ backward in the first~$j$ votes from~$V$ at total cost
		of~$r'$, so that $p$~ends up with exactly $s'$~points.
		
		As in the constructive case, we introduce two auxilliary functions,
		\topSh{v}{x} and \valSh{v}{x}.  Function \valSh{v}{x} indicates that
		spending cost~$x$ for voter~$v$ is valid
		and function \topSh{v}{x} indicates that spending cost~$x$ for
		voter~$v$ is successful (pushes $d$~to the top position). Formally:
		\begin{align*}
		\valSh{v}{x}&=\begin{cases}
		1 & \text{it is valid to spend cost~$x$ at voter~$v$}\\
		0 & \text{otherwise.}
		\end{cases} \\
		\topSh{v}{x}&=\begin{cases}
		1 & \text{$d$ becomes top choice in~$v$ at cost~$x$}\\
		0 & \text{otherwise.}
		\end{cases}
		\end{align*}
		We initialize our table with:
		$$L[1, s', r']=\begin{cases}
		\valSh{v_1}{r'} \cdot (1-\topSh{v_1}{r'}) & s'=1\\
		\valSh{v_1}{r'} \cdot \topSh{v_1}{r'}     & s'=0\\
		0                 & \text{otherwise.}
		\end{cases}\phantom{\text{12}}$$
		
		\noindent We update $L[j, s', r']$ for~$j>1$ via $L[j, s', r']=$
		\begin{align*}
		\textstyle \sum_{r''\le r'} \big(
		L[j-1, s', r''] \topSh{v_j}{r'-r''}  \valSh{v_j}{r'-r''}\\ +
		L[j-1, s'-1, r''] (1-\topSh{v_j}{r'-r''})  \valSh{v_j}{r'-r''}\big).
		\end{align*}
		
		The table~$L$ is of size~$n^2\cdot r$.
		Computing a single table entry requires at most $2r$~table lookups
		and at most $2r$~arithmetic operations.
	\end{proof}
	
	\subsection{Missing Details for the Proof of 
	Theorem~\ref{thm:plurality-swap-bribery-fpt-n}}
	\label{app:thm:plurality-swap-bribery-fpt-n}
	
	For the proof of Theorem~\ref{thm:plurality-swap-bribery-fpt-n}, we need to
	discuss how to compute the table~$T$.
	
	\begin{lemma}
		Table~$T$ can be computed in time $O(n\cdot m^7)$.
	\end{lemma}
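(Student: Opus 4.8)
The plan is to compute $T$ by a single dynamic program over the group index $i$, using the routine of \Cref{lem:vgcSwapPlu-P} as the local ingredient. Recall that $T[i,\ell,s]$ should count the ways of spending exactly $s$ swaps inside $V_1,\dots,V_i$ so that conditions (i)--(iii) hold and $c(V_i)=c_\ell$. The only quantity that depends on the internal structure of a group is $\vgcSwapPlu(V_i,s',c_\ell)$, the number of ways to use exactly $s'$ swaps so that every voter in $V_i$ ranks $c_\ell$ on top. By \Cref{lem:vgcSwapPlu-P}, one run of that dynamic program on $V_i$ with budget $r$ yields the whole array $\big(\vgcSwapPlu(V_i,s',c_\ell)\big)_{s'\le r}$ in $O(|V_i|\cdot m^4)$ time, so I would precompute these arrays for every group $V_i$ and every candidate $c_\ell$ once, up front.

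For the base case, condition (ii) forces $c(V_1)=p=c_1$, so I set $T[1,1,s]=\vgcSwapPlu(V_1,s,p)$ and $T[1,\ell,s]=0$ for $\ell>1$. For $i>1$ I split the budget between the first $i-1$ groups and group $V_i$, and I enforce the index-monotonicity of condition (iii) by letting the previous group's top candidate range only over indices $\ell'<\ell$:
\[
T[i,\ell,s] \;=\; \sum_{\ell'<\ell}\ \sum_{s'=0}^{s} T[i-1,\ell',s-s']\cdot \vgcSwapPlu(V_i,s',c_\ell).
\]
The answer to the global counting problem is $\sum_{\ell\le m}T[m',\ell,r]$. Correctness follows once one checks that each admissible assignment of distinct top candidates to the groups, together with each way of distributing the $r$ swaps, is counted exactly once: swaps in different groups are independent (so the product factorization is valid), condition (i) is precisely what $\vgcSwapPlu$ enforces within a group, condition (ii) is built into the base case, and condition (iii) is exactly the restriction $\ell'<\ell$.

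For the running time I would account for two parts. First, precomputing the local arrays costs $\sum_{i\le m'}\sum_{\ell\le m}O(|V_i|\,m^4)=O\!\big(m^5\sum_i|V_i|\big)=O(n\,m^5)$, since the groups partition the $n$ voters. Second, the table $T$ has $O(m'\cdot m\cdot r)$ cells, and each cell is evaluated by the double sum above in $O(m\cdot r)$ time. Using $m'\le\min(m,n)\le m$ and the same budget bound employed in \Cref{lem:vgcSwapPlu-P} (a single vote admits at most $\binom{m}{2}=O(m^2)$ swaps, so the relevant budget values are $O(m^2)$), the fill costs $O(m^4)\cdot O(m^3)=O(m^7)$. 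Adding the two parts gives $O(n\,m^5+m^7)\subseteq O(n\,m^7)$, as claimed.

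The main obstacle is not the arithmetic but pinning down the decomposition so that it is exact. One must verify that using the strict inequality $\ell'<\ell$ (rather than $\le$) both guarantees that different groups receive different top candidates and prevents any single solution from being generated twice, and that convolving over the split $s=s'+(s-s')$ neither double-counts nor drops configurations. Once this bookkeeping is established, invoking \Cref{lem:vgcSwapPlu-P} and summing the costs is routine, and it completes the proof of \Cref{thm:plurality-swap-bribery-fpt-n} by showing each of the boundedly-many global counting problems is solved in polynomial time.
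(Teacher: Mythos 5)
Your dynamic program is exactly the paper's: the same initialization, the same recurrence (the paper writes $T[i,\ell,r']=\sum_{\ell'<\ell,\,r''\le r'}T[i-1,\ell',r'']\cdot\vgcSwapPlu(V_i,r'-r'',c_\ell)$, which is your convolution after the substitution $s'=r'-r''$), the same final sum $\sum_{\ell\le m}T[m',\ell,r]$, and the same reliance on Lemma~\ref{lem:vgcSwapPlu-P}; your explicit $O(n\cdot m^5)$ precomputation of the local arrays matches the cost the paper assumes for the $\vgcSwapPlu$ functions, and your correctness discussion (independence of swaps across groups, condition (ii) in the base case, condition (iii) via $\ell'<\ell$) is the paper's as well.

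The genuine flaw is the budget-range claim in your time analysis. You assert that ``the relevant budget values are $O(m^2)$'' because a single vote admits at most $\binom{m}{2}$ swaps, but that bound governs the number of swaps allotted to \emph{one voter} (which is the role it plays inside Lemma~\ref{lem:vgcSwapPlu-P}), whereas the third index of $T$ records the \emph{total} number of swaps spent across all voters of $V_1,\ldots,V_i$. The input swap radius can be as large as $n\binom{m}{2}=\Theta(n\cdot m^2)$, and the lemma must in particular produce $T[m',\ell,r]$ for such $r$. Hence $T$ has $\Theta(m'\cdot m\cdot r)=O(n\cdot m^4)$ cells---this is precisely the paper's count of ``$m\cdot m\cdot nm^2$ entries''---and not $O(m^4)$; consequently your fill bound $O(m^7)$, and thus your total $O(n\,m^5+m^7)$, understates the work by a factor of up to $\min(n,m^2)$. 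The repair is exactly the paper's accounting: truncate $r$ at $n\binom{m}{2}$ (for larger radii the answer is zero), keep the per-cell cost $O(m^3)$ that both you and the paper use ($O(m)$ choices of $\ell'$ times $O(m^2)$ feasible within-group budgets; strictly speaking the latter is $O(|V_i|\cdot m^2)$, a looseness the paper's own proof shares), and conclude a fill cost of $O(n\,m^4)\cdot O(m^3)=O(n\cdot m^7)$, which together with the $O(n\cdot m^5)$ precomputation gives the claimed bound.
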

	\begin{proof}
		We initialize the table by setting $T[1,1,r']=\vgcSwapPlu(V_1,r',p)$
		and $T[1,\ell,r']=0$, $\forall \ell>1$.  The table is filled with
		increasing~$i$ by setting $T[i,\ell,r']$ to be:
		\[
		\sum_{\ell'<\ell, r''\le r'} (T[i-1,\ell',r''] \cdot
		\vgcSwapPlu(V_i,r'-r'',c_\ell)).
		\]
		
		Note that the initialization is correct by the definition
		of~$\vgcSwapPlu(V,r,c)$ and the fact that the first group must
		consistently vote for~$c_1=p$.  For updating the table, we sum up
		over all possibilities to split the swap budget~$r'$ between the
		$i$th voter group and first~$i-1$ voter groups, in combination with
		each possible candidate~$c_{\ell'}$ that may have been pushed to the
		top position by all voters of group~$i-1$.
		
		Computing table~$T$ requires filling in $m \cdot m \cdot nm^2$ entries, 
		and
		each entry takes $O(m^3)$ time (due to the number of terms in the
		sum in the update step). All in all, the algorithm takes time
		$O(n\cdot m^7)$, assuming that each of at most~$m$
		functions~$\vgcSwapPlu$ was computed in time $O(n\cdot m^4)$.
	\end{proof}
	
	\subsection{Proof of Theorem~\ref{thm:prices}}
	
	\thmprices*
	\begin{proof}
		In an instance of \textsc{\#Linear Extensions} we are given a set
		$Z = \{z_1, \ldots, z_m\}$ of items and a set
		$O \subseteq X \times X$ of constraints; we ask for the number of
		linear orders over $Z$ such that for each constraint $(x,y) \in O$,
		$x$ precedes~$y$. We reduce this problem to \textsc{Plurality 
		\#Swap-Bribery} with $0/1$ prices (i.e., each swap either has a unit 
		cost
		or is free) and budget $r := 0$ (if one preferred to avoid zero
		prices, then doing so would require only a few adaptations in the
		proof).
		
		Given an instance of \textsc{\#Linear Extensions}, as specified
		above, first we compute a single order $\pref$ that is consistent
		with the constraints (doing so is easy via standard topological
		sorting; if no such order exists, then we return zero and
		terminate).  Next, we form an election $E$ with candidate set
		$C = \{p\} \uplus Z$ and a single vote $v$, where $p$ is ranked
		first and all the other candidates are ranked below, in the order
		provided by $\pref$. We set the swap prices so that:
		\begin{enumerate}
			\item For each candidate $z \in Z$, the price for swapping him or
			her with~$p$ is one.
			\item For each pair $(x,y) \in O$, the price for swapping $x$
			and $y$ is one.
			\item All other prices are zero.
		\end{enumerate}
		We form an instance of \textsc{Plurality \#Swap-Bribery} with this
		election, prices, and budget $r := 0$. We make a single query
		regarding this instance and output the obtained value.
		
		To see that the reduction is correct, we notice that for every
		preference order that~$v$ may have after performing swaps of price
		zero, it holds that (a)~$p$ is ranked first (because swapping~$p$
		out of the first position has nonzero price) and (b)~for all pairs
		$(x,y) \in O$, $x$ is ranked ahead of $y$ (because swapping $x$ and
		$y$ has nonzero cost). On the contrary, for every linear order
		$\pref'$ that is consistent with $O$, it is possible to transform
		the preference order of $v$ so that $p$ is ranked first, followed by
		members of $Z$ in the order specified by $\pref'$ (for each two
		candidates $x,y \in Z$ such that $x \pref y$ but $y \pref' x$, the
		cost of swapping them is zero, and if we have not transformed our
		vote into the desired form yet, then there are always two such 
		candidates
		that are ranked consecutively).
	\end{proof}
	
	\subsection{Proof of Theorem~\ref{thm:plurality-shifts}} 
	\label{ap:plur-shift}
	
	\thmpluralityshifts*
	
	\begin{proof}[Proof for the constructive case]
		A core observation for our algorithm is that, under
		\textsc{\#Shift-Bribery}, every voter will either vote for~$p$ (if
		we shift~$p$ to the top position) or for its original top choice.
		This allows us to group the voters according to their top choices as
		follows.  Let $(V_0, V_1,V_2,\ldots,V_{m'})$, where $m'\le m-1$, be
		a partition of voters into groups so that (in the original election)
		every two voters within each group~$V_i$ share the same top choice,
		while every two voters from different groups have different top
		choices; additionally, we require that the voters in group $V_0$
		rank the distinguished candidate~$p$ on the top position.
		
		For each candidate $c$, let $\scoreof{c}$ be the original score
		of~$c$.  The idea of our algorithm is to count, for each possible
		final score~$\finScP \ge \scoreof{p}$ of~$p$, the number of ways to
		spend the given budget~$r$, so that $p$ obtains $\finScP$ points
		while no other candidate obtains more than~$\finScP$ points.  For
		each possible $\finScP$ we create one global dynamic programming
		table~$T_{\finScP}$.
		
		Our global tables are defined as follows.  An entry
		$T_{\finScP}[i, s', r']$ contains the number of ways to shift~$p$
		forward in the voter groups $V_1,\ldots,V_i$ at total cost of~$r'$,
		so that $p$~obtains $s'$~additional points while no top choice from
		any voter group $V_1,\ldots,V_i$ receives more than~$\finScP$
		points.
		
		To compute the values in table~$T$, we will use the following
		\emph{local counting problem}, maintaining the \emph{voter group
			contribution}: Given a voter group~$V'$, a budget~$r'$, a
		distinguished candidate~$p$, and a score~$s'$, compute the number of
		possibilities $\vgcShifPlu(V',r',p,s')$ to shift candidate~$p$ by in
		total~$r'$ positions within~$V'$, so that $p$~is ranked exactly
		$s'$~times at the top position.  This number can be computed in
		polynomial time (see Lemma~\ref{lem:vgcShiftPlu-P} in
		Appendix~\ref{app:vgc}).
		
		The intitialization of $T_{\finScP}$ is straight-forward, by setting:
		\[
		T_{\finScP}[1, s', r']=\vgcShifPlu(V_1,r',p,s')
		\]
		when $|V_1|-s'\le\finScP$, and by setting
		$T_{\finScP}[1, s', r']=0 $ when $|V_1|-s'>\finScP$ (the condition
		$|V_1|-s'\le\finScP$ is to ensure that the top-ranked candidate of
		the voters from group~$V_1$ obtains no more than~$\finScP$ points).
		We update the tables for~$i>1$ by setting $T_{\finScP}[i, s', r']$
		to be: 
		\begin{align*}
		\textstyle
		\sum_{s'' = |V_i|-\finScP}^{s'}\sum_{r'' \leq r'}
                  \big(& T_{\finScP}[i-1, s'-s'', r'-r'']
                  \cdot
		\vgcShifPlu(V_i,r'',p,s'')\big).
		\end{align*}
		(Again, the lower bound on~$s''$ ensures that the candidate the
		voters from group~$V_i$ vote for (if not~$p$) obtains no more
		than~$\finScP$ points.)  It is not hard to see that this indeed
		computes the values in the table correctly, without
		double-counting.
		
		Assuming \vgcShifPlu{} and global tables are computed correctly, it
		is not hard to verify that the overall solution is:
		$$\sum_{\finScP:=\scoreof{p}}^{\scoreof{p}+r} T_{\finScP}[m', 
		\finScP-\scoreof{p}, r].$$
		Indeed, between two different ``guesses'' of the final
		score~$\finScP$, double-counting is impossible.
	\end{proof}
	
	\begin{proof}[Proof for the destructive case]
		The main ideas behind the destructive case are very similar to those
		behind the constructive one.  The core
		observation 
		is that under destructive \textsc{Plurality \#Shift-Bribery} every
		voter will either vote for the distinguished candidate~$p$ (if $p$
		is the voter's top choice and is not shifted backward) or for some
		other candidate~$d\neq p$ (either if $p$~were not the voter's top
		choice but~$d$ were, or if $p$~were the top choice but was shifted
		backward so that~$d$, which was originally in the second position,
		was moved to the top).  In either case, we call candidate~$d$ the
		non-$p$ choice of voter~$v$.  This allows us to group all voters
		according to their non-$p$ choices as follows.  Let
		$(V_1,V_2,\ldots,V_{m'}), m'\le m$, be a partition of voters into
		groups~$V_i$ such that every two voters within the same group~$V_i$
		have the same non-$p$ choice while every two voters from different
		groups have different non-$p$ choices.
		
		Let $\scoreof{c}$ be the original score of candidate~$c$.  The idea
		of our algorithm is to count, for each possible final
		score~$\finScP \le \scoreof{p}$ of~$p$ the number of ways to spend
		the given budget~$r$ so that $p$ obtains $\finScP$ points while at
		least one other candidate obtains more than~$\finScP$ points.  For
		each possible $\finScP$ we create one global dynamic programming
		table~$T_{\finScP}$.
		
		Our global tables are defined as follows.  An entry
		$T_{\finScP}[i, s', r',\text{false}]$ contains the number of ways to
		shift~$p$ backward in the voter groups $V_1,\ldots,V_i$ at total
		cost of~$r'$, so that $p$~obtains $s'$~points from~$V_1,\ldots,V_i$
		while no non-$p$ choice of any voter group $V_1,\ldots,V_i$ receives
		more than~$\finScP$ points.  An entry
		$T_{\finScP}[i, s', r',\text{true}]$ contains the number of ways to
		shift~$p$ backward in the voter groups $V_1,\ldots,V_i$ at total
		cost of~$r'$, so that $p$~obtains $s'$~points from~$V_1,\ldots,V_i$
		while a non-$p$ choice of at least one voter group from
		$V_1,\ldots,V_i$ receives more than~$\finScP$ points.
		
		To compute the entries of the table~$T$, we will use the following
		\emph{local counting problem} maintaining the \emph{voter group
			contribution}: Given a voter group~$V'$, a budget~$r'$, a
		distinguished candidate~$p$, and a score~$s'$, compute the
		number~$\vgcDShifPlu(V',r',p,s')$ of possibilities to shift
		candidate~$p$ backward at total costs~$r'$ within~$V'$, so that
		$p$~is ranked exactly $s'$~times at the top position.  This number
		can be computed in polynomial time (see
		Lemma~\ref{lem:vgcDShiftPlu-P} in Appendix~\ref{app:vgc}).
		
		The intitialization of $T_{\finScP}$ is as follows:
		\begin{enumerate}
			\item If $|V_1|-s'\le\finScP$, then
			\[
			T_{\finScP}[1, s', r',\text{false}]=\vgcDShifPlu(V_1,r',p,s'),
			\]
			and otherwise $T_{\finScP}[1, s', r',\text{false}]=0$.
			\item If $|V_1|-s'>\finScP$ then
			\[
			T_{\finScP}[1, s', r',\text{true}]=\vgcDShifPlu(V_1,r',p,s'),
			\]
			and otherwise $T_{\finScP}[1, s', r',\text{true}]=0$. 
		\end{enumerate}
		We compute the table entries for~$i>1$ as follows.  We set
		$T_{\finScP}[i, s', r',\text{false}]$ to be
		\begin{align*}
		\textstyle
		\sum_{s'' = |V_i|-\finScP}^{s'}\sum_{r''\le r'} \big(
                  T_{\finScP}[i-1, s'-s'', r'-r'',\text{false}]
                  \cdot
		\vgcDShifPlu(V_i,r'',p,s'')\big).
		\end{align*}
		(The lower bound on~$s''$ ensures that the non-$p$ choice obtains no
		more than~$\finScP$ points.)  And we set
		$T_{\finScP}[i, s', r',\text{true}]$ to be:
		\begin{align*}
		\sum_{\substack{s''\le s', |V_i|-s''>\finScP}} \sum_{r''\le r'} (
                  T_{\finScP}[i-1, s'-s'', r'-r'',\text{false}]
		\cdot \vgcDShifPlu(V_i,r'',p,s'')) \\
		+ \sum_{\substack{s''\le s'}}\sum_{ r''\le r'} (
                  T_{\finScP}[i-1, s'-s'', r'-r'',\text{true}]
		\cdot \vgcDShifPlu(V_i,r'',p,s'')) \\
		\end{align*}
		The first two sums account for the case that the non-$p$ choice of
		group~$V_i$ is the first candidate non-$p$ candidate to obtain more
		than~\finScP{} points.  The second two sums account for the case
		where already some non-$p$ choice of some previous group obtained
		more than~\finScP{} points.  One can verify that this indeed
		computes the table correctly without double-counting.
		
		Assuming \vgcDShifPlu{} and global tables are computed correctly, it
		is not hard to verify the the overall solution is:
		$$\sum_{\finScP:=\scoreof{p}-r}^{\scoreof{p}} T_{\finScP}[m', \finScP, 
		r, \text{true}].$$
		Indeed, between two different ``guesses'' of the final score~$\finScP$,
		double-counting is impossible.
	\end{proof}
	
	\subsection{Proof of Theorem~\ref{thm:borda-shift-voters}} 
	\label{ap:borda-shift-vot}
	
	\thmbordashiftvoters*
	
	\begin{proof}
		For the constructive case, it suffices to follow the proof of
		\citet{bre-fal-nie-tal:c:shift-bribery-committees}.  For the
		destructive case, we give a Turing reduction from the
		\textsc{\#Multicolored Independent Set} problem, which is well-known
		to be $\sharpwone$-complete (indeed, \textsc{\#Independent Set} is
		equivalent to \textsc{\#Clique}, which is a canonical
		$\sharpwone$-complete problem; the multicolored variants of these
		problem remain $\sharpwone$-complete).
		
		Let $I = (G,h)$ be an instance of \textsc{\#Multicolored Independent
			Set}, where 
		$G = (V(G),E(G))$ is a graph where each vertex has one of~$h$
		colors; we ask for the number of size-$h$ independent sets (i.e.,
		sets of vertices such that no two vertices have a common edge) such
		that each vertex has a different color.
		Without loss of generality, we assume that there are no edges
		between vertices of the same color and that the number of vertices
		of each color 
		is the same, denoted by~$n$.
		For each color $\ell \in [h]$, let 
		$V(\ell) := \{v^\ell_1, \ldots, v^\ell_n\}$ denote the set of
		vertices with color~$\ell$. For each vertex $v^\ell_i \in V(G)$, let
		$E(v^\ell_i)$ denote the set of edges incident to this
		vertex.  Finally, let $\Delta := \max_{v \in V(G)}|E(v)|$ 
		be the highest degree of a vertex in~$G$.
		
		Our reduction proceeds as follows.  Let $r := h(n(\Delta+1)+\Delta)$
		be our shift radius.
		We form an election with the following candidates. First, we add a
		candidate~$d$, who will be the original winner of the election, and
		we treat sets $V(G)$ and~$E(G)$ as sets of vertex and edge candidates.
		For each vertex $v^\ell_i \in V(G)$, we form a set $F(v^\ell_i)$ of
		$\Delta - |E(v^\ell_i)|$ 
		\emph{fake-edge} candidates, so we will be able to pretend that all
		vertices have the same degree; we write $F(G)$ to denote the set of
		all fake-edge candidates.  Next, we form a \emph{blocker} candidate
		$b$ and a set $B$ of $r$ additional blocker candidates, whose
		purpose will be to limit the extent to which we can shift $d$ in
		particular votes. Finally, we let $X = \{x_1, \ldots, x_5\}$ be a
		set of five candidates that we will use
		to fine-tune the scores of the other candidates. Altogether, the
		candidate set is:
		\[ C := \{d\} \uplus V(G) \uplus E(G) \uplus F(G) \uplus B \uplus \{b\} 
		\uplus
		X. \]
		
		For each vertex $v^\ell_i$, by $H(v^\ell_i)$ we mean the
		(sub)preference order where $v^\ell_i$ is ranked on top and is
		followed by the candidates from $E(v^\ell_i) \uplus F(v^\ell_i)$ in
		some arbitrary order. We write $\overleftarrow{H(v^\ell_i)}$ to
		denote the corresponding reverse order. We form the following $4h+2$ 
		voters:
		\begin{enumerate}
			\item For each color $\ell \in [h]$, we introduce voters $e(\ell)$
			and $f(\ell)$ with preference orders:
			\begin{align*}
			\!\!\!
			e(\ell) \colon  b \pref d \pref H(v^\ell_1) \pref 
			\cdots \pref H(v^\ell_n) \pref B \pref X, \\
			f(\ell) \colon  b \pref d \pref \overleftarrow{H(v^\ell_n)} \pref 
			\cdots \pref \overleftarrow{H(v^\ell_1)} \pref B \pref X. 
			\end{align*}
			We also introduce voters $e'(\ell)$ and $f'(\ell)$, whose
			preference orders are obtained by reversing those of $e(\ell)$ 
			and~$f(\ell)$, respectively, and shifting the candidates from $X$ to
			the back (the exact order of the candidates from $X$ in the last
			five positions is irrelevant).
			
			\item 
			Let $\sigma$ be the following preference order:
			\begin{align*}
                          \sigma \colon  V(G) \pref x_1
                              \pref  E(G)  \pref F(G) \pref
                          x_2
                              \pref  x_3   \pref x_4 \pref 
			x_5 \pref B  \pref d  \pref b.
			\end{align*}
			We introduce four voters, $s_1, s_2, s_3, s_4$. Voter $s_1$ has
			preference order $\sigma$, except that members of $B$ are shifted
			ahead of $x_5$, and voter $s_2$ has the preference order obtained
			from~$\sigma$ by 
			(a)~shifting $x_1$ to the top, (b)~shifting $x_2$, $x_3$,
			and~$x_4$ ahead of the candidates from $E(G) \uplus F(G)$, and (c)
			shifting $d$ ahead of~$B$. Voters $s_3$ and $s_4$ have preference
			orders that are reverses of $\sigma$.
		\end{enumerate}
		Let $E$ be the just-constructed election.  If $s_1$ and $s_2$ had 
		preference
		order $\sigma$, then all candidates, except those in $X$, would
		have the same score (because for each voter there would be a
		matching one, with the same preference order but reversed, except
		that both voters might rank members of~$X$ on the bottom); let this
		score be~$L$. Due to the changes in $s_1$'s and $s_2$'s preference 
		orders:
		\begin{enumerate}
			\item 
			candidate $d$ has score $L+r$,
			\item 
			every vertex candidate has score $L-1$,
			\item 
			every edge and fake-edge candidate has score $L-3$,
			\item 
			every blocker candidate has score $L$, and
			\item 
			every candidate in $X$ has score much below $L-r$.
		\end{enumerate}
		
		Let~$I'$ be an instance of \textsc{Borda \#Destructive Shift-Bribery} 
		with election~$E$, designated candidate~$d$, and
		shift radius~$r$. Further, let~$M$ be the set of
		elections that can be obtained from~$E$ by shifting~$d$ back by~$r$
		positions in total, let~$f(I)$ be the number of solutions for~$I$
		(i.e., the number of multicolored independent sets of size~$h$
		in~$G$), and let~$g(I')$ be the number of solutions for~$I'$ (i.e.,
		the number of elections in $M$ where $d$ is not a winner).
		We claim that $f(I) = |M| - g(I')$.  In other words, we claim that
		each election where~$d$ wins and which can be obtained from~$E$ by
		shifting him or her back by~$r$ positions in total, corresponds to a
		unique multicolored independent set in~$G$. Since~$|M|$ can be
		computed in polynomial time using a simple dynamic program, showing
		that our claim holds will complete the proof.  First, in Step~1, we
		show that each solution for~$I$ corresponds to a unique election from 
		$M$ 
		where~$d$
		wins (and which is obtained by shifting $d$ by $r$ positions back),
		and then, in Step~2, we show that the reverse implication holds.
		
		\paragraph{Step 1.} 
		Let $S = \{v^1_{i_1}, \ldots, v^h_{i_h}\}$ be some multicolored
		independent set of $G$. We obtain a corresponding solution for~$I'$
		as follows: For each $\ell \in [h]$, we shift $d$ in $e(\ell)$ to be
		right in front of $v^\ell_{i_\ell+1}$ (or, to be right in front of
		the first blocker candidate, if $i_\ell = n$), and we shift $d$ in
		$f(\ell)$ to be right in front of $v^\ell_{i_\ell}$. Doing so
		requires $n(\Delta+1) + \Delta$ unit shifts for each $\ell \in [h]$,
		so , altogether, we make $r = h(n(\Delta+1)+\Delta)$ unit shifts.
		As a consequence, $d$ has score~$L$ and every other candidate has
		score at most~$L$. Indeed, $d$ passes each vertex candidate exactly
		once, and each edge and fake-edge candidate at most three times. The
		former is readily verifiable. The latter can be seen as follows: Fix
		a color $\ell \in [h]$ and consider voters $e(\ell)$ and $f(\ell)$.
		In their preference orders, $d$ passes each edge candidate incident
		to a vertex in $V(\ell) \setminus \{v^\ell_{i_\ell}\}$ exactly once
		(either in $e(\ell)$ or in $f(\ell)$), and $d$ passes each edge
		candidate incident to $v^\ell_{i_\ell}$ exactly twice (once in
		$e(\ell)$ and once in $f(\ell)$). Thus, each edge candidate is
		passed by $d$ at most three times (if neither of its endpoints is
		in~$S$, then it is passed twice, and if one of its endpoints is
		in~$S$, then it is passed three times; both of its endpoints cannot
		belong to~$S$ by definition of an independent set).  Similarly,
		$d$~passes each fake-edge candidate at most three times.
		Finally, $d$ never passes any of the blocker candidates. Thus, $d$~is
		a winner in the resulting election.
		
		\paragraph{Step 2.} For the other direction, consider an
		election~$E'$ obtained from $E$~by shifting~$d$ backward by
		$r$~positions in total, where $d$~still is a winner. We will show
		that $E'$ corresponds to a unique, size-$h$, multicolored
		independent set in~$G$.
		First, we recall that $d$ has score $L$ in $E'$ (this is so because
		he or she had score $L+r$ in $E$ and was shifted by $r$ positions
		backward). This means that to remain a winner, $d$ could not have
		passed any of the blocker candidates, because then some blocker
		candidate would have more than $L$
		points. 
		As a consequence, the only votes in which $d$ could have
		been shifted are $e(\ell)$ and $f(\ell)$, for each $\ell \in [h]$.
		Further, in each of these votes $d$ could have been shifted by at
		most $n(\Delta+1)$ positions. In fact, it must have been the case
		that for each $\ell \in [h]$, the total number of positions by which
		$d$ was shifted in $e(\ell)$ and $f(\ell)$ was exactly
		$r/h = n(\Delta + 1) + \Delta$. If this were not the case, then for
		some~$\ell$, candidate $d$ would have been shifted by more than 
		$n(\Delta + 
		1) + \Delta$ positions in total in $e(\ell)$ and
		$f(\ell)$ and, as a consequence, $d$ would have passed at least one
		vertex candidate from $V(\ell)$ twice. Such a vertex candidate would
		end up with score at least $L+1$ and $d$ would not have been a
		winner. To convince oneself that this is the case, let $y$ be some
		positive integer and consider vote~$e(\ell)$ with $d$ shifted by
		$n(\Delta+1)$ positions to the back (right in front of the first
		blocker candidate), and vote~$f(\ell)$ with~$d$ shifted
		by~$\Delta+y$ positions to the back. Initially $d$ passes $v^\ell_n$
		in both~$e(\ell)$ and~$f(\ell)$.  Now consider the process of
		repeatedly undoing a single unit shift in~$e(\ell)$ and performing a
		single additional unit shift in $f(\ell)$. At each point of this
		process, there is at least one vertex candidate in $V(\ell)$ such
		that $d$ is ranked behind this vertex in both
		votes. 
		
		Analogous reasoning shows that for each~$\ell \in [h]$, there is a
		number~$i_\ell \in [n]$ such that in~$e(\ell)$ candidate $d$ is
		shifted back by exactly $i_\ell(\Delta+1)$ positions, and
		in~$f(\ell)$ candidate~$d$ is shifted back by
		$(n-i_\ell)(\Delta+1) + \Delta$ positions (indeed, it suffices to
		repeat the reasoning from the end of the above paragraph for $y = 0$
		to see that these are the only numbers of shifts for which $d$ never
		passes any of the candidates from $V(\ell)$ twice).  Now we note
		that the set $S := \{ v^1_{i_1}, \ldots, v^h_{i_h}\}$ is a size-$h$,
		multicolored, independent set: The first two observations are
		immediate; for the latter, we note that---analogously to the
		reasoning in Step~1---if $S$ were not an independent set, then
		$d$ would pass some edge candidate four times, giving him or her score
		$L+1$, which would prevent $d$ from being a winner.
	\end{proof}
	
	\subsection{Proof of Theorem~\ref{thm:borda-shift-radius}}
	\thmbordashiftradius*
	
	\begin{proof}[Proof (constructive case)]
		We show how \textsc{Borda \#Constructive Shift-Bribery}
		parameterized by the radius~$r$ can be solved in $\fpt$ time using
		dynamic programming.  We start with the assumption of unit costs and
		later explain how to extend the dynamic program to work with
		arbitrary unary-encoded costs.
		
		First, observe that, given some budget~$r$ and unit costs, we know
		the final score~\finScP of candidate~$p$ after shifting it foward by
		$r$~position in total ($p$ gains one point with each position it is
		shifted forward).  Our problem becomes very easy when every other
		candidate already has score at most~\finScP (before shifting~$p$),
		because clearly no candidate other than~$p$ may gain a point.
		
		In general, since other candidates loose $r$~points in total, there
		may be up to $r$~\emph{critical candidates} that have score greater
		than~\finScP (before shifting~$p$).  Moreover, for each critical
		candidate~$c$ we can compute a \emph{demand value}~$d(c)$ which
		denotes the number of times $p$~must get shifted ahead of~$c$
		(equivalently, $d(c)$ is the original score of~$c$ minus~\finScP).
		
		Let the candidate set be $C = \{p,c_1, \ldots, c_m\}$ and, for the
		ease of presentation, assume that the candidates~$c_1, \ldots, c_m$
		are sorted by their demand values (with non-critical candidates
		having demand zero).  We define the \emph{initial demand
			vector}~$\vec{d_0}$ to be an $r$-dimensional vector of natural
		numbers where the $i$-th component~$\vec{d}[i]$ specifies how many
		times candidate~$p$ needs to pass candidate~$c_i$ to ensure that
		$c_i$ has at most score $\finScP$ (if $r$ is larger than the number
		of non-$p$ candidates, we pad the demand vector with zeros; for
		simplicity, in the further discussion we assume that $r$ is at most
		as larger as the number of non-$p$ candidates).  Thus we
		have~$\vec{d}_0=(d(c_1),\ldots,d(c_r))$.
		
		For each voter~$v$ and non-negative integer~$r'\le r$ we define the
		\emph{gain vector}
		as
		\[
		{\vec{g}(v,r')}=(g(c_1,v,r')\ldots,g(c_r,v,r')),
		\] where $g(c_i,v,r')=1$ if $p$~passes candidate~$c_i$ when $p$ is
		shifted forward at cost~$r'$ in vote~$v$.  If spending cost~$r'$ in
		vote~$v$ is impossible (e.g., because $p$ would already be pushed to
		the top position at a lower cost), then we set the gain vector
		to~$(-2r,\ldots,-2r)$.  This way, we later ensure that such
		``invalid actions'' are never counted.  Naturally, given some
		voter~$v$ and some non-negative integer~$r'\le r$, the
		vector~$\vec{d}'=\vec{d}_0-\vec{g}(v,r')$ describes the demand
		vector assuming that~$p$ was shifted forward by~$r'$ position in
		vote~$v$.  Note that there are at most~$(r+1)^{\min(r,m)}$ possible
		demand vectors.
		
		We are now ready to solve our problem via dynamic programming, using
		table~$T$ of $\fpt(r)$ size.  More precisely, let $T[i,r',\vec{d}]$
		denote the number of ways to shift~$p$ by~$r'$ positions in total,
		within the first~$i$ voters, and ending up with demand
		vector~$\vec{d}$.  The overall solution for our problem will be in
		the entry~$T[n,r,\vec{0}]$.  It remains to show how to compute the
		entries of~$T$.
		
		We do so with increasing~$i$, going over all combinations of~$r'$
		and $\vec{d}$.  Clearly, $T[1,\dots]$ has to be initialized mostly
		with zero entries since at most~$r$ different demand vectors can be
		realized.  More precisely, we have:
		$$T[1,r',\vec{d}]=\begin{cases}
		1 &\text{if } \vec{d}=\vec{d}_0 - \vec{g}(v_1,r')\\
		0 &\text{otherwise}.
		\end{cases}
		$$
		We fill-in the table for each~$i>1$ using formula:
		$$T[i,r',\vec{d}] = \sum_{\substack{r''\le r',\\\vec{d}'\le^1\vec{d}}} 
		T[i-1,r'-r'',\vec{d}'] \cdot {[\vec{d}=(\vec{d}'-g(v_i,r''))]},
		$$ where $\vec{d}'\le^1\vec{d}$ holds if vector~$\vec{d}'$ is 
		component-wise equal or smaller
		by at most one compared to~$\vec{d}$ (formally,
		$\vec{d}'\le^1\vec{d} \Leftrightarrow \forall j \in [r]: \vec{d}[j]-1
		\le \vec{d}'[j]\le\vec{d}[j]$), and where~$[[X]]$ is one if
		equation~$X$ holds and zero otherwise.  Note that this recurrence
		goes over all possible ways to distribute the budget~$r'$ among the
		first~$i-1$ voters and voter~$v_i$, while only considering demand
		vectors that can be reached with the respective budget for
		voter~$v_i$.
		
		The table size is upper-bounded by
		$n \cdot r \cdot (r+1)^{\min(r,m)}$.  Initializing a table entry
		works in $O(m)$ time (compute the gain vector and compare the demand
		vectors).  While updating the table, an entry can be computed in time
		$r \cdot 2^{\min(r,m)} \cdot O(m)$ because there are at most~$r$
		possibilities for~$r''$ and at most $2^{\min(r,m)}$~possibilities
		for~$\vec{d}'$.  Altogether, this means we can compute~$T$ and solve
		our problem in $\fpt$~time with respect to the shift radius~$r$.
		
		Finally, we explain how to extend the $\fpt$-algorithm to also work
		with arbitrary unarily encoded costs.  The crucial difference for
		non-unit costs is that we cannot compute the final score of~$p$ from
		our budget~$r$.  Instead, we guess (that is, go through all
		possibilities) the final score and then apply the algorithm described
		above with small modifications.  To ensure that~$p$ indeed ends up
		with the desired final score, we have to keep track of the score~$p$
		obtains.  This can easily be done by extending the demand (resp.\
		gain) vector by one more component that stores the number of
		times~$p$ has to pass (resp.\ passes) some candidate.
	\end{proof}
	
	\begin{proof}[Proof (destructive case)]
          The proof is provided in the main body of the paper.
        \end{proof}

	\section{Additional Material for \Cref{se:exp}}
	
	\subsection{Statistical Cultures}
	
	Here we briefly recall the four statistical cultures that we mention in
	our experimental studies:
	\begin{description}
		\item[Impartial Culture.] In the impartial culture model (IC), each 
		election consists of
		preference orders chosen uniformly at random.
		
		\item[The Urn Model.] In the urn model, with parameter $\alpha \in 
		\mathbb{R_+}$, to
		generate an election (with $m$ candidates), we start with an urn
		containing all $m!$ preference orders and generate the votes one by
		one, each time drawing the vote from the urn and then returning it
		there with $\alpha m!$ copies.
		
		\item[The Mallows Model.] In the Mallows model, with parameter $\phi 
		\in [0,1]$, each
		election has a central preference order $v^*$ (chosen uniformly at
		random) and the votes are sampled from a distribution where the
		probability of obtaining vote $v$ is proportional to
		$\phi^{d_{\text{swap}}(v^*,v)}$.
		
		\item[Euclidean Models.] In the $t$D-Cube and $t$D-Sphere models, the
		candidates and voters are points sampled uniformly at random from a
		$t$-dimensional hypercube/sphere, and the voters rank the candidates
		with respect to their distance (so a voter ranks the candidate whose
		point is closest to that of the voter first, then the next
		closest candidate, and so on).
	\end{description}
	
	Additionally, we also briefly describe the other models that are included in
	the dataset of~\citet{szu-fal-sko-sli-tal:c:map-of-elections}.
	\smallskip
	
	\noindent\textbf{Single-Peaked Elections.}\quad
	Let $C = \{c_1, \ldots, c_m\}$ be a set of candidates and let $\lhd$
	be a linear order over $C$. We will refer to $\lhd$ as the societal
	axis. We say that a voter $v$'s preference order is single-peaked with
	respect to the axis if for every $t \in [m]$ it holds that $v$'s $t$
	top-ranked candidates form an interval within $\lhd$. An election
	$E = (C,V)$ is single-peaked with respect to a given axis if each
	voter's preference order is single-peaked with respect to this axis;
	an election is single-peaked if it is single-peaked with respect to
	\emph{some} axis.
	
	The notion of single-peakedness is due to
	\citet{bla:b:polsci:committees-elections} and is intuitively
	understood as follows: The societal axis orders the candidates with
	respect to positions on some one-dimensional issue (e.g., it may be
	the level of taxation that the candidate support, or a position on the
	political left-to-right spectrum). Each voter cares only about the
	issue represented on the axis. So, each voter chooses his or her
	top-ranked candidate freely, but then the voter chooses the
	second-best one among the two candidates next to the favorite one on
	the axis, and so on.
	
	\citet{szu-fal-sko-sli-tal:c:map-of-elections} use the following two
	models for generating single-peaked elections (in both models, the
	societal axis is chosen uniformly at random and the votes are
	generated one-by-one, until a required number is produced):
	\begin{description}
		\item[Single-Peaked (Conitzer).] In the Conitzer model, we generate a
		vote as follows. First, we choose the top-ranked candidate uniformly
		at random.  Then, we perform $m-1$ iterations, extending the vote
		with one candidate in each iteration: With probability 
		$\nicefrac{1}{2}$ we
		extend the vote with the candidate ``to the left'' of the so-far
		ranked ones, and with probability $\nicefrac{1}{2}$ we extend it
		with the one ``to the right'' of the so-far ranked ones (if we ran
		out of the candidates on either side, then, naturally, we always
		choose the candidate from the other one).  This model was
		popularized by \citet{con:j:eliciting-singlepeaked} and, hence, its
		name.
		
		\item[Single-Peaked (Walsh).] In the Walsh model, we generate votes by
		choosing them uniformly at random from the set of all preference
		orders single-peaked with respect to a given axis. This model was
		popularized by \citet{wal:t:generate-sp}, who also provided a
		sampling algorithm.
	\end{description}
	\citet{szu-fal-sko-sli-tal:c:map-of-elections} give a detailed
	analysis explaining why these two models produce quite different
	elections (they also point out that 1D-Interval elections tend to be
	very similar to single-peaked elections from the Conitzer model).
	\smallskip
	
	\noindent\textbf{Elecitons Single-Peaked on a Circle (SPOC).}\quad
	\citet{pet-lac:j:spoc} extended the notion of single-peaked elections
	to single-peakedness on a circle. The model is very similar to the
	classic notion of single-peakedness, except that the axis is cyclic.
	Let $C$ be a set of $m$ candidates.  Voter $v$ has a preference order
	that is single-peaked on a circle with respect to the axis $\lhd$ if
	for every $t \in [m]$ it holds that the set of $t$ top-ranked
	candidates according to $v$ either forms an interval with respect to
	$\lhd$ or a complement of an interval.
	
	\citet{szu-fal-sko-sli-tal:c:map-of-elections} generate SPOC elections
	in the same way as single-peaked elections in Conitzer's model, except
	that the axis is cyclic (so one never ``runs out of candidates'' on
	one side). Such SPOC elections are quite similar to 2D-Hypersphere
	ones (and, as indicated by Szufa et al., also to IC
	elections).\smallskip
	
	\noindent\textbf{Single-Crossing Elections.}\quad
	Intuitively, an election is single-crossing if it is possible to order
	the voters so that for each two candidates $a$ and $b$, as we consider
	the voters in this order, the relative ranking of $a$ and $b$ changes
	at most once. Formally, single-crossing elections are defined as
	follows.
	
	Let $E = (C,V)$ be an election, where $C = \{c_1, \ldots, c_m\}$ and
	$V = (v_1, \ldots, v_n)$. This election is single-crossing with
	respect to its natural voter order if for each two candidates $c_i$,
	$c_j$ there is an integer $t_{i,j}$ such that the set
	$\{ \ell \mid v_\ell$ ranks $c_i$ above $c_j\}$ is either
	$\{1,2, \ldots, t_{i,j}\}$ or $\{t_{i,j}, \ldots, n-1, n\}$.  An
	election is single-crossing if it is possible to reorder its voters so
	that it becomes single-crossing with respect to the natural voter
	order.
	
	It is not clear how to generate single-crossing elections uniformly at
	random, or what a good procedure for generating single-crossing
	elections should be.  \citet{szu-fal-sko-sli-tal:c:map-of-elections}
	propose one procedure and we point the reader to their paper for the
	details.
	
	\subsection{Sampling Elections}\label{app:samp}
	In our experiments, to calculate $P_{E',c}(r)$ for different
	candidates $c$, elections $E'$, and swap distances $r$, we sampled
	elections at swap distance $r$ from $E'$ uniformly at
	random. Unfortunately, to achieve this, it is not enough to simply
	perform $r$ swaps in some of the votes in $E'$, as this procedure does
	not necessarily produce an election at distance $r$ and cannot be
	easily adapted to result in a uniform distribution. Thus, we use a
	sampling procedure that relies on counting the number of elections at
	some swap distance $r$.
	
	To compute this value, we employ dynamic programming using a table
	$T^m_E$: Each entry $T^m_E[n,r]$ contains the number of elections at
	swap distance $r$ from a given fixed election with $n$ voters and $m$
	candidates (note that it is irrelevant what this election is, so we
	can simply assume an election with $n$ identical votes). In the
	following, let $u(m)=\frac{m(m-1)}{2}$ denote the maximal number of
	swaps that can be performed in a vote.
	
	To compute $T^m_E$, we start by computing $T^m_E[1,r]$, that is, the
	number of votes at swap distance $r$ from a given single vote over $m$
	candidates. We compute this value using dynamic programming. As we
	will use this quantity separately in the following sampling algorithm,
	we create a separate table $T_V$ for it, where $T_V[m,r]$ contains the
	number of votes at swap distance $r$ from a given vote over $m$
	candidates. Note that $T_V[m,r]$ is simply the number of permutations
	over $m$ elements with $r$ inversions. Computing this value is a
	well-studied problem and we use the following procedure \citep{oeis}:
	We initialize the table with $T_V[m,0]=1$. We update the table by
	increasing $m$ and for each $m$ starting from $r=0$ and going to
	$r=u(m)$ using the following recursive relation:
	\[
	T_V[m,r]=T_V[m,r-1]+T_V[m-1,r]-T_V[m-1,r-m].
	\]
	Unfortunately, no closed form expression for this value seems to be
	known \citep{oeis}.
	
	Using $T_V$, we are now ready to compute $T^m_E$. We start by setting 
	$T^m_E[1,r]=T_V[m,r]$. Subsequently, we fill $T^m_E$ by increasing $n$ and 
	for 
	each $n$ starting from $r=0$ and going to $r=u(m)\cdot n$ using the 
	following 
	recursive relation:
	$$T^m_E[n,r]=\smashoperator{\sum_{i=\max(r-u(m)\cdot(n-1),0)}^{\min(r,u(m))}
	} 
	T_V[m,i]\cdot
	T^m_E[n-1,r-i].$$
	The reasoning behind this formula is that, in the $n$th vote, 
	between $\max(r-u(m)\cdot(n-1),0)$ and $\min(r,u(m))$ swaps can be 
	performed. 
	We iterate over all these possibilities and count, for each $i$ in this 
	interval, the number of elections where $i$ swaps in the $n$th vote and 
	$r-i$ 
	swaps in the remaining $n-1$ votes are performed.
	
	Using $T_E$ and $T_V$, we split the process of sampling elections at
	swap distance $r$ into two steps. First, we sample the distribution of
	swaps to votes proceeding recursively vote by vote. For the first
	vote, the probability that $i$ swaps are performed is proportional to
	the number of possibilities to perform $i$ swaps in the first vote times the
	number of elections at swap distance $r-i$ from the remaining
	$n-1$-voter election. This results in performing
	$i\in [\max(r-u(m)\cdot(n-1),0),\min(r,u(m))]$ swaps in the first vote
	with probability
	$$\frac{T_V[m,i]\cdot T^m_E[n-1,r-i]}{T^m_E[n,r]}.$$ We then delete this 
	vote 
	and solve the problem for 
	the remaining $n-1$ votes and $r-i$ swaps recursively. 
	
	Second, for each vote, we generate a vote at the assigned swap distance 
	$r$. To 
	do so, we generate a 
	permutation with $r$ inversions uniformly at random and sort the vote 
	according 
	to the permutation. A permutation $\sigma$ over $m$ elements is fully 
	characterized by the tuple $t(\sigma)=(t_1(\sigma),\dots,t_m(\sigma))$ 
	where, 
	for each 
	$i\in [m]$, $t_i(\sigma)$ denotes the number of entries smaller than 
	$i$ appearing 
	after $i$ in $\sigma$. Note that, for all $i\in [m]$, it needs to hold that 
	$0\leq t_i(\sigma)\leq m-i$ and that the number of inversions in a 
	permutation 
	$\sigma$ corresponds to the sum of the entries of $t(\sigma)$. Thus, the 
	problem 
	of uniformly sampling a permutation over $m$ elements with $r$ inversions 
	is 
	equivalent to uniformly distributing $r$ indistinguishable balls into $m-1$ 
	distinguishable bins with each bin $i\in [m-1]$ having capacity $i$. We 
	again proceed bin by bin: We put 
	$i\in [0,m-1]$ balls into the 
	first bin with probability $\frac{T_V[m-1,r-i]}{T_V[m,r]}$ and then solve 
	the 
	problem for the remaining $m-2$ bins and $r-i$ balls recursively. 
	
	\begin{figure*}[t]
		\begin{center}
			\resizebox{\textwidth}{!}{\input{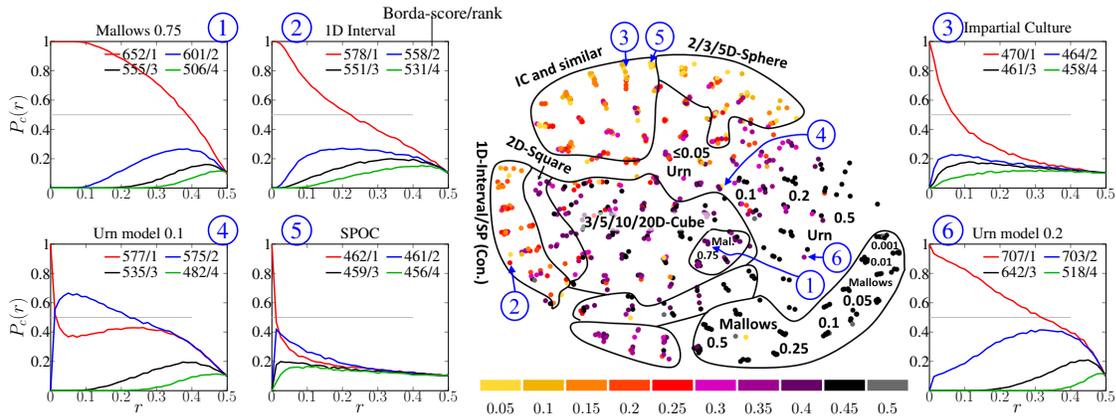}}
		\end{center}
		\caption{Map of elections visualizing the $50\%$-winner threshold
			for Borda and six plots showing $P_{E,c}(r)$ as a
			functon of $r$, for six selected elections and the four most
			successful candidates in each (see the paragraph preceding
			\Cref{con:winners-winning} in the main body for
			details).}\label{fig:mapbord}
	\end{figure*}
	
	\subsection{Additional Details for Experiments on Borda} \label{app:borda}
	
	Let us now discuss the results regarding Borda elections in more
	detail. In \Cref{fig:mapbord}, we display the results of our
	experiments for Borda on elections with $10$ candidates and $100$
	voters. As in \Cref{fig:mapplur}, the ``map of elections'' shows the
	$50\%$-winner threshold and the six plots show the probability of
	victory of four candidates in six selected elections (the
	interpretation of these plots is the same as in the case of Plurality
	and \Cref{fig:mapplur}).
	
	Comparing the map for Plurality (from \Cref{fig:mapplur}) and Borda
	(from \Cref{fig:mapbord}), we see that \Cref{f1} still holds: Borda
	winners tend to be more robust than winners under Plurality, as
	witnessed by the fact that for most elections the $50\%$-winner
	threshold for Borda is higher than for Plurality. This is particularly
	true for $t$D-Cube elections and Urn elections with high values of the
	contagion parameter. Moreover, as in \Cref{fig:mapplur}, there is a
	correlation between the position of an election on the map and the
	robustness of its winners (\Cref{f3}). Most of the other patterns
	described for Plurality can also be found in the map for Borda; some
	of them are even more pronounced, as, e.g., the difference between
	1D-Interval and multidimensional hypercube elections.
	
	Considering the robustness of election winners, it turns 
	out that the score difference between the winner and its runner-up
	has a higher predictive value for Borda than for 
	Plurality.   
	\begin{conclusion}
		Candidates who win by more than $50$ points are very robust
		winners. The winning probability of candidates who win by around
		$20$ points might decrease noticeably already at small swap
		distance. Nevertheless, these candidates typically stay the most
		probable winner for a long time.
	\end{conclusion}
	Let us consider the six elections from \Cref{fig:mapbord}.  Election~1
	has the largest score difference between the winner and the runner-up
	where at swap distance $0.1$ the \second/ candidate has
	some non-negligible probability of victory. Nevertheless, the \first/
	candidate is the clear winner of this elections, even after many
	random swaps. Note that this finding may be intuitively surprising, as
	$50$ points seems relatively little as compared to the totally awarded
	$4500$ points (yet, in a Borda election with $10$ candidates and $100$
	voters, the highest possible score is $900$ and compared to this
	value, $50$ is not completely negligible).
	
	In contrast to this, in Election 2, where the score difference between
	the winner and the runner-up is around $20$, the \second/ candidate
	starts to have a non-negligible chance of winning even at swap
	distance $0.025$. Nevertheless, the \first/ candidate stays the most
	probable winner until swap distance $0.5$ and wins with more than 50\%
	probability until swap distance $0.2$. Thus, there is little doubt
	that the \first/ candidate should be the winner of the election.
	
	\begin{conclusion}
		Candidates winning by less than $10$ points might be both quite
		robust or very sensitive to random swaps.
	\end{conclusion}
	As soon as the score difference between the first and second candidate
	drops to around $10$ points, elections with a similar distribution of
	scores start to show fundamentally different behavior. Only examining
	the scores, Elections 3 to 6 all seem close, as the maximum gap
	between the \first/ and \second/ candidates is at most six points in
	these elections. However, looking at the plots, there are significant
	differences. For Elections 4 and 5, the initial winner stops to be the
	most probable winner already at swap distance $0.0125$, which is the
	smallest swap distance we examined. In Election 5, all candidates lie
	close to each other. In contrast to this, in Election 4, the \second/
	candidate dominates the \first/ candidate at swap distance $0.0125$
	and above to an extent that one might wonder whether the \second/
	candidate is not the ``true'' winner of the election (even though the
	election was generated from the urn model and was not modified in any
	way).
	
	Elections~4 and~5 stand in sharp contrast to the other two ``close''
	elections, i.e., Elections~3 and~6. For example, Elections~3 and~5,
	which both have a similar distribution of scores and lie close on the
	map, exhibit quite a different behavior. In Election~3, the \first/
	candidate remains the most probable winner until swap distance $0.4$
	and in Election~5 the \first/ candidate stops being the most probably
	winner very early on. In Election~6, the \first/ candidate stays a
	winner with probability greater than $50\%$ even until swap distance
	$0.3$. To sum up, despite all seeming quite close, while for Elections
	4 and 5 it is recommendable to reexamine the election issue, for
	Election 3 and, in particular, for Election 6, the selected \first/
	candidate is quite a robust winner.
	
	\begin{figure*}
		\centering
		\begin{subfigure}[b]{0.48\textwidth}
			\centering            
			\resizebox{6cm}{!}{\begin{tikzpicture}
\begin{axis}[
    enlargelimits=false,xtick={0.05,0.1,0.15,0.2,0.25,0.3,0.35,0.4,0.45,0.5},
xticklabels={0.05,0.1,0.15,0.2,0.25,0.3,0.35,0.4,0.45,0.5},ytick={0,50,100,150,
200} , yticklabels={,50,100, 150,200},
	xlabel={50\%-winner (relative)},
	ylabel={number of elections},legend pos=north east,legend columns=2, 
legend style={draw=none,/tikz/column 2/.style={column sep=5pt,}, legend style 
= {font=\large,fill=none},  every axis plot/.append style={thick}}
]
\addplot [mark=, red] coordinates {
	(0.049999999999999996, 34) (0.09999999999999999, 38) (0.15, 61) 
(0.19999999999999998, 62) (0.24999999999999997, 64) (0.3, 70) (0.35, 69) 
(0.39999999999999997, 95) (0.44999999999999996, 149) (0.49999999999999994, 118)
};

\addplot [mark=, blue]  coordinates{
	(0.049999999999999996, 67) (0.09999999999999999, 126) (0.15, 92) 
(0.19999999999999998, 73) (0.24999999999999997, 67) (0.3, 67) (0.35, 51) 
(0.39999999999999997, 69) (0.44999999999999996, 123) (0.49999999999999994, 0)
};

\addplot [mark=, black] coordinates{
	(0.049999999999999996, 205) (0.09999999999999999, 157) (0.15, 69) 
(0.19999999999999998, 54) (0.24999999999999997, 41) (0.3, 37) (0.35, 82) 
(0.39999999999999997, 67) (0.44999999999999996, 0) (0.49999999999999994, 0)
};
\legend{n-100 m-5,n-100 m-10,n-100 m-20}
\end{axis}
\end{tikzpicture}}
			\caption{Plurality}
			\label{fig:dist50wplur}
		\end{subfigure}
		\hfill
		\begin{subfigure}[b]{0.48\textwidth}
			\centering \resizebox{6cm}{!}{\begin{tikzpicture}
\begin{axis}[
    enlargelimits=false,xtick={0.05,0.1,0.15,0.2,0.25,0.3,0.35,0.4,0.45,0.5},
xticklabels={0.05,0.1,0.15,0.2,0.25,0.3,0.35,0.4,0.45,0.5},ytick={0,50,100,150,
200} , yticklabels={,50,100, 150,200} ,
	xlabel={50\%-winner (relative)},
	ylabel={number of elections},legend pos=north west,legend columns=2, 
legend style={draw=none,/tikz/column 2/.style={column sep=5pt,}, legend style 
= {font=\large,fill=none},  every axis plot/.append style={thick}}
]
\addplot [mark=, red] coordinates {
	(0.049999999999999996, 15) (0.09999999999999999, 29) (0.15, 33) 
(0.19999999999999998, 42) (0.24999999999999997, 56) (0.3, 79) (0.35, 74) 
(0.39999999999999997, 119) (0.44999999999999996, 171) (0.49999999999999994, 172)
};

\addplot [mark=, blue]  coordinates{
(0.049999999999999996, 54) (0.09999999999999999, 47) (0.15, 68) 
(0.19999999999999998, 71) (0.24999999999999997, 76) (0.3, 59) (0.35, 82) 
(0.39999999999999997, 89) (0.44999999999999996, 230) (0.49999999999999994, 14)
};

\addplot [mark=, black] coordinates{
	(0.049999999999999996, 89) (0.09999999999999999, 96) (0.15, 61) 
(0.19999999999999998, 72) (0.24999999999999997, 67) (0.3, 56) (0.35, 89) 
(0.39999999999999997, 191) (0.44999999999999996, 68) (0.49999999999999994, 0)
};
\legend{n-100 m-5,n-100 m-10,n-100 m-20}
\end{axis}
\end{tikzpicture}}
			\caption{Borda}
			\label{fig:dist50wborda}
		\end{subfigure}
		\caption{Number of elections
			(on the $y$-axis) with a given $50\%$-winner threshold (on the
			$x$-axis) on the five considered datasets. The number of voters is 
			denoted by $n$ and the number of candidates by $m$}
		\label{fig:dist50w}
	\end{figure*}
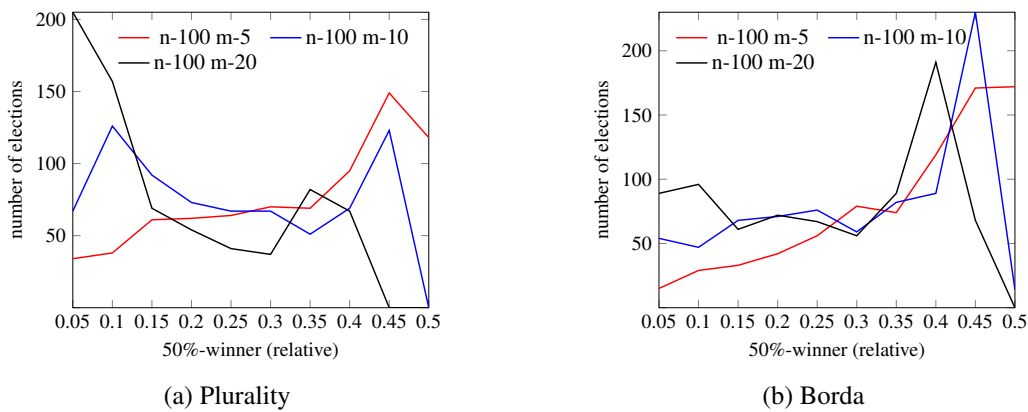
	
	\begin{figure*}[htp]
		\begin{center}
			\resizebox{\textwidth}{!}{\input{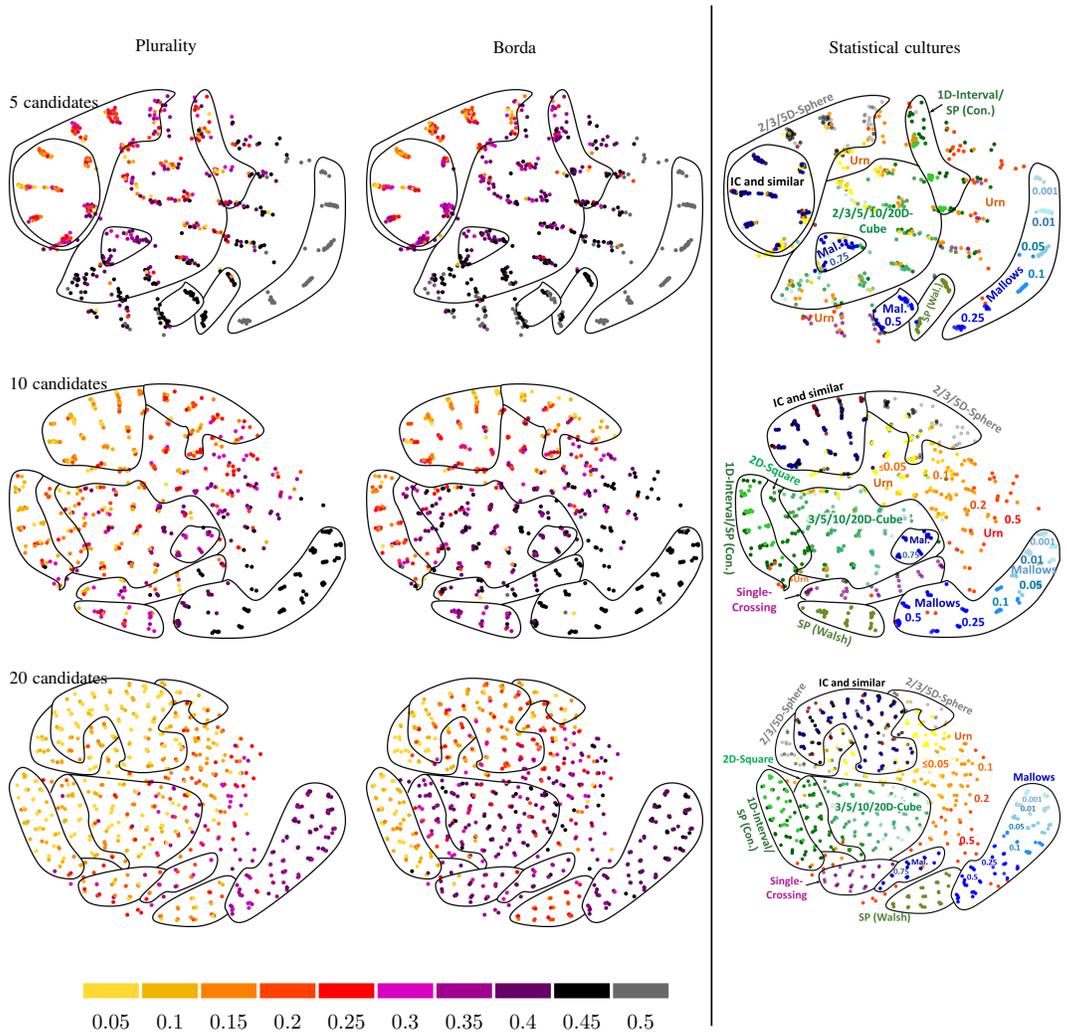}}
		\end{center}
		\caption{Maps showing the $50\%$-winner thresholds for
			Plurality and Borda, for the datasets with $5$, $10$, and
			$20$ candidates and $100$ voters.}\label{fig:mapapp}
	\end{figure*}  
	
	\subsection{Experiments on Other Datasets} \label{app:diff_data}
	Besides the experiments on the $800$-elections dataset with $100$
	voters and $10$ candidates from
	\citet{szu-fal-sko-sli-tal:c:map-of-elections} that we presented in
	the main body (and in the preceding section), we also conducted
	experiments on the datasets with $100$-voters and either $5$ or $20$
	candidates, also due to
	\citet{szu-fal-sko-sli-tal:c:map-of-elections}.  In the following, we
	briefly compare the results for the different datasets with respect to
	the $50\%$-winner threshold and analyze the influence of the number of
	candidates on it.  In \Cref{fig:dist50w}, we display the number of
	elections with a particular 50\%-winner threshold from the different
	datasets. In \Cref{fig:mapapp}, we show the map of elections
	visualizing the $50\%$-winner threshold for Plurality and Borda.
	
	\begin{conclusion}
		For Plurality, increasing the number of candidates makes winners
		significantly less robust. For Borda, the same effect is visible, but 
		is far 
		less 
		pronounced.
	\end{conclusion}
	Looking at the distribution of the $50\%$-winner threshold for
	Plurality, depicted in \Cref{fig:dist50wplur}, winners in the
	$5$-candidates dataset (red solid line) are the most robust
	ones. Increasing the number of candidates to $10$ (blue solid line) or
	$20$ (black solid line), the winners become less and less robust. In
	fact, for the case of $20$-candidates, winners in a majority of
	elections have a $50\%$-winner threshold of at most $0.15$. The same
	effect can also be seen when looking at the maps for Plurality from
	\Cref{fig:mapapp}. Here, as we examine the left column from top to
	bottom, we see that the dots become lighter and lighter.
	
	One possible explanation why the number of candidates has such a
	strong effect on the $50\%$-winner threshold for Plurality is that,
	for most models, the Plurality score of the winner is
	inverse-proportional to the number of candidates. For instance, with
	$5$ candidates and $100$ voters, the average candidate score is $20$,
	whereas for $20$ candidates and $100$ voters, the average score is
	around $5$. Thus, in elections with more candidates, the absolute
	differences between the scores of the winners and the other candidates
	are often much smaller (albeit this is not true for all
	distributions).  In effect, usually fewer swaps suffice to exchange
	the winner. As we consider normalized quantities, this effect is even
	intensified by the quadratic increase of the total number of swaps
	with respect to the number of candidates.  For Borda, the same effect (but 
	less 
	strong)
	is present.
	
	\bigskip
	Examining the maps from \Cref{fig:mapapp}, the patterns described in the 
	main 
	body for Plurality 
	are also present in the other maps. In particular, on all displayed
	maps, moving from the top to the bottom and from left to right,
	elections start to have more and more robust winners. However, the
	visibility of other patterns, especially of the similarity of
	elections from the same model, varies. For instance, for Plurality on
	the $20$-candidates dataset (left bottom) most elections have a
	non-robust winner, which makes it hard to distinguish between
	elections from different models. For $5$~candidates (top), on the
	other hand, the embedding does not separate elections from different
	models nicely, which makes it hard to draw any fine-grained
	conclusions. In contrast to this, for Borda and the $20$-candidates
	dataset, the separation of the different models is clearly visible and
	there are several patterns that can be found here that are only weakly
	visible on the other maps.  For instance, Walsh single-peaked
	elections have far less robust winners than elections from other
	models lying close to them on the map.

\end{document}